\let\originalleft\left
\let\originalright\right
\renewcommand{\left}{\mathopen{}\mathclose\bgroup\originalleft}
\renewcommand{\right}{\aftergroup\egroup\originalright}
\theoremstyle{plain}
\newtheorem{thm}{Theorem}[section]
\newtheorem{lem}[thm]{Lemma}
\newtheorem{cor}[thm]{Corollary}
\newtheorem{cnj}[thm]{Conjecture}
\newtheorem{fact}[thm]{Fact}
\crefname{fact}{Fact}{Facts}
\crefname{lem}{Lemma}{Lemmas}
\crefname{thm}{Theorem}{Theorems}
\theoremstyle{definition}
\newtheorem{dfn}[thm]{Definition}
\newtheorem{example}[thm]{Example}
\theoremstyle{remark}
\newtheorem*{remark*}{Remark}
\newcommand*{\bits}{\{0,1\}}
\newcommand*{\cart}{\mathbin\square}
\newcommand*{\Ex}{\mathbb{E}}
\newcommand*{\Ind}[1]{\mathbbm 1\!\{#1\}}
\newcommand*{\N}{\mathbb{N}}
\newcommand*{\R}{\mathbb{R}}
\newcommand*{\Rnng}{\R_{\ge0}}
\DeclarePairedDelimiter{\ceil}{\lceil}{\rceil}
\DeclarePairedDelimiter{\floor}{\lfloor}{\rfloor}
\newcommand*{\aczero}{$\mathrm{AC}^0$\xspace}
\newcommand*{\Bu}{\beta_{\mathrm o}}
\newcommand*{\colt}[2]{\mathrm{Sub}_{#2}(#1)}
\newcommand*{\const}{\mathrm{const}}
\newcommand*{\dst}{\Delta^*}
\newcommand*{\Du}{\Delta_{\mathrm o}}
\newcommand*{\ER}[2]{\mathbf{ER}\left(#1,#2\right)}
\newcommand*{\ham}[2]{K_{#1}^{#2}}
\newcommand*{\kcol}{\kappa}
\newcommand*{\kmarx}{\mathit{emb}}
\newcommand*{\mblowup}[2]{#1^{\left(#2\right)}}
\newcommand*{\nul}{\mathrm{null}}
\newcommand*{\rg}[3]{\mathbf{#1}_{#2, #3}}
\newcommand*{\seq}{\mathrm{Seq}}
\newcommand*{\subcol}[1][G]{\text{$#1$-$\mathsf{SUB}$}}
\newcommand*{\subuncol}[1][G]{\text{$#1$-$\mathsf{SUB}_{\mathrm{uncol}}$}}
\newcommand*{\tcl}{\theta}
\newcommand*{\tw}{\mathit{tw}}
\author{Gregory Rosenthal\thanks{Email: \href{mailto:rosenthal@cs.toronto.edu}{\color{black}\texttt{rosenthal@cs.toronto.edu}}. Supported by NSERC (PGS D).}\\University of Toronto}
\title{Beating Treewidth for Average-Case Subgraph Isomorphism}
\date{\vspace{-5ex}}
\begin{document}
\begin{NoHyper}
\maketitle
\end{NoHyper}
\begin{abstract}
	For any fixed graph $G$, the subgraph isomorphism problem asks whether an $n$-vertex input graph has a subgraph isomorphic to $G$. A well-known algorithm of Alon, Yuster and Zwick (1995) efficiently reduces this to the ``colored" version of the problem, denoted $\subcol$, and then solves $\subcol$ in time $O(n^{\tw(G)+1})$ where $\tw(G)$ is the treewidth of $G$. Marx (2010) conjectured that $\subcol$ requires time $\Omega(n^{\const\cdot\tw(G)})$ and, assuming the Exponential Time Hypothesis, proved a lower bound of $\Omega(n^{\const\cdot\kmarx(G)})$ for a certain graph parameter $\kmarx(G) \ge \Omega(\tw(G)/\log \tw(G))$. With respect to the size of \aczero circuits solving $\subcol$ in the average case, Li, Razborov and Rossman (2017) proved (unconditional) upper and lower bounds of $O(n^{2\kcol(G)+\const})$ and $\Omega(n^{\kcol(G)})$ for a different graph parameter $\kcol(G) \ge \Omega(\tw(G)/\log \tw(G))$.
	
	Our contributions are as follows. First, we prove that $\kmarx(G)$ is $O(\kcol(G))$ for all graphs $G$. Next, we show that $\kcol(G)$ can be asymptotically less than $\tw(G)$; for example, if $G$ is a hypercube then $\kcol(G)$ is $\Theta\left(\tw(G)\big/\sqrt{\log \tw(G)}\right)$. This implies that the average-case complexity of $\subcol$ is $n^{o(\tw(G))}$ when $G$ is a hypercube. Finally, we construct \aczero circuits of size $O(n^{\kcol(G)+\const})$ that solve $\subcol$ in the average case, closing the gap between the upper and lower bounds of Li et al.
\end{abstract}
\section{Introduction}\label{intro}

The subgraph isomorphism problem asks, given graphs $X$ and $G$, whether $X$ has a subgraph isomorphic to $G$. In the ``colored" or ``partitioned" version of the problem, each vertex of the larger graph $X$ comes with a ``color" from the vertex set of $G$, and we ask whether $X$ has a subgraph that is isomorphic to $G$ with respect to this coloring. We denote the uncolored and colored subgraph isomorphism problems by $\subuncol(X)$ and $\subcol(X)$ respectively.

Subgraph isomorphism is NP-complete (e.g.\ if $G$ is a clique or Hamiltonian cycle), so research has focused on algorithms for a variety of special cases in the context of parameterized complexity, surveyed in \cite{MP14}. If $G$ is a fixed graph on $k$ vertices then $\subuncol$ is solvable in time $O(n^k)$ by brute force, where (here and throughout this section) $n$ is the order of the input graph. The color-coding algorithm of Alon, Yuster and Zwick~\cite{AYZ95} improves on this by efficiently reducing $\subuncol$ to $\subcol$ and solving the latter in time $O(n^{\tw(G)+1})$, where $\tw(G)$ is the treewidth of the fixed graph $G$.

The exponent $\tw(G)+1$ can sometimes be improved using fast matrix multiplication~\cite{NP85,EG04}, but no significantly faster algorithm is known for either the colored or uncolored subgraph isomorphism problem. Marx~\cite{Mar10} conjectured the following:

\begin{cnj}\label{conj-marx}
	There is no class $\mathcal G$ of graphs with unbounded treewidth, no algorithm $\mathbb A$ that on inputs $G$ and $X$ solves $\subcol(X)$, and no function $f$ such that if $G$ is in $\mathcal G$ then $\mathbb A$ runs in time $f(G) n^{o(\tw(G))}$.
\end{cnj}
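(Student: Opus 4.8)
The plan is to establish \cref{conj-marx} by a reduction showing that any class $\mathcal G$, algorithm $\mathbb A$, and function $f$ of the stated form would refute the Exponential Time Hypothesis (ETH); the conjecture is phrased unconditionally, but an unconditional proof would entail super-polynomial circuit lower bounds, so ETH (or a similar hypothesis) is the natural starting point, exactly as in Marx's partial result. Concretely I would start from $3$-colorability of an $m$-edge graph $H$, which ETH forbids solving in time $2^{o(m)}$. The idea is to encode $H$ inside each large pattern $G \in \mathcal G$: fix a tree decomposition of $G$ of width $\tw(G)$, split the vertices of $H$ into roughly $\tw(G)$ groups, and build a host graph $X$ whose color class for a bag $B$ of the decomposition enumerates all consistent partial $3$-colorings of the groups assigned to $B$ --- so $|V(X)|$ is of order $2^{m/\tw(G)}$ --- with the edges of $X$ encoding the consistency constraints between overlapping bags. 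One arranges that a $G$-colorful subgraph of $X$ exists iff $H$ is $3$-colorable, so an $f(G)\,n^{o(\tw(G))}$-time algorithm for $\subcol$ would solve $3$-colorability in time $f(G)\,2^{o(m)}$, contradicting ETH.

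For the reduction to go through one needs, for each $t$, a pattern $G \in \mathcal G$ with $\tw(G) = t$ into which a ``generic'' CSP instance of treewidth $\Omega(t)$ --- equivalently the $t' \times t'$ grid with $t' = \Omega(t)$ --- embeds with only constant edge-congestion. This quantity is essentially Marx's parameter $\kmarx(G)$, and the construction above yields the lower bound $\Omega\!\left(n^{\const\cdot\kmarx(G)}\right)$ already quoted in the introduction. Thus \cref{conj-marx} reduces to the purely graph-theoretic assertion that some class of unbounded treewidth has $\kmarx(G) = \Omega(\tw(G))$. The candidates one would try are ``incompressible'' graphs --- random graphs of suitable density, or expanders --- and one would attempt to show that any constant-congestion embedding of a large grid into such a $G$ must consume a constant fraction of a well-linked set witnessing $\tw(G)$.

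The main obstacle is precisely this last step, and it is why the conjecture remains open. The best known bound is only $\kmarx(G) \ge \Omega(\tw(G)/\log\tw(G))$, which comes from combining grid-minor and well-linked-set theory with polylogarithmic approximation guarantees for vertex separators and for congestion-minimizing routing; removing the logarithmic factor would require either a sharper routing/separator argument or a genuinely new family of provably hard-to-embed patterns. Moreover, the results of this paper caution against the obvious candidates: since $\kmarx(G) = O(\kcol(G))$ and $\kcol(G)$ can be as small as $\Theta\!\left(\tw(G)/\sqrt{\log\tw(G)}\right)$ --- witnessed by hypercubes --- some of the most ``rigid-looking'' patterns in fact admit surprisingly efficient grid embeddings, so any proof of \cref{conj-marx} must steer well clear of them. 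A realistic intermediate goal, and the one pursued in the remainder of the paper, is therefore to pin down how large $\kmarx(G)$ (equivalently $\kcol(G)$) can be relative to $\tw(G)$, rather than to settle \cref{conj-marx} itself.
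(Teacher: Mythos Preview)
The statement you are asked to prove is a \emph{conjecture}, and the paper does not prove it. \Cref{conj-marx} is stated as an open problem attributed to Marx; the paper's contribution is in the opposite direction --- it shows that the natural \emph{average-case} analogue of \cref{conj-marx} is false (via the hypercube example and \cref{asdf-aczero,main}), and that the route through $\kmarx(G)$ cannot close the logarithmic gap (via \cref{secondary}). There is no ``paper's own proof'' to compare against.

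Your write-up tacitly acknowledges this: you sketch Marx's ETH-based reduction (which yields \cref{mar10main}, not \cref{conj-marx}), correctly identify the missing ingredient as the inequality $\kmarx(G) \ge \Omega(\tw(G))$ for some class of unbounded treewidth, and then concede that this is exactly the open problem. So what you have written is not a proof proposal but a survey of the obstruction --- accurate as such, but it does not establish the conjecture. One small correction: your framing suggests the paper ``pursues'' the goal of pinning down how large $\kmarx$ or $\kcol$ can be relative to $\tw$; in fact the paper's thrust is to show these parameters can be \emph{strictly smaller} than $\tw$, which is evidence that \cref{conj-marx}, if true, will require a genuinely different approach than bounding either of them.
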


Marx~\cite{Mar10} came close to proving \cref{conj-marx} assuming the Exponential Time Hypothesis (ETH)~\cite{IPZ01}, which is the hypothesis that solving 3SAT on $n$ variables requires $2^{\Omega(n)}$ time. We state his result in terms of a parameter $\kmarx(G)$ (short for ``embedding") which we will define in \cref{sec-3}:

\begin{thm}[{\cite{Mar10}}]\label{mar10main}
	Assuming ETH, there is no class $\mathcal G$ of graphs with unbounded treewidth, no algorithm $\mathbb A$ that on inputs $G$ and $X$ solves $\subcol(X)$, and no function $f$ such that if $G$ is in $\mathcal G$ then $\mathbb A$ runs in time $f(G) n^{o(\kmarx(G))}$.
\end{thm}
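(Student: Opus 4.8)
To prove \cref{mar10main} I would argue the contrapositive: assuming there exist a class $\mathcal G$ of unbounded treewidth, an algorithm $\mathbb A$ deciding $\subcol(X)$, and a function $f$ with $\mathbb A$ running in time $f(G)\,n^{o(\kmarx(G))}$, I would build a $2^{o(m)}$-time algorithm for $3$SAT on $m$ variables, contradicting ETH. The starting point is the elementary observation that deciding $\subcol(X)$ for a pattern $G$ is exactly a binary constraint satisfaction problem whose constraint (Gaifman) graph is $G$: the variables are the vertices of $G$, the domain of variable $u$ is the set of color-$u$ vertices of $X$, and each edge $uv$ of $G$ contributes the constraint that the chosen color-$u$ and color-$v$ vertices be adjacent in $X$. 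Thus $\mathbb A$ is a fast solver for binary CSPs whose constraint graph lies in $\mathcal G$, and what must be shown is that such CSPs already encode arbitrarily hard computations once $\kmarx(G)$ grows.

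The encoding is an embedding-based gadget reduction. I would fix an ETH-hard ``grid-structured'' intermediate problem --- for instance $k\times k$ \emph{grid tiling} with domain $[D]$, which under ETH requires time $D^{\Omega(k)}$ (this in turn is proved by reduction from $k$-clique, which requires $n^{\Omega(k)}$) --- and recall that its constraint graph is (a supergraph of) the $k\times k$ grid. By the definition of $\kmarx$ in \cref{sec-3}, when $\kmarx(G)$ is large there is an \emph{embedding} of such a grid into $G$: an assignment to each grid vertex of a connected ``branch set'' in $G$ with the branch sets of adjacent grid vertices touching, such that every vertex of $G$ lies in at most $q$ branch sets, where the grid side $k$ and the depth $q$ can be taken with $k/q \ge \Omega(\kmarx(G))$. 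Given a grid-tiling instance together with such an embedding, I build the $\subcol$ input $X$ by turning each vertex $g$ of $G$ into a gadget whose color-$g$ vertices each encode one grid-value for every branch set through $g$ --- at most $D^{q}$ choices --- and placing edges of $X$ so that (i) a color-$g$ vertex and a color-$g'$ vertex are adjacent only when their encodings agree on every branch set containing both $g$ and $g'$, and (ii) across a grid edge the values chosen for the two incident branch sets satisfy the corresponding grid-tiling constraint. A routine check shows that the grid-tiling instance is a yes-instance iff $X$ has a color-preserving copy of $G$, and that $|V(X)| \le |V(G)|\cdot D^{O(q)}$.

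Running $\mathbb A$ on $X$ then solves grid tiling in time $f(G)\,|V(G)|^{o(\kmarx(G))}\cdot D^{O(q)\cdot o(\kmarx(G))}$, and the usual parameter-slackening argument --- the one behind the $n^{\Omega(k)}$ lower bound for $k$-clique --- converts this into a contradiction with ETH. Given a $3$SAT instance $\varphi$ on $m$ variables I would let $\kmarx(G)$ grow slowly in $m$ by choosing $G\in\mathcal G$ of sufficiently large treewidth (possible since $\mathcal G$ has unbounded treewidth and $\kmarx(G)\ge\Omega(\tw(G)/\log\tw(G))$), encode $\varphi$ into a grid-tiling instance on a grid of side $\approx q\cdot\kmarx(G)$ --- the grid realizing the depth-$q$ embedding into $G$ --- with domain $D = 2^{\Theta(m/(q\,\kmarx(G)))}$, so that its ETH hardness $D^{\Omega(q\,\kmarx(G))} = 2^{\Omega(m)}$ matches that of $\varphi$. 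After the gadget reduction the domain becomes $D^{O(q)} = 2^{\Theta(m/\kmarx(G))}$, so $\mathbb A$ runs in time $f(G)\cdot|V(G)|^{o(\kmarx(G))}\cdot 2^{o(m)}$, which is $2^{o(m)}$ once $\kmarx(G)$ grows slowly enough to absorb the first two factors --- a contradiction. (As usual one assumes $\mathcal G$ is recursively enumerable and $f$ computable so that a witnessing $G$ and its embedding can be produced effectively from $\varphi$.)

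The genuinely hard ingredient is none of the above but the combinatorial fact behind the bound $\kmarx(G)\ge\Omega(\tw(G)/\log\tw(G))$: that every graph of treewidth $w$ admits an embedding of a clique (equivalently a grid) dense enough that the ratio of its size to the embedding depth is $\Omega(w/\log w)$. This single logarithmic loss is exactly where the gap to $\tw(G)$ comes from, and establishing it needs the excluded-grid / well-linked-set machinery on tree decompositions together with a careful recursive routing that keeps the congestion on each vertex of $G$ down to a logarithmic factor. By comparison the CSP reformulation, the gadget construction, and the exponent bookkeeping are conceptually routine, but the bookkeeping must still be done against the precise definition of $\kmarx$ in \cref{sec-3} --- in particular one must track that it is the \emph{edges} of $G$, not merely its vertices, that are consumed by the consistency and constraint gadgets, so that the blow-up is governed by the depth $q$ and not by $|V(G)|$.
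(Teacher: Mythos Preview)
This theorem is quoted from \cite{Mar10} and is not proved in the present paper, so there is no ``paper's own proof'' to compare against. Your sketch does follow the outline of Marx's original argument---contrapositive, the identification of $\subcol$ with binary CSP over constraint graph $G$, and a gadget reduction driven by a minor mapping into $\mblowup{G}{q}$---and your identification of the bound $\kmarx(G)\ge\Omega(\tw(G)/\log\tw(G))$ as the genuinely hard ingredient is correct.

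However, your exponent bookkeeping through grid tiling contains a real gap. By \cref{emb-def}, a graph $H$ with $m$ edges is only guaranteed to be a minor of $\mblowup{G}{\ceil{m/\kmarx(G)}}$. A $k\times k$ grid has $m=\Theta(k^2)$ edges, so the depth you are entitled to is $q=\Theta(k^2/\kmarx(G))$, not $q=\Theta(k/\kmarx(G))$ as you assume when you write ``$k/q\ge\Omega(\kmarx(G))$'' and ``a grid of side $\approx q\cdot\kmarx(G)$''. With the correct $q$, the gadget blow-up gives color classes of size $D^{O(q)}=D^{O(k^2/\kmarx(G))}$, and running $\mathbb A$ yields only $D^{o(k^2)}$ time---which does \emph{not} contradict the $D^{\Omega(k)}$ hardness of $k\times k$ grid tiling. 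The quadratic mismatch between the grid's edge count and its ETH-hardness exponent is exactly what kills the argument.

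Marx's actual reduction sidesteps this by not routing through grid tiling at all: he embeds the constraint graph of a \emph{sparse} source problem---e.g.\ $3$-coloring on bounded-degree graphs, which under ETH requires $2^{\Omega(m)}$ time on instances with $m$ edges---directly into $\mblowup{G}{q}$. There the hardness exponent matches the edge count, so $q=O(m/\kmarx(G))$, the domain is constant, color classes have size $2^{O(m/\kmarx(G))}$, and $\mathbb A$ runs in time $f(G)\cdot|V(G)|^{o(\kmarx(G))}\cdot 2^{o(m)}$, yielding the desired contradiction. Your sketch is salvageable by replacing the grid-tiling intermediate with such a sparse ETH-hard problem.
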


Marx~\cite{Mar10} proved that $\kmarx(G)$ is $\Omega(\tw(G)/\log \tw(G))$, so \cref{mar10main} comes within a logarithmic factor in the exponent of proving \cref{conj-marx} (under ETH). However, our results include a counterexample to an average-case analogue of \cref{conj-marx}, in a sense that will be made precise in \cref{sec-2a}. Moreover, this counterexample holds in \aczero, i.e.\ on unbounded-fanin boolean circuits of depth depending only on $G$.

Li, Razborov and Rossman~\cite{LRR17} proved that for fixed $G$, the average-case \aczero complexity of $\subcol$ is between $n^{\kcol(G)-o(1)}$ and $n^{2\kcol(G)+c}$, where $\kcol(G)$ is a graph property and $c$ is an absolute constant.\footnote{In \cite{LRR17}, the parameter $\kcol(G)$ was called $\kappa_{\mathrm{col}}(G)$.} (See \cref{sec-2a} for Li et al.'s definition of $\kcol(G)$; we also prove that $\kcol(G)$ can be equivalently defined in terms of the transition matrix of a certain random walk on $G$.) We tighten this gap, answering a question posed in \cite{LRR17}:

\begin{thm}\label{asdf-aczero}
	There is a constant $c>0$ such that for any fixed graph $G$, the average-case \aczero complexity of $\subcol$ is at most $n^{\kcol(G)+c}$.
\end{thm}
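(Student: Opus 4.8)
The plan is to retain the overall strategy of Li, Razborov and Rossman --- build a constant-depth circuit carrying out a dynamic program over a decomposition of the fixed pattern $G$, exploiting the fact that at the critical colored density the random input $\ER{n}{p}$ contains only $n^{o(1)}$ times the expected number of copies of any fixed subgraph of $G$ --- but to reorganize the decomposition and, above all, the accounting so that the exponent drops from $2\kcol(G)$ to $\kcol(G)+O(1)$. Concretely, I would fix an ordering $v_1,\dots,v_k$ of $V(G)$, let $G_i$ be the subgraph induced on $\{v_1,\dots,v_i\}$ and let $B_i\subseteq V(G_i)$ consist of the vertices having a neighbour outside $G_i$, and have the $i$-th block of the circuit compute, for each candidate assignment $t$ of $B_i$ to appropriately colored vertices of $X$, a bit indicating whether $t$ extends to a colored embedding of $G_i$ into $X$. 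Since $k=|V(G)|$ is a constant depending only on $G$, a circuit with $k$ such blocks, each of depth $O(1)$, still has depth $O_G(1)$; the output block checks whether any assignment of the empty set $B_k$ survives, and the total size is the sum over the $k$ blocks of the size of each block.

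The per-block size is where Li et al.\ pay a factor of two and where the improvement has to come from. Passing from the level-$(i-1)$ indicators to the level-$i$ indicators amounts, for each surviving $B_i$-assignment $t$, to an $\mathrm{OR}$ over surviving $B_{i-1}$-assignments $s$ that agree with $t$ on $B_{i-1}\cap B_i$ and are compatible with the edges incident to $v_i$; bounding the two lists separately by $n^{\kcol(G)+O(1)}$ yields a block of size $n^{2\kcol(G)+O(1)}$. The fix is to bucket both lists by their restriction to the shared boundary $B_{i-1}\cap B_i$ and to bound, at the critical density, the total work $\sum_{q}\bigl|\{t : t\text{ restricts to }q\}\bigr|\cdot\bigl|\{s : s\text{ restricts to }q\}\bigr|$ by a first-moment estimate: in expectation this sum equals the number of partial embeddings of a single explicit auxiliary graph built from $G_i$ (one copy of $G_{i-1}$ and one copy of $G_i$ glued along $B_{i-1}\cap B_i$), and one wants its critical-density expectation to be $n^{\kcol(G)+O(1)}$. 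So the whole construction reduces to choosing the ordering so that this auxiliary count is $n^{\kcol(G)+O(1)}$ for every $i$ --- and, for the circuit to genuinely have that size rather than merely carrying that many surviving tuples, to indexing a partial embedding not by a tuple of vertices of $X$ but by the (sparse) set of input edges it commits to together with a bounded amount of extra data, so that the universe being ranged over already has size $n^{\kcol(G)+O(1)}$.

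The device that makes this possible and ties it to the theorem is the new characterization of $\kcol(G)$ via the transition matrix of a random walk on $G$: I would show that the minimum over orderings of the maximum over $i$ of the auxiliary count above equals, up to an $n^{o(1)}$ factor, $n^{\kcol(G)}$, by identifying both with the optimum of a common linear program and then with a spectral/mixing quantity of the walk (morally a weighted count of the edges of $G$ governed by its stationary behaviour). This simultaneously supplies the ordering to feed to the circuit and certifies that the exponent is exactly $\kcol(G)+O(1)$, meeting the $n^{\kcol(G)-o(1)}$ average-case lower bound of Li et al.\ and closing their gap. It then remains to do the \aczero and probabilistic bookkeeping: depth is $O_G(1)$ as noted, and the circuit is correct on a $1-o(1)$ fraction of inputs because it can err only when some block's surviving set overflows its allotted universe of size $n^{\kcol(G)+O(1)}$, an event of probability $o(1)$ by Markov's inequality applied to the relevant first moment together with a Janson-type bound ruling out the upper tail.

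The main obstacle is the combinatorial core of the second paragraph: constructing the ordering and proving that for \emph{every} prefix the glued-graph count stays at $n^{\kcol(G)+O(1)}$ at the critical density --- equivalently, that the governing quantity is the single parameter $\kcol(G)$ rather than something growing with the number of levels --- and establishing the exact equivalence between this min--max over orderings and the random-walk definition of $\kcol(G)$. Once that is in place, the \aczero implementation (in particular pinning down the size-$n^{\kcol(G)+O(1)}$, edge-indexed representation of partial embeddings) and the average-case union bound should be comparatively routine.
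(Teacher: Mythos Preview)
Your plan diverges from the paper's in two essential places, and at least one of them is a genuine gap.

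\textbf{The decomposition.} The paper does not pick a vertex ordering with boundaries $B_i$; it follows an \emph{optimal union sequence} $S$ witnessing $\kappa_\Delta(G)=\max_{H\in S}\Delta(H)$ directly, so no auxiliary min--max over orderings has to be identified with $\kcol(G)$. Your proposal hinges on showing that for some ordering the ``glued graph'' count at every level is $n^{\kcol(G)+O(1)}$, and you suggest deriving this from a random-walk/LP characterization. But the Markov-chain equivalence the paper proves is only that every threshold weighting arises from some column-stochastic matrix supported on $E(G)$; it says nothing about vertex orderings or glued graphs, and does not obviously yield the statement you need. Worse, the glued object (two prefixes identified along $B_{i-1}\cap B_i$) is not a subgraph of $G$, so its expected count is not controlled by any $\Delta(H)$ with $H\subseteq G$, which is the only quantity $\kcol(G)$ bounds.

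\textbf{The $\mathrm{AC}^0$ join.} Even granting the combinatorics, your per-block computation---bucket the two lists by their common restriction and sum the product of bucket sizes---is exactly the sort--merge join. The paper explicitly observes that this works on Turing machines (and indeed gives the $\tilde O(n^{\kappa_\Delta(G)})$-time algorithm that way) but cannot be implemented in polynomial-size $\mathrm{AC}^0$ because sorting is hard for $\mathrm{AC}^0$. Your ``edge-indexed representation'' is too vague to substitute for a sorting-free join: the difficulty is not naming a universe of size $n^{\kcol(G)+O(1)}$ but \emph{matching} elements of two such universes on a shared key in constant depth. The paper's actual mechanism is quite different from anything in your sketch: it represents $\colt H X$ as a rooted tree of depth $v(H)$ whose fanout at level $i$ is $\tilde O(n^{\dst_H(\pi^1\cup\dots\cup\pi^{i+1})-\dst_H(\pi^1\cup\dots\cup\pi^i)})$, proves a strong concentration statement (``$\rg X \Delta n$ is good w.h.p.'') guaranteeing that these fanouts suffice, and then shows that in $\mathrm{AC}^0$ one can (i) \emph{re-sort} such a tree to any other vertex ordering and (ii) \emph{merge} two trees along a common prefix, both via random hashing into $\tilde O(\cdot)$-sized slots. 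These two lemmas replace sorting and are where the factor of two is recovered; nothing in your outline plays their role.

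A smaller point: you invoke a ``Janson-type bound ruling out the upper tail,'' but Janson's inequality controls the lower tail. The upper-tail concentration the paper needs (so that the tree slots suffice w.h.p., not merely in expectation) is obtained by a bespoke inductive argument using a closure operator $\Gamma_U(A)$ on subgraphs and Chernoff bounds, and is itself nontrivial; Markov alone would not survive the union bound over all nodes of all trees.
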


We observe that a similar result holds easily on Turing machines, using as a subroutine the \emph{sort-merge join} algorithm from relational algebra. This involves sorting, which cannot be done in (polynomial-size) \aczero~\cite{Has86}, so our circuit instead uses hashing that relies on concentration of measure for subgraphs of random graphs.

Li et al.~{\cite{LRR17}} also proved that $\kcol(G)$ is between $\Omega(\tw(G)/\log \tw(G))$ and $\tw(G)+1$, from which it follows that the \emph{worst}-case complexity of $\subcol$ is at least $n^{\Omega(\tw(G)/\log \tw(G))}$ in \aczero. Li et al.\ posed the question of whether $\kcol(G)$ is $\Theta(\tw(G))$; an affirmative answer would have implied that \cref{conj-marx} holds in \aczero.

However, the following example separates $\kcol$ from treewidth. The Hamming graph $\ham q d$ has vertex set $\{1,\dotsc,q\}^d$ and edges between every two vertices that differ in exactly one coordinate. It is already known that $\ham q d$ has treewidth $\Theta\left(q^d\big/\sqrt d\right)$~\cite{CK06}. We prove the following:
\begin{thm}\label{main}
	$\kcol\left(\ham q d\right)$ is $\Theta(q^d/d)$.
\end{thm}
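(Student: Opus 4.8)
The plan is to derive both bounds from the characterization of $\kcol$ via the transition matrix of the random walk on $G$ established earlier in the paper. The generic bounds $\Omega(\tw(G)/\log\tw(G)) \le \kcol(G) \le \tw(G)+1$, combined with $\tw(\ham q d) = \Theta(q^d/\sqrt d)$, are not enough: they miss the target by factors of $\sqrt d\log q$ and $\sqrt d$ respectively, so the random-walk characterization is needed for \emph{both} directions. Concretely, I will use the consequence of that characterization that, up to a constant factor, $\kcol(G) = \max_H v(H)\cdot\gamma(H)$, where $H$ ranges over subgraphs of $G$ and $\gamma(H)$ is the spectral gap of the natural random walk on $H$ (equivalently, by Cheeger's inequality, the sparsest-balanced-cut profile of subgraphs of $G$). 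The theorem then splits into: (i) $\gamma(\ham q d) = \Theta(1/d)$, which after multiplying by $v(\ham q d) = q^d$ gives $\kcol(\ham q d) \ge \Omega(q^d/d)$; and (ii) $v(H)\cdot\gamma(H) = O(q^d/d)$ for \emph{every} subgraph $H \subseteq \ham q d$, which gives the upper bound.

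For (i), I would use that $\ham q d$ is the $d$-fold Cartesian power of the complete graph $K_q$. The adjacency eigenvalues of $K_q$ are $q-1$ (multiplicity $1$) and $-1$ (multiplicity $q-1$), so by the Cartesian-product rule those of $\ham q d$ are $aq-d$ for $a \in \{0,\dots,d\}$ with multiplicity $\binom{d}{a}(q-1)^{d-a}$; the two largest are $d(q-1)$ and $dq-q-d$. As $\ham q d$ is $d(q-1)$-regular, its walk has spectral gap $\frac{d(q-1)-(dq-q-d)}{d(q-1)} = \frac{q}{d(q-1)}$, which lies between $1/d$ and $2/d$ for every $q \ge 2$; taking $H = \ham q d$ in the characterization yields $\kcol(\ham q d) \ge \Omega(q^d/d)$. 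This is the mechanism behind the separation from treewidth: $\tw(\ham q d)$ tracks the size of balanced \emph{vertex} separators, which is large ($\Theta(q^d/\sqrt d)$ by vertex-isoperimetry), whereas $\kcol(\ham q d)$ tracks \emph{edge} expansion, and $\ham q d$ has cuts of normalized sparsity as small as $\Theta(1/d)$.

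For (ii), the key point is that $\ham q d$ and all its subgraphs have abundant sparse cuts. For a coordinate $i \in \{1,\dots,d\}$ and threshold $t \in \{1,\dots,q-1\}$, let $C_{i,t}$ be the bipartition $(\{x : x_i \le t\},\,\{x : x_i > t\})$. An edge of $\ham q d$ whose endpoints differ only in coordinate $i$, with values $a \ne b$ there, crosses exactly the $|a-b|$ cuts $C_{i,t}$ with $\min\{a,b\} \le t < \max\{a,b\}$, hence crosses at most $q-1$ of the $d(q-1)$ coordinate cuts. Averaging shows that for any $H$ some $C_{i,t}$ is crossed by at most $e(H)/d$ edges of $H$; a more careful choice of $i$ and $t$ — sweeping each coordinate, and handling separately the subgraphs that are ``confined'' to a small set of values in every coordinate — should upgrade this to a cut that is additionally $\Omega(1)$-balanced for the degree measure of $H$. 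Then the $H$-conductance is $O(1/d)$, so $\gamma(H) = O(1/d)$ by Cheeger's inequality, and since $v(H) \le q^d$ we obtain $v(H)\cdot\gamma(H) = O(q^d/d)$. Alternatively, one can prove the upper bound constructively, peeling off one coordinate at a time via $\ham q d = \ham q{d-1}\cart K_q$ and verifying that the resulting recursion for the complexity has solution $\Theta(q^d/d)$.

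The step I expect to be the main obstacle is establishing (ii) — equivalently, the sparse-balanced-cut property — uniformly over \emph{all} subgraphs: a coordinate cut that is sparse for a particular $H$ need not balance the degree measure of $H$, while an arbitrary balanced cut of $H$ need not be sparse, and the ``confined'' case (where $H$ lives inside a sub-grid $K_{q_1}\cart\cdots\cart K_{q_d}$ with each $q_i < q$) is not automatically negligible when $q$ is large relative to $d$, so it seems to require either strengthening the claim to such mixed Hamming graphs or an induction on $\sum_i q_i$. A secondary point, which should come out of the computations above, is that every constant is uniform in $q$, so that $\kcol(\ham q d) = \Theta(q^d/d)$ with absolute implied constants.
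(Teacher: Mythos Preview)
Your lower bound argument is essentially the paper's: it applies \cref{kcol-cheeger} (which is \cref{kcol-expansion} plus Cheeger) together with the Cartesian-product eigenvalue computation to get $\kcol\left(\ham q d\right)\ge\Omega(q^d/d)$. That part is fine.

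The upper bound, however, rests on a premise that is not in the paper and is not correct as stated. The ``random-walk characterization'' established here is only that every threshold weighting arises from \emph{some} Markov chain on $G$ (\cref{mc-example} and \cref{markov}); it does \emph{not} say that $\kcol(G)=\Theta\big(\max_{H\subseteq G} v(H)\,\gamma(H)\big)$. The paper proves only the one-sided bound $\kcol(G)\ge v(G)h(G)/(3\max_u\deg(u))$ (\cref{kcol-expansion}), and there is no matching upper bound of the form $\kcol(G)\le O\big(\max_H v(H)\,\gamma(H)\big)$. Indeed, $\kcol$ is a minimax over \emph{all} threshold weightings and \emph{all} union sequences, and the threshold weighting achieving the maximum need not be $\Du$; showing that every subgraph of $\ham q d$ has a sparse balanced cut controls $\kappa_{\Du}$ but says nothing directly about $\kappa_\Delta$ for a worst-case $\Delta=(1,\beta)$. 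So your step (ii), even if you carried it out, would not bound $\kcol$.

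The paper's upper bound is obtained by a different mechanism: one fixes an arbitrary $\Delta=(1,\beta)\in\tcl(\ham q d)$ (using \cref{max-at-one}) and constructs a union sequence whose $\Delta$-cost is at most $2\mu$, where $\mu=\max_a \Du(G(a))$ for a suitable vertex ordering. The heart of the argument is an induction (\cref{tricky} for $q=2$, \cref{some-lemma} in general) that at each stage compares $\beta$ to the uniform $\Bu$ on a sub-cube $B$: if $\beta(B)\ge\Bu(B)$ one recurses inside $B$ and joins; if $\beta(B)<\Bu(B)$ one instead splits $B$ along a coordinate chosen by an averaging argument. The edge-isoperimetric bound on the lex order (\cref{mu}, \cref{some-other-lemma}) then gives $\mu=O(q^d/d)$. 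Your ``alternative'' of peeling one coordinate at a time is in the right spirit, but without the $\beta$-versus-$\Bu$ comparison you have no control over $\Delta(H)$ for the intermediate graphs when $\Delta\neq\Du$, and that comparison is precisely the missing idea.
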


Thus, if $G$ is the hypercube graph $\ham 2 d$ for example, then $\kcol(G)$ is $\Theta\left(\tw(G)\big/\sqrt{\log \tw(G)}\right)$. It follows that an average-case analogue of \cref{conj-marx} is false if $\mathcal G$ is taken to be the set of all hypercubes. We also prove the following (for arbitrary graphs $G$):
\begin{thm}\label{secondary}
	$\kmarx(G)$ is $O(\kcol(G))$.
\end{thm}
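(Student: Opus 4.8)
The plan is to prove \cref{secondary} by a ``pullback'' argument: I will take an embedding of a pattern graph into $G$ that witnesses a large value of $\kmarx(G)$ and convert it into a structure witnessing a comparably large value of $\kcol(G)$. Recall from \cref{sec-3} that $\kmarx(G)$ is defined via embeddings of a pattern graph $H$ into $G$ --- that is, maps $\psi$ sending each vertex of $H$ to a connected subset of $V(G)$ so that adjacent vertices of $H$ go to subsets that touch in $G$ and every vertex of $G$ lies in at most $c$ of the subsets --- the optimized quantity being (a complexity measure of $H$) divided by the congestion $c$. Up to a constant factor this complexity measure is at most $\kcol(H)$ (either by the definition in \cref{sec-3} or by a short argument, using that Marx's hard patterns are, up to constants, the worst graphs of their treewidth). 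So \cref{secondary} follows from the \emph{pullback lemma}: if $H$ embeds into $G$ with congestion $c$, then $\kcol(G) \ge \Omega(\kcol(H)/c)$.

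To prove the pullback lemma I would use the random-walk characterization of $\kcol$ from \cref{sec-2a}. Take the walk on $H$ witnessing $\kcol(H)$ and transport it along $\psi$ to a walk on $G$: each transition $u \to u'$ of the $H$-walk is simulated by a bounded-length excursion in $G$ that travels within the connected set $\psi(u)$ to a vertex touching $\psi(u')$ and then crosses into $\psi(u')$. The congestion bound --- at most $c$ of the sets $\psi(\cdot)$ meet any vertex of $G$ --- is exactly what limits how many simulated excursions can pile up at a single vertex of $G$, and hence caps the loss in the relevant walk parameter at a factor $O(c)$. One then has to check that the transported walk, after whatever normalization the characterization in \cref{sec-2a} demands (e.g.\ reversibility, or alignment with $G$'s edge structure), is a legitimate witness for $\kcol(G)$, and that its parameter is within a constant of $(1/c)$ times the one for $H$.

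The main obstacle is this quantitative bookkeeping inside the pullback: paying for the congestion as a clean factor $\Theta(c)$ rather than as $\Theta(c\log c)$ or $\Theta(c^2)$, and making sure no stray logarithmic factor appears when the pattern's complexity measure is traded for $\kcol(H)$. If the random-walk route turns out to be awkward, the alternative I would pursue is to work with Li et al.'s original combinatorial definition of $\kcol$ from \cref{sec-2a}: pull back an optimal structure for $H$ through $\psi$ to a feasible structure for $G$, and bound its objective using the congestion. In both formulations the congestion bound plays the same role --- it is the sole source of the constant factor separating $\kcol(G)$ from $\kcol(H)/c$ --- so the crux is to implement the pullback so that this factor is exactly $\Theta(c)$.
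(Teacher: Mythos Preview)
Your high-level strategy matches the paper's: pick a hard pattern $H$, use that it is a minor of $\mblowup{G}{q}$ with $q\approx e(H)/r$ for any $r<\kmarx(G)$, and deduce $r\le O(\kcol(G))$ from a ``pullback lemma'' of the form $\kcol(H)\le O(q\cdot\kcol(G))$. The paper's route to that lemma, however, is cleaner than what you propose. It factors the pullback as $\kcol(H)\le\kcol(\mblowup{G}{q})$ (minor monotonicity of $\kcol$, already known from Li et al., \cref{kcol-mm}) composed with $\kcol(\mblowup{G}{q})\le q\max(\kcol(G),2)$ (\cref{kcol-blowup}, new). The latter is proved by a direct min--max argument: given an optimal threshold weighting $\Delta$ on $\mblowup{G}{q}$, average it over the blowup fibres to a threshold weighting $\Delta'$ on $G$, then \emph{lift} an optimal union sequence for $(G,\Delta')$ to a union sequence for $(\mblowup{G}{q},\Delta)$ by replacing each $F$ with $\mblowup{F}{q}$ and interpolating within each $\mblowup{e}{q}$. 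Taking $H$ to be a $3$-regular expander (so that $e(H)=\Theta(\kcol(H))$ by \cref{kap-ex}) finishes the three-line proof.

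Your ``random-walk transport'' has a genuine gap. The Markov-chain correspondence in \cref{sec-2a} (see \cref{markov}) is only a bijection between threshold weightings and certain stochastic matrices; it does \emph{not} express $\kcol$ itself as a walk parameter amenable to a flow/congestion comparison. To lower-bound $\kcol(G)$ you must exhibit a threshold weighting on $G$ \emph{and} defeat every union sequence on $G$; transporting a walk from $H$ along $\psi$ at best addresses the first half. You never explain how an arbitrary union sequence on $G$ induces one on $H$, and through a minor map that may contract large connected pieces of $\mblowup{G}{c}$ this is the hard direction. The paper sidesteps it by flipping the inequality: it \emph{upper}-bounds $\kcol(\mblowup{G}{q})$ by exhibiting one good union sequence (a lift), which is far easier than pulling back all of them, and pushes the irregular minor-map part into the off-the-shelf \cref{kcol-mm}. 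If you want to salvage your approach, the concrete missing piece is a map from union sequences on $G$ (or $\mblowup{G}{c}$) to union sequences on $H$ with $\Delta_H$-cost controlled by $c$ times the $\Delta_G$-cost; absent that, the congestion bound on vertices does not by itself translate into a bound on $\kappa$.
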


Because of \cref{secondary}, even if our upper bound generalizes to the worst case, it is still consistent with current knowledge (in particular \cref{mar10main}) that ETH is true. Another consequence of \cref{secondary} is that the lower bound from \cref{mar10main} holds unconditionally in \aczero.

It follows from \cref{main,secondary} that if $G$ is a hypercube then $\kmarx(G) \le O(\kcol(G)) \le o(\tw(G))$, so proving that \cref{conj-marx} holds under ETH cannot be done by proving that $\kmarx(G)$ is $\Theta(\tw(G))$. In fact, this conclusion was already known: Alon and Marx~\cite{AM11} proved that if $G$ is a 3-regular expander then $\kmarx(G)$ is $\Theta(\tw(G)/\log \tw(G))$. Li et al.~\cite{LRR17} proved that if $G$ is a 3-regular expander then $\kcol(G)$ is $\Theta(\tw(G))$, which makes our separation of $\kcol$ from treewidth more surprising. On the other hand, we will see that \cref{secondary} is tight in the case of Hamming graphs.

We can make a similar statement regarding \aczero. Amano~\cite{Ama10} observed that the color-coding algorithm for $\subcol$ can be implemented by \aczero circuits of size $O(n^{\tw(G)+1})$ for fixed $G$. Our separation of $\kcol$ from treewidth implies that if \cref{conj-marx} holds in \aczero, then this cannot be proved using average-case complexity as defined here and in \cite{LRR17}.

The paper is organized as follows. In \cref{sec-2} we introduce some notation and definitions. In \cref{sec-2a} we define the average-case problem and $\kcol(G)$, and give an $\tilde O(n^{\kcol(G)})$-time algorithm for the average-case problem. In \cref{sec-3} we define $\kmarx(G)$ and prove that $\kmarx(G)$ is $O(\kcol(G))$. In \cref{sec-4} we prove that $\kcol\left(\ham q d\right)$ is $\Theta(q^d/d)$, and obtain as a corollary that $\kmarx\left(\ham q d\right)$ is $\Theta(q^d/d)$ as well. We also summarize the proof of Chandran and Kavitha~\cite{CK06} that $\tw\left(\ham q d\right)$ is $\Theta\left(q^d\big/\sqrt d\right)$. In \cref{sec-5} we prove our \aczero upper bound.

\section{Preliminaries}\label{sec-2}
It will be convenient to define $\tilde O(f(n)) = f(n)\log^{O(1)}n$. (This differs from the standard notation when $f(n) = n^{o(1)}$.) We will often fix a graph $G$, in which case the constants hidden in asymptotic notation are allowed to depend on $G$.

We use \textbf{boldface} to denote random variables. The indicator variable $\Ind E$ equals 1 if the event $E$ occurs and 0 otherwise. Expected value is denoted $\Ex[\cdot]$. An event occurs \emph{asymptotically almost surely (a.a.s.)}\ if it occurs with probability $1-o(1)$ as $n$ goes to infinity.

Let $[k] = \{1, \dotsc, k\}$ for $k\in\N$. If a positive real number $x$ is used in a context where a natural number is expected (for example $[x]$), it's because $x$ can be rounded arbitrarily to $\ceil x$ or $\floor x$ without affecting the asymptotic behavior of whatever is being considered.
\subsection{Graphs}\label{graphs-intro}

All graphs we consider are simple and undirected, and may have isolated vertices. If $G$ is a graph then let $V(G)$ and $E(G)$ denote its vertex and edge sets, with respective cardinalities $v(G)$ and $e(G)$. If $u$ and $v$ are adjacent vertices then we denote the edge connecting them by $uv$ or $vu$. A graph $H$ is a subgraph of $G$, denoted $H \subseteq G$, if $V(H) \subseteq V(G)$ and $E(H) \subseteq E(G)$.

\begin{dfn}[Colored subgraph isomorphism problem]
	For graphs $G$ and $X$, where $X$ comes with a coloring $\chi: V(X) \rightarrow V(G)$, the problem $\subcol(X)$ asks whether $X$ has a subgraph $G^\prime$ such that $\chi$ (restricted to $V(G^\prime)$) is an isomorphism from $G^\prime$ to $G$.
\end{dfn}	

For $U \subseteq V(G)$ let $G[U]$ be the induced subgraph of $G$ on $U$, and more generally let $G[U_1, \dotsc, U_k] = G[U_1 \cup \dotsb \cup U_k]$. Let $G-U = G[V(G)-U]$, and for $H\subseteq G$ let $G-H = G-V(H)$.

When the parent graph $G$ is clear in context, let $\deg(u)$ be the degree of a vertex $u$, and for disjoint $S,T\subseteq V(G)$ let $e(S,T)$ be the number of edges between $S$ and $T$. Similarly, for vertex-disjoint graphs $A,B\subseteq G$ let $e(A,B) = e(V(A), V(B))$.

Let $G \cap H$ be the graph with vertex set $V(G) \cap V(H)$ and edge set $E(G) \cap E(H)$, and define $G \cup H$ similarly. Note that $G\cap H$ may have isolated vertices even if $G$ and $H$ do not. If $A\subseteq B$ are graphs then let $[A,B] = \{H\mid A\subseteq H\subseteq B\}$, and let $(A,B]$ be the same interval without $A$, etc.

The Cartesian product of graphs $G$ and $H$, denoted $G \cart H$, has vertex set $V(G) \times V(H)$ and edges $(u,v_1)(u,v_2)$ for all $u\in V(G)$ and $v_1v_2\in E(H)$, and $(u_1,v)(u_2,v)$ for all $u_1u_2\in E(G)$ and $v\in V(H)$. Let $G^d$ be the Cartesian product of $d$ copies of $G$.

We denote by $K_k$ the complete graph on $k$ vertices, also called the $k$-clique. It follows that $\ham q d$ has vertex set $[q]^d$, and two vertices are adjacent if and only if they differ in exactly one coordinate. Such graphs are called \emph{Hamming graphs}. A special case is the $d$-dimensional hypercube $Q_d = \ham 2 d$; we will use $\bits^d$ for its vertex set.

\begin{dfn}[Graph minor]
	A graph $H$ is a \emph{minor} of a graph $G$ if there exists a \emph{minor mapping} $\phi$ assigning a connected component of $G$ to each vertex of $H$, such that $\phi(u)$ and $\phi(v)$ are vertex-disjoint for all $u\neq v$, and if $uv\in E(H)$ then there exists an edge in $G$ with endpoints in $\phi(u)$ and $\phi(v)$.
\end{dfn}

In particular, any subgraph of $G$ is also a minor of $G$ (e.g.\ let $\phi$ be the identity).

\begin{dfn}[Treewidth]
	A \emph{tree decomposition} of a graph $G$ is a tree $T$ whose vertices are subsets of $V(G)$ (called ``bags"), such that each vertex and edge of $G$ is contained in at least one of the bags, and for all $u \in V(G)$, the induced subgraph of $T$ on the bags that contain $u$ is a connected subtree of $T$. The \emph{width} of $T$ is one less than the size of the smallest bag, and the \emph{treewidth} of $G$, denoted $\tw(G)$, is the minimum width over all tree decompositions.
\end{dfn}

Roughly speaking, a graph has small treewidth if and only if it's ``similar to a tree". See e.g.\ \cite{Bod98, BK08} for further background, and \cite{HW17} for a survey of parameters that are polynomially tied to treewidth.

The edge expansion of a graph $G$ is defined as follows:
\begin{equation*}
h(G) = \min_{\emptyset\subset U\subset V(G)}\frac{e(U,V(G)-U)}{\min(|U|,|V(G)-U|)}.
\end{equation*}
A bounded-degree expander is a graph with edge expansion $\Omega(1)$ and maximum degree $O(1)$ (see \cite{HLW06} for a survey). Let $\lambda_i(G)$ be the $i$'th largest eigenvalue of the adjacency matrix of $G$. We will use the following half of Cheeger's Inequality:
\begin{fact}[\cite{AM85}]\label{cheeger}
	If $G$ is a $d$-regular graph then $h(G) \ge (d-\lambda_2(G))/2$.
\end{fact}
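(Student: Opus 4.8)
The plan is to prove the standard ``easy direction'' of Cheeger's inequality via the variational characterization of eigenvalues. Write $n = v(G)$, let $A$ be the adjacency matrix of $G$, and consider the Laplacian $L = dI - A$. Since $G$ is $d$-regular, the all-ones vector $\mathbf 1$ is an eigenvector of $A$ with eigenvalue $d$, hence an eigenvector of $L$ with eigenvalue $0$; more generally the eigenvalues of $L$ are exactly $d - \lambda_i(G)$, so the smallest is $0$ and the second smallest is $d - \lambda_2(G)$. By the Courant--Fischer min-max characterization, every vector $w$ orthogonal to $\mathbf 1$ satisfies $w^\top L w \ge (d - \lambda_2(G))\,\|w\|^2$. (If $G$ is disconnected then $d - \lambda_2(G) = 0$ and the claimed bound is trivial, so this step loses nothing in that case.)

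Next I would fix a cut $\emptyset \subsetneq U \subsetneq V(G)$ and evaluate the Laplacian quadratic form on its indicator vector $\mathbf 1_U$. Expanding $w^\top L w = \sum_{uv \in E(G)}(w_u - w_v)^2$ for $w = \mathbf 1_U$ gives $\mathbf 1_U^\top L \mathbf 1_U = e(U, V(G) - U)$, since an edge contributes $1$ precisely when it crosses the cut. Decomposing $\mathbf 1_U = \tfrac{|U|}{n}\mathbf 1 + w$ with $w \perp \mathbf 1$, and using $L\mathbf 1 = 0$, yields $e(U, V(G)-U) = w^\top L w \ge (d - \lambda_2(G))\,\|w\|^2$. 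A direct computation gives $\|w\|^2 = \|\mathbf 1_U\|^2 - |U|^2/n = |U|\,|V(G)-U|/n$.

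Combining, $e(U, V(G)-U) \ge (d - \lambda_2(G))\cdot |U|\,|V(G)-U|/n$. Assuming without loss of generality that $|U| \le |V(G)-U|$, we have $|V(G)-U| \ge n/2$, so that $|V(G)-U|/n \ge \tfrac12$ and hence $e(U,V(G)-U)/\min(|U|,|V(G)-U|) = e(U, V(G)-U)/|U| \ge (d - \lambda_2(G))/2$. Taking the minimum over all such $U$ gives $h(G) \ge (d - \lambda_2(G))/2$.

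There is no serious obstacle here — this is a textbook argument and could alternatively just be cited. The one point deserving a little care is the spectral input: that the Laplacian's eigenvalues are $d - \lambda_i(G)$ and that the Rayleigh-quotient lower bound applies on the orthogonal complement of $\mathbf 1$, both of which use $d$-regularity crucially (for an irregular graph one would pass to the normalized Laplacian and the bookkeeping changes). The degenerate cases ($d = 0$, or $G$ disconnected) all force the right-hand side to $0$ and are therefore automatic.
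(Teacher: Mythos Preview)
Your argument is correct and is the standard proof of this direction of Cheeger's inequality. The paper does not give its own proof; it simply states the result as a fact with a citation to \cite{AM85}, so there is nothing to compare against---your suggestion that one ``could alternatively just be cited'' is exactly what the paper does.
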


Finally, let $\ER n p$ be the Erd\H{o}s-R\'{e}nyi graph on $n$ vertices in which each possible edge exists independently with probability $p$.

\section{The Average-Case Problem and the Parameter \texorpdfstring{$\kcol(G)$}{kappa(G)}}\label{sec-2a}
\subsection{Threshold Random Graphs}\label{trg-intro}

First we will define \emph{threshold weightings}, which assign weights to the vertices and edges of a graph subject to certain constraints. Then we will define a family of random graphs for each threshold weighting. The content in this subsection is essentially all from \cite{LRR17}.

\begin{dfn}\label{tp}
	A threshold weighting on a graph $G$ is a pair $(\alpha,\beta) \in [0,1]^{V(G)} \times [0,2]^{E(G)}$ with the following property. For $H\subseteq G$ let $\alpha(H) = \sum_{u\in V(H)}\alpha(u)$ and $\beta(H) = \sum_{e\in E(H)}\beta(e)$, and let $\Delta(H) = \alpha(H)-\beta(H)$. Then, $\Delta(H)\ge0$ for all $H\subseteq G$, and $\Delta(G)=0$. Let $\tcl(G)$ be the set of threshold weightings on $G$.
	
	We will often denote $\Delta = (\alpha,\beta)$ in a slight abuse of notation. (Since $\Delta(u)=\alpha(u)$ if $u$ is a single vertex, the pair $(\alpha,\beta)$ is uniquely determined by $\Delta$.) The requirement that $\alpha$ be nonnegative is redundant because it's a special case of the requirement that $\Delta$ be nonnegative. The requirement that $\beta \le 2$ is also redundant because for every edge $uv$,
	\begin{equation*}
	0 \le \Delta(uv) = \alpha(u) + \alpha(v) - \beta(uv) \le 2 - \beta(uv).
	\end{equation*}
	It will sometimes be convenient to define $\beta(e)=0$ for $e\notin E(G)$, e.g.\ for disjoint sets $S, T \subseteq V(G)$ let $\beta(S, T) = \sum_{u\in S, v \in T}\beta(uv)$, and for vertex-disjoint $A, B \subseteq G$ let $\beta(A, B) = \beta(V(A), V(B))$.
\end{dfn}

\begin{example}[Markov Chains]\label{mc-example}
	Let $M \in \Rnng^{V(G)\times V(G)}$ be a column stochastic matrix (meaning each column sums to 1) such that if $M_{u,v} \neq 0$ then either $u = v$ or $uv \in E(G)$. Let $\alpha(u) = 1-M_{u,u}$ for all $u$, and $\beta(uv) = M_{u,v} + M_{v,u}$ for all $u \neq v$. Then for all $H\subseteq G$,
	\begin{equation}\label{mc-eq}
	\Delta(H) = \sum_{\mathclap{\substack{v\in V(H) \\ uv\in E(G)-E(H)}}}M_{u,v} \ge 0,
	\end{equation}
	with equality if $H=G$. In fact, we prove that every threshold weighting is equivalent to at least one Markov Chain (\cref{markov}).
\end{example}

The following threshold weighting will be especially important, and can be thought of as representing a
uniform random walk on $G$:

\begin{dfn}\label{mc-unif}
	If $G$ lacks isolated vertices then let $\Du = (1,\Bu) \in \tcl(G)$ be the threshold weighting generated in \cref{mc-example} when $M_{u,v} = \Ind{uv\in E(G)}/\deg(v)$. That is, $\Du = (\alpha,\beta)$, where $\alpha(u)=1$ for all $u$ and $\beta(uv) = 1/\deg(u) + 1/\deg(v)$ for all $uv$. If $G$ is $d$-regular then this simplifies to $\Du = (1,\Bu) = (1,2/d)$.
\end{dfn}

Now we define threshold random graphs:

\begin{dfn}
	For $\Delta = (\alpha,\beta) \in \tcl(G)$ let $\rg X \Delta n$ be the graph with vertices $u_i$ for $u\in V(G)$ and $i\in[n^{\alpha(u)}]$, and for $uv \in E(G)$, each edge $u_i v_j$ independently with probability $n^{-\beta(uv)}$. The graph $\rg X \Delta n$ comes with the coloring to $G$ defined by $u_i \mapsto u$.
	
	For $H\subseteq G$ and $X$ in the support of $\rg X \Delta n$, let $\colt H X$ be the set of subgraphs $H^\prime \subseteq X$ such that the aforementioned coloring (restricted to $V(H^\prime)$) is an isomorphism from $H^\prime$ to $H$. We say that such a graph $H^\prime$ is ``$H$-colored". Note that $\colt H X$ can be identified with a subset of $\prod_{u \in V(H)}[n^{\alpha(u)}]$.
\end{dfn}

\begin{lem}\label{ex-sub-count}
	If $\Delta \in \tcl(G)$ and $H\subseteq G$ then $\Ex[|\colt H {\rg X \Delta n}|] = n^{\Delta(H)}(1 \pm o(1))$.
\end{lem}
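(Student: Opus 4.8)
The plan is a direct first-moment calculation. Fix $H\subseteq G$ and $\Delta=(\alpha,\beta)\in\tcl(G)$, and recall that $\colt H {\rg X \Delta n}$ is identified with a subset of $P:=\prod_{u\in V(H)}[n^{\alpha(u)}]$. First I would observe that a tuple $(i_u)_{u\in V(H)}\in P$ lies in $\colt H {\rg X \Delta n}$ if and only if the edge $u_{i_u}v_{i_v}$ is present in $\rg X \Delta n$ for every $uv\in E(H)$: the vertices $u_{i_u}$ are always present and carry pairwise distinct colors, so the candidate subgraph with vertex set $\{u_{i_u}:u\in V(H)\}$ and edge set $\{u_{i_u}v_{i_v}:uv\in E(H)\}$ is $H$-colored exactly when all of those (distinct) edges appear; conversely every graph in $\colt H {\rg X \Delta n}$ arises from a unique such tuple, namely the one recording the index of its unique vertex of each color. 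Consequently
\[
\bigl|\colt H {\rg X \Delta n}\bigr| \;=\; \sum_{(i_u)\in P}\ \prod_{uv\in E(H)}\Ind{u_{i_u}v_{i_v}\in E(\rg X \Delta n)}.
\]

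Next I would take expectations. The edges of $\rg X \Delta n$ are mutually independent, and for $uv\in E(H)\subseteq E(G)$ the edge $u_{i_u}v_{i_v}$ appears with probability $n^{-\beta(uv)}$; since for a fixed tuple the relevant potential edges are distinct, the corresponding product of indicators has expectation $\prod_{uv\in E(H)}n^{-\beta(uv)}=n^{-\beta(H)}$. By linearity of expectation,
\[
\Ex\bigl[\,\bigl|\colt H {\rg X \Delta n}\bigr|\,\bigr] \;=\; |P|\cdot n^{-\beta(H)}.
\]
It then remains to show $|P|=n^{\alpha(H)}(1\pm o(1))$. Writing $|P|=\prod_{u\in V(H)}m_u$ where $m_u\in\{\floor{n^{\alpha(u)}},\ceil{n^{\alpha(u)}}\}$ is the number of vertices of color $u$, note that $m_u=1=n^{\alpha(u)}$ when $\alpha(u)=0$, while $m_u=n^{\alpha(u)}\bigl(1+O(n^{-\alpha(u)})\bigr)$ when $\alpha(u)>0$. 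Since $v(H)\le v(G)=O(1)$ and the positive values among the $\alpha(u)$ are fixed constants, multiplying these finitely many factors gives $|P|=n^{\sum_{u\in V(H)}\alpha(u)}(1\pm o(1))=n^{\alpha(H)}(1\pm o(1))$. Combining the two displays then yields $\Ex[\,|\colt H {\rg X \Delta n}|\,]=n^{\alpha(H)-\beta(H)}(1\pm o(1))=n^{\Delta(H)}(1\pm o(1))$.

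There is no real obstacle here; the computation is routine. The only points worth any care are verifying that the identification of $\colt H {\rg X \Delta n}$ with tuples in $P$ is a bijection, so that no $H$-colored subgraph is double-counted or omitted, and confirming that the integer-rounding of the vertex-class sizes $n^{\alpha(u)}$ contributes only a $(1\pm o(1))$ factor --- which is immediate because $V(H)$ has bounded size and the $\alpha(u)$ are constants, so the constants hidden in the $o(1)$ are allowed to depend on $G$.
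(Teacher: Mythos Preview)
Your proposal is correct and follows the same approach as the paper: a first-moment computation in which each of the roughly $n^{\alpha(H)}$ tuples is present with probability $n^{-\beta(H)}$, followed by linearity of expectation, with the $1\pm o(1)$ coming from rounding the $n^{\alpha(u)}$. You have simply spelled out in more detail the independence of the relevant edge events and the rounding estimate, which the paper compresses into two sentences.
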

\begin{proof}
	Let $(\alpha,\beta) = \Delta$. The set $\colt H {\rg X \Delta n}$ contains each of its $n^{\alpha(H)}$ possible elements with probability $n^{-\beta(H)}$, so the result follows from linearity of expectation. (The $1 \pm o(1)$ accounts for having to round $n^{\alpha(\cdot)}$ to an integer.)
\end{proof}

\cref{ex-sub-count} motivates the requirements that $\Delta$ be nonnegative everywhere and that $\Delta(G)=0$. Recall that the problem $\subcol(X)$ asks whether $\colt G X$ is the empty set. Since $\Delta(G)$ is required to be zero, it follows that $\colt G {\rg X \Delta n}$ has (approximately) one element on average, and the probability that $\colt G {\rg X \Delta n}$ is empty is known to be bounded away from 0 and 1 as $n$ goes to infinity~\cite{LRR17}.

\subsection{The Parameter \texorpdfstring{$\kcol(G)$}{kappa(G)} and an Algorithm for the Average Case}

We now define $\kcol(G)$:

\begin{dfn}[\cite{LRR17}]
	Let $G$ be a graph with no isolated vertices. Let $\seq(G)$ be the set of \emph{union sequences}, meaning sequences $(H_1, \dotsc, H_k)$ of distinct subgraphs of $G$ such that $H_k = G$ and each $H_i$ is either an edge or the union of two previous graphs in the sequence. For $\Delta \in \tcl(G)$ let $\kappa_{\Delta}(G) = \min_{S\in\seq(G)}\max_{H\in S}\Delta(H)$. Finally, let $\kcol(G) = \max_{\Delta\in\tcl(G)} \kappa_{\Delta}(G)$.
\end{dfn}

To simplify the exposition, whenever we refer to $\kcol(G)$, the graph $G$ is implicitly assumed to lack isolated vertices. Li et al.~\cite{LRR17} proved that for any fixed $G$, \aczero circuits solving $\subcol(\rg X \Delta n)$ a.a.s.\ require size at least $n^{\kappa_\Delta(G)-o(1)}$ and at most $n^{2\kappa_\Delta(G)+c}$ (where $c$ is an absolute constant). The results about average-case complexity described in \cref{intro} are with respect to a $\Delta$ such that $\kappa_\Delta(G) = \kcol(G)$.

\begin{thm}\label{alg-main}
	The problem $\subcol(\rg X \Delta n)$ can be solved in time $\tilde O(n^{\kappa_\Delta(G)}) \le \tilde O(n^{\kcol(G)})$ a.a.s.\ for any fixed $G$.
\end{thm}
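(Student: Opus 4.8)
The plan is to process a union sequence $S = (H_1,\dotsc,H_k)$ achieving $\kappa_\Delta(G) = \max_{H\in S}\Delta(H)$, maintaining for each $H_i$ a list of all $H_i$-colored subgraphs of $X := \rg X \Delta n$, i.e.\ the set $\colt{H_i}{X}$. Since $H_k = G$, once the last list is computed we answer $\subcol(X)$ by checking whether $\colt G X$ is empty. The base case is when $H_i$ is a single edge $uv$: we read off $\colt{uv}{X}$ directly from $X$ in time $\tilde O(n^{\alpha(u)+\alpha(v)}) = \tilde O(n^{\alpha(uv)})$, which is $\tilde O(n^{\Delta(uv)})$ times $\tilde O(n^{\beta(uv)}) \le \tilde O(n^2)$ — wait, that bound is too weak, so actually we should bound $|\colt{uv}{X}|$, not enumerate all $n^{\alpha(uv)}$ pairs; see below. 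The inductive case is $H_i = H_j \cup H_\ell$ for $j,\ell < i$: an $H_i$-colored subgraph of $X$ restricts to an $H_j$-colored subgraph and an $H_\ell$-colored subgraph that agree on the colors in $V(H_j)\cap V(H_\ell)$, so $\colt{H_i}{X}$ is obtained by ``joining'' the lists for $H_j$ and $H_\ell$ along their common coloring coordinates.

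The running time of each join is governed by the sizes of the three lists involved. By \cref{ex-sub-count}, $\Ex[|\colt H X|] = n^{\Delta(H)}(1\pm o(1))$ for every $H \subseteq G$, so a.a.s.\ (after a union bound over the constantly many subgraphs appearing in $S$, using a concentration estimate) we have $|\colt H X| \le n^{\Delta(H)+o(1)}$ for all $H \in S$ and also for $H = H_j \cap H_\ell$; here the crucial point is that $\Delta(H_j\cap H_\ell) \le \Delta(H_j)$ and $\le \Delta(H_\ell)$ since $\Delta$ is subadditive under union (because $\Delta(H_j) + \Delta(H_\ell) = \Delta(H_j\cup H_\ell) + \Delta(H_j\cap H_\ell)$ when $H_j, H_\ell$ share no edges outside the intersection, and more generally $\Delta(H_j\cap H_\ell)\ge 0$ with $\Delta(H_j \cup H_\ell) \le \Delta(H_j) + \Delta(H_\ell)$). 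To perform the join within time $\tilde O(|\colt{H_j}{X}| + |\colt{H_\ell}{X}| + |\colt{H_i}{X}|)$ we group both input lists by their projection onto the shared coordinates (indexed by a subset of $\prod_{u\in V(H_j\cap H_\ell)}[n^{\alpha(u)}]$) and emit the cross product within each group; the total output over all groups is exactly $\colt{H_i}{X}$, so the cost is $\tilde O(n^{\max(\Delta(H_j),\Delta(H_\ell),\Delta(H_i))+o(1)}) \le \tilde O(n^{\kappa_\Delta(G)})$. The grouping step is where the ``sort-merge join'' from relational algebra enters; on a Turing machine we may simply sort each list by its projection key in time $\tilde O(\text{list size})$, then merge.

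Summing over the constantly many steps of $S$ gives total time $\tilde O(n^{\kappa_\Delta(G)})$ a.a.s., and taking $\Delta$ with $\kappa_\Delta(G) = \kcol(G)$ gives the claimed bound $\tilde O(n^{\kcol(G)})$. The main obstacle is the a.a.s.\ guarantee: \cref{ex-sub-count} only bounds the expected list sizes, and $|\colt H X|$ can in principle exceed its mean by a polynomial factor with non-negligible probability if $G$ has subgraphs whose expected count is dominated by a few high-multiplicity configurations. To handle this I would argue that for the specific intervals relevant here, $|\colt H X|$ concentrates — e.g.\ by bounding the second moment $\Ex[|\colt H X|^2]$ in terms of $\max_{H\subseteq H'\subseteq H, \text{(overlap structure)}} n^{\Delta(\cdot)}$ and invoking Markov/Chebyshev, or more robustly by noting it suffices to bound $|\colt H X|$ by $n^{\Delta(H)+o(1)}$, which follows from Markov's inequality applied to $\Ex[|\colt H X|] = n^{\Delta(H)+o(1)}$ once we observe we only need the bound to hold for each of $O(1)$ fixed subgraphs simultaneously and that a polylogarithmic slack is absorbed into $\tilde O(\cdot)$. (If even the weak Markov bound $|\colt H X| \le n^{\Delta(H)} \cdot \log n$ a.a.s.\ fails for lack of independence across $i\in[n^{\alpha(u)}]$, one instead bounds list sizes via a union bound over the $n^{\alpha(H)}$ candidate tuples each present with probability $n^{-\beta(H)}$ and a Chernoff bound, since the indicator events across distinct color-tuples are negatively associated.) Modulo this concentration argument, everything else is a routine bookkeeping of join costs.
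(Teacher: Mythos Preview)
Your proposal is correct and follows essentially the same approach as the paper: pick an optimal union sequence $S$, bound each $|\colt H X|$ by $\tilde O(n^{\Delta(H)})$ via Markov's inequality plus a union bound over the $O(1)$ graphs in $S$ (this is exactly the paper's argument, so your hedging about stronger concentration is unnecessary---a single $\log n$ slack already gives failure probability $O(1/\log n)$, which is $o(1)$), and compute each $\colt{A\cup B}{X}$ from $\colt A X$ and $\colt B X$ via sort-merge join in time $\tilde O(|\colt A X| + |\colt B X| + |\colt{A\cup B}{X}|)$. The digression about $\Delta(H_j\cap H_\ell)$ is a harmless tangent, since the join cost depends only on the three list sizes, all of which correspond to elements of $S$.
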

\begin{proof}
	First we prove a weaker upper bound of $\tilde O(n^{2\kappa_\Delta(G)})$, in a manner analogous to the circuit from \cite{LRR17}, and then we describe a modification (on Turing machines) that removes the factor of 2 from the exponent. Later we will remove the factor of 2 in \aczero using a different approach, summarized at the beginning of \cref{sec-5}.
	
	Let $S$ be a union sequence such that $\kappa_\Delta(G) = \max_{H \in S}\Delta(H)$. For any $H \in S$, by \cref{ex-sub-count} and Markov's Inequality, $P\left(|\colt H {\rg X \Delta n}| > n^{\Delta(H)}\log n\right) \le 1/\log n$. (We will obtain a tighter bound of $P(|\colt H {\rg X \Delta n}| > \tilde O(n^{\Delta(H)})) \le n^{-\omega(1)}$ in \cref{good-subsection}.) By a union bound it follows that if $X \sim \rg X \Delta n$ then $\max_{H \in S}|\colt H X| \le \tilde O(n^{\kappa_\Delta(G)})$ a.a.s. Assume this condition holds for $X$. For each successive $H$ in $S$, compute $\colt H X$ as follows. If $H$ is a single edge then this is trivial. Otherwise $H = A \cup B$ for some previous $A, B \in S$, in which case $\colt H X$ is the set of $\mathcal A \cup \mathcal B$ such that $\mathcal A \in \colt A X, \mathcal B \in \colt B X$ and the projections of $\mathcal A$ and $\mathcal B$ onto $[n]^{V(A\cap B)}$ are equal. Therefore $\colt H X$ can be computed by brute force in time $\tilde O(|\colt A X| \cdot |\colt B X|) \le \tilde O(n^{2\kappa_\Delta(G)})$. Finally, check whether $\colt G X$ is empty.
	
	We can save a quadratic factor by computing $\colt H X$ from $\colt A X$ and $\colt B X$ as follows. (This is a case of the \emph{sort-merge join} algorithm for computing the \emph{natural join} of two relations, as defined in database theory~\cite{SKS11}.) Define a partial order on $[n]^{V(A)} \cup [n]^{V(B)}$ by projecting onto $[n]^{V(A\cap B)}$ and applying the lexicographic order on $[n]^{V(A\cap B)}$. Sort $\colt A X$ and $\colt B X$ in nondecreasing order, and for convenience add the symbol $\perp$ to the end of both sorted lists. Let $\mathcal A$ and $\mathcal B$ be the first elements of $\colt A X$ and $\colt B X$ respectively, and initialize an empty accumulator (which will ultimately equal $\colt H X$). While $\mathcal A\neq\perp$ and $\mathcal B\neq\perp$, do the following. If $\mathcal A<\mathcal B$ then let $\mathcal A$ be the next element of $\colt A X$. If $\mathcal B<\mathcal A$ then let $\mathcal B$ be the next element of $\colt B X$. Otherwise, let $\mathcal B^\prime = \mathcal B$, and while $\mathcal B^\prime \neq \perp$ and the projections of $\mathcal A$ and $\mathcal B^\prime$ onto $[n]^{V(A\cap B)}$ are equal, add $\mathcal A \cup \mathcal B^\prime$ to the accumulator and let $\mathcal B^\prime$ be the next element of $\colt B X$. Then (once the procedure involving $\mathcal B^\prime$ has finished) let $\mathcal A$ be the next element of $\colt A X$.
	
	Sorting $\colt A X$ and $\colt B X$ takes $\tilde O(|\colt A X| + |\colt B X|)$ comparisons, each of which takes $\tilde O(1)$ time, and then computing $\colt H X$ takes $\tilde O(|\colt A X| + |\colt B X| + |\colt H X|) \le \tilde O(n^{\kappa_\Delta(G)})$ time.
\end{proof}

We will use the following graph-theoretic properties of $\kcol(G)$:
\begin{thm}[\cite{LRR17}\footnote{Specifically, Corollary 4.2, Theorem 4.9, and Theorem 5.1 of \cite{LRR17} correspond to \cref{max-at-one,kcol-expansion,kcol-mm} respectively.}]
	Let $G$ be a graph with no isolated vertices.
	\begin{enumerate}[label=\textnormal{(\roman*)}, ref={\thethm(\roman*)}]
		\item There exists $\Delta = (1,\beta) \in \tcl(G)$ (meaning $\Delta(u)=1$ for all vertices $u$) such that $\kcol(G) = \kappa_\Delta(G)$.\label[thm]{max-at-one}
		\item $\kcol(G) \ge v(G)h(G)/(3\max_{u\in V(G)}\deg(u))$, where $h(G)$ is the edge expansion of $G$. \label[thm]{kcol-expansion}
		\item If $G$ is a minor of some graph $H$ then $\kcol(G) \le \kcol(H)$. \label[thm]{kcol-mm}
	\end{enumerate}
\end{thm}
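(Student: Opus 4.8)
The plan is to prove the three parts essentially independently: (i) and (ii) are short, and (iii) is where the real work lies. For (i), I would start from an arbitrary $\Delta=(\alpha,\beta)\in\tcl(G)$ and build $\Delta'=(1,\beta')\in\tcl(G)$ with $\Delta'(H)\ge\Delta(H)$ for every $H\subseteq G$. A pointwise inequality $\Delta'\ge\Delta$ forces $\max_{H\in S}\Delta'(H)\ge\max_{H\in S}\Delta(H)$ for every union sequence $S$, hence $\kappa_{\Delta'}(G)\ge\kappa_\Delta(G)$ after taking $\min_S$; applying this to a $\Delta$ that attains $\kcol(G)=\max_\Delta\kappa_\Delta(G)$ then yields a $\Delta'=(1,\beta')$ with $\kappa_{\Delta'}(G)=\kcol(G)$. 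To construct $\beta'$: the ``deficit'' $1-\alpha(u)$ is nonnegative, and since $G$ has no isolated vertices I would pick an incident edge $e_u\ni u$ for each $u$ and push the whole deficit onto it, i.e.\ $\beta'(e)=\beta(e)+\sum_{u\,:\,e_u=e}(1-\alpha(u))\ge0$. Because $e_u\in E(H)$ implies $u\in V(H)$, the total extra mass loaded onto $E(H)$ is at most $\sum_{u\in V(H)}(1-\alpha(u))$, with equality when $H=G$; therefore $\Delta'(H)=v(H)-\beta'(H)\ge v(H)-\beta(H)-\sum_{u\in V(H)}(1-\alpha(u))=\Delta(H)$ and $\Delta'(G)=0$, so $\Delta'\in\tcl(G)$ (the constraint $\beta'\le 2$ being automatic).

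For (ii) I would lower-bound $\kappa_{\Du}(G)$ for the uniform-walk weighting $\Du=(1,\Bu)$. Specializing \eqref{mc-eq} to $M_{u,v}=\Ind{uv\in E(G)}/\deg(v)$ gives, for every $H\subseteq G$, $\Delta(H)=\sum_{u\in V(H)}\bigl(\deg(u)-\deg_H(u)\bigr)/\deg(u)$, where $\deg_H(u)$ counts the edges of $H$ at $u$; since each edge of $G$ from some $u\in V(H)$ to a vertex outside $V(H)$ is counted in $\deg(u)-\deg_H(u)$, this is at least $e\bigl(V(H),V(G)\setminus V(H)\bigr)/D$ with $D=\max_u\deg(u)$. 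Now take any union sequence $(H_1,\dots,H_k)$ of $G$, assuming $v(G)\ge3$ (the case $v(G)=2$ is trivial): let $i$ be least with $v(H_i)>\tfrac23v(G)$, so $i\ge2$ and $H_i=H_j\cup H_\ell$ with, say, $\tfrac13v(G)<v(H_j)\le\tfrac23v(G)$ by minimality of $i$; then $\min\bigl(|V(H_j)|,|V(G)\setminus V(H_j)|\bigr)\ge\tfrac13v(G)$, so $e\bigl(V(H_j),V(G)\setminus V(H_j)\bigr)\ge\tfrac13h(G)v(G)$ by definition of edge expansion, and $\Delta(H_j)\ge v(G)h(G)/(3D)$. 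As the union sequence was arbitrary, $\kcol(G)\ge\kappa_{\Du}(G)\ge v(G)h(G)/(3D)$.

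For (iii), I would first reduce: $G$ is obtained from $H$ by deleting vertices and edges and then contracting edges one at a time, and choosing a minimal subgraph of $H$ witnessing the minor relation, one checks that $G$ having no isolated vertex forces every graph in this process to have none either (so $\kcol$ is defined at each of them); hence it suffices to treat (a) $G\subseteq H$ and (b) the case where $G$ arises from $H$ by contracting a single edge $e=xy$. In each case I would fix $\Delta\in\tcl(G)$ with $\kappa_\Delta(G)=\kcol(G)$ and produce both a $\Delta'\in\tcl(H)$ and a map $\sigma$ sending subgraphs of $H$ to subgraphs of $G$ such that $\sigma$ commutes with $\cup$, sends each edge of $H$ to an edge of $G$ or to an edgeless graph on at most two vertices, satisfies $\sigma(H)=G$, and satisfies $\Delta'(H')\ge\Delta(\sigma(H'))$ for every $H'\subseteq H$. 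Granting this, for any union sequence $S'=(H_1',\dots,H_k')$ of $H$ the sequence $(\sigma(H_i'))_i$ ends at $G$ and is built from edges of $G$, tiny edgeless graphs, and unions of earlier terms; deleting all isolated vertices from every term — which only decreases $\Delta$, still commutes with $\cup$, and leaves the last term equal to $G$ since $G$ has none — and then discarding empty graphs and duplicates yields a bona fide union sequence $S$ of $G$, the one point to check being that every edge of $G$ survives, which holds because every edge of $H$ appears as an edge-step of $S'$ and every edge of $G$ is the $\sigma$-image of one. Then $\kappa_\Delta(G)\le\max_{H\in S}\Delta(H)\le\max_i\Delta(\sigma(H_i'))\le\max_i\Delta'(H_i')$, so minimizing over $S'$ gives $\kappa_\Delta(G)\le\kappa_{\Delta'}(H)\le\kcol(H)$, and maximizing over $\Delta$ gives $\kcol(G)\le\kcol(H)$.

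It remains to construct $\Delta'$ and $\sigma$. For (a) I would set $\alpha'=\alpha$ and $\beta'=\beta$ on $V(G)$ and $E(G)$ and $\alpha'=\beta'=0$ on the vertices and edges of $H$ outside $G$, and take $\sigma(H')=H'\cap G$; then $\Delta'(H')=\Delta(H'\cap G)$ identically, which makes all the requirements immediate. For (b), writing $z$ for the vertex of $G$ produced by contracting $e=xy$ and taking $\Delta=(1,\beta)$ by part (i), I would set $\alpha'\equiv1$ on $V(H)$, $\beta'(xy)=1$, $\beta'=\beta$ on edges avoiding $\{x,y\}$, and route all of $\beta(zv)$ onto $\beta'(xv)$ when $xv\in E(H)$ and onto $\beta'(yv)$ otherwise; a count gives $\sum_{e\in E(H)}\beta'(e)=\beta(G)+1=v(H)$, hence $\Delta'(H)=0$, and with $\sigma(H')$ the contraction image of $H'$, the inequality $\Delta'(H')\ge\Delta(\sigma(H'))$ (which also gives $\Delta'(H')\ge0$, so $\Delta'\in\tcl(H)$) follows by a case analysis on which of $x,y$ lie in $V(H')$ and whether $xy\in E(H')$: the $\alpha'$-mass of $H'$ exceeds that of $\sigma(H')$ by $0$ or $1$, while the $\beta'$-mass on the edges of $H'$ is at most the $\beta$-mass on the corresponding edges of $\sigma(H')$ plus the indicator that $xy\in E(H')$. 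I expect this last step — the case analysis in the contraction case, together with the bookkeeping (stray isolated vertices, duplicate terms) confirming that the projected sequence is a legitimate union sequence of $G$ — to be the main obstacle; parts (i) and (ii), and case (a) of (iii), are routine.
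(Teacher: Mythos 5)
The paper does not prove this theorem at all: it is imported verbatim from \cite{LRR17} (Corollary~4.2, Theorem~4.9 and Theorem~5.1 there), so there is no in-paper proof to compare against. Judged on its own, your reconstruction is correct and follows the same lines as the original arguments: for (i), redistributing the vertex deficits $1-\alpha(u)$ onto incident edges to get a pointwise-larger threshold weighting; for (ii), lower-bounding $\kappa_{\Du}$ by locating, in any union sequence, a term whose vertex set has size in $(\tfrac13 v(G),\tfrac23 v(G)]$ and charging the cut it induces; for (iii), factoring the minor relation through a subgraph followed by single edge contractions and transporting the threshold weighting backwards while projecting union sequences forwards. The details I checked all go through, including the two points you rightly identify as delicate: the inequality $\Delta'(H')\ge\Delta(\sigma(H'))$ in the contraction case (the indicator $\Ind{x,y\in V(H')}$ dominates $\Ind{xy\in E(H')}$, and the two edges $xv,yv$ mapping onto $zv$ together carry weight exactly $\beta(zv)$), and the bookkeeping that turns the projected sequence into a legitimate union sequence (removing isolated vertices commutes with unions, and every edge of $G$ survives because every edge of $H$ occurs as an edge-term of $S'$). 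One small caveat: in (ii) the case $v(G)=2$ is not ``trivial'' but degenerate --- for $G=K_2$ one has $\kcol(K_2)=0$ while the stated bound is $2/3$, so the inequality as literally stated only holds for $v(G)\ge3$; this is an artifact of the statement rather than a flaw in your argument, which is sound for $v(G)\ge3$.
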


\begin{cor}\label{kcol-ex-cor}
	\leavevmode
	\begin{enumerate}[label=\textnormal{(\roman*)}, ref={\thecor(\roman*)}]
		\item If $G$ is a bounded-degree expander then $\kappa(G)$ is $\Omega(v(G))$. \label[cor]{kap-ex}
		\item If $G$ is a $d$-regular graph then $\kcol(G) \ge v(G)(1-\lambda_2(G)/d)/6$. \label[cor]{kcol-cheeger}
	\end{enumerate}
\end{cor}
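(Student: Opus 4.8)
The plan is to derive both parts directly from \cref{kcol-expansion}, which asserts $\kcol(G) \ge v(G)h(G)/(3\max_{u\in V(G)}\deg(u))$.

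For part (ii), since $G$ is $d$-regular we have $\max_{u\in V(G)}\deg(u) = d$, so \cref{kcol-expansion} specializes to $\kcol(G) \ge v(G)h(G)/(3d)$. Next I would invoke \cref{cheeger} (the easy half of Cheeger's inequality), which gives $h(G) \ge (d-\lambda_2(G))/2$ for $d$-regular $G$. Substituting this lower bound for $h(G)$ yields $\kcol(G) \ge v(G)(d-\lambda_2(G))/(6d) = v(G)(1-\lambda_2(G)/d)/6$, as claimed. This is just the arithmetic of chaining the two inequalities.

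For part (i), recall that ``bounded-degree expander'' means $h(G) = \Omega(1)$ and $\max_{u\in V(G)}\deg(u) = O(1)$. Plugging these into \cref{kcol-expansion} gives $\kcol(G) \ge v(G)\cdot\Omega(1)/(3\cdot O(1)) = \Omega(v(G))$, with the hidden constant depending only on the expansion constant and degree bound. (One could alternatively restrict attention to regular expanders and deduce part (i) from part (ii) together with the spectral characterization of expansion, but unwinding the definition and applying \cref{kcol-expansion} directly is cleaner and handles the non-regular case for free.)

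I do not anticipate any real obstacle: both statements are immediate consequences of the cited results, with part (ii) requiring only the substitution of Cheeger's bound into \cref{kcol-expansion} and part (i) amounting to unpacking the definition of a bounded-degree expander.
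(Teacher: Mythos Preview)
Your proposal is correct and matches the paper's own proof: the paper simply states that part (i) follows from \cref{kcol-expansion} and that part (ii) follows from \cref{kcol-expansion} together with \cref{cheeger}, exactly as you argue. Your write-up just fills in the arithmetic that the paper leaves implicit.
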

\begin{proof}[Proof of \cref{kcol-ex-cor}]
	\cref{kap-ex} follows from \cref{kcol-expansion}, as observed by Li et al.~\cite{LRR17}. \cref{kcol-cheeger} follows from \cref{kcol-expansion,cheeger}.
\end{proof}

\section{The Parameter \texorpdfstring{$\kmarx(G)$}{emb(G)} and Proof that \texorpdfstring{$\kmarx(G)$}{emb(G)} is \texorpdfstring{$O(\kcol(G))$}{O(kappa(G))}}\label{sec-3}

Recall that $\kmarx(G)$ is significant because of its role in Marx's ETH-hardness result for $\subcol$, namely \cref{mar10main}.

\begin{dfn}[$\kmarx(G)$]\label{emb-def}
	Let $\mblowup G q$ be the graph formed by replacing each vertex of $G$ with a $q$-clique, i.e.\ it has vertices $u_i$ for all $u\in V(G)$ and $i\in[q]$, and edges $u_i v_j$ for all $u_i \neq v_j$ such that either $u=v$ or $uv \in E(G)$. Let $\kmarx(G)$ be the supremum of all $r>0$ for which there exists $m_0 = m_0(G,r)$ such that if $H$ is any graph with $m \ge m_0$ edges and no isolated vertices, then $H$ is a minor of $\mblowup G {\ceil{m/r}}$, and furthermore a minor mapping from $H$ to $\mblowup G {\ceil{m/r}}$ can be computed in time $f(G)m^{O(1)}$ for some function $f$. 
\end{dfn}

Although the requirement that such a minor mapping be efficiently computable is crucial in \cref{mar10main}, none of the other results about $\kmarx(G)$ that we reference or derive depend on this requirement, so we may safely ignore it going forward. The following example illustrates \cref{emb-def}:
\begin{example}[$\kmarx(K_k)$~\cite{Mar10}]
	Since $\mblowup {K_k} {\ceil{m/r}} = K_{k\ceil{m/r}}$, any graph $H$ with $m$ edges is a minor of $\mblowup {K_k} {\ceil{m/r}}$ if and only if $v(H) \le k\ceil{m/r}$. If $H$ has no isolated vertices then $H$ could have up to $2m$ vertices, so $2m \le k\ceil{m/r}$. Therefore $\kmarx(K_k) = k/2$: it is sufficient for $2m$ to be at most $km/r$ (i.e.\ $r \le k/2$), and no $r > k/2$ satisfies $2m \le k\ceil{m/r}$ for arbitrarily large $m$.
\end{example}

\begin{remark*}
	The name $\kmarx(G)$ comes from the fact that Marx~\cite{Mar10} called a minor mapping from $H$ to $\mblowup G q$ an ``embedding of depth $q$" from $H$ into $G$. Marx used the notation $\mblowup G q$, but the parameter $\kmarx(G)$ is new in the current paper, all results about $\kmarx(G)$ in \cite{Mar10,AM11} having been stated in terms of embeddings of some depth.
\end{remark*}

The following is used in our proof that $\kmarx(G)$ is $O(\kcol(G))$:
\begin{lem}\label{kcol-blowup}
	$\kcol\left(\mblowup G q\right) \le q\max(\kcol(G),2)$.
\end{lem}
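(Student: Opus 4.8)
The plan is to fix an arbitrary threshold weighting $\Delta' = (\alpha',\beta') \in \tcl\bigl(\mblowup G q\bigr)$, show $\kappa_{\Delta'}\bigl(\mblowup G q\bigr)\le q\max(\kcol(G),2)$, and then take the maximum over $\Delta'$. (We may assume $q\ge2$, since $\mblowup G 1 = G$ makes the statement trivial.) Write the vertex set of $\mblowup G q$ as $V(G)\times[q]$; for $u\in V(G)$ let $B_u$ be the $q$-clique that $\mblowup G q$ induces on $\{u\}\times[q]$; and for $H\subseteq G$ let $\mblowup H q\subseteq\mblowup G q$ be the \emph{full blowup} of $H$, namely the union of the cliques $B_u$ over $u\in V(H)$ together with all $q^2$ edges between $B_u$ and $B_v$ for each $uv\in E(H)$. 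Note that $\mblowup G q$ really is the full blowup of $G$, that $\mblowup{(A\cup B)}{q} = \mblowup A q\cup\mblowup B q$ whenever $A,B\subseteq G$, and that $\Delta'(H')\ge0$ for every subgraph $H'$ of $\mblowup G q$.

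The key idea is to push $\Delta'$ down to a threshold weighting on $G$. Set $\hat\alpha(u) = \Delta'(B_u)/q$ for $u\in V(G)$ and $\hat\beta(uv) = \bigl(\sum_{i,j\in[q]}\beta'(u_iv_j)\bigr)/q$ for $uv\in E(G)$, and let $\hat\Delta = (\hat\alpha,\hat\beta)$. A short computation (splitting the sum defining $\Delta'\bigl(\mblowup H q\bigr)$ into within-blob and between-blob parts) shows $\hat\Delta(H) = \Delta'\bigl(\mblowup H q\bigr)/q$ for every $H\subseteq G$. In particular $\hat\Delta(H)\ge0$ for all $H\subseteq G$ and $\hat\Delta(G) = \Delta'\bigl(\mblowup G q\bigr)/q = 0$; moreover $0\le\Delta'(B_u)\le\sum_i\alpha'(u_i)\le q$ gives $\hat\alpha\in[0,1]^{V(G)}$, and $\hat\beta\ge0$ together with $\hat\beta(uv)\le\hat\alpha(u)+\hat\alpha(v)\le 2$ (from $\hat\Delta(uv)\ge0$) gives $\hat\beta\in[0,2]^{E(G)}$. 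Hence $\hat\Delta\in\tcl(G)$, so there is a union sequence $S = (H_1,\dots,H_k)$ of $G$ with $\hat\Delta(H_t)\le\kappa_{\hat\Delta}(G)\le\kcol(G)$, equivalently $\Delta'\bigl(\mblowup{H_t}{q}\bigr)\le q\kcol(G)$, for every $t$.

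Next I would lift $S$ to a union sequence $S'$ of $\mblowup G q$ by filling in the gaps between consecutive full blowups. Process $t = 1,\dots,k$. If $H_t = H_a\cup H_b$ for earlier $H_a,H_b\in S$, then $\mblowup{H_t}{q} = \mblowup{H_a}{q}\cup\mblowup{H_b}{q}$ is already available, with $\Delta'$-weight at most $q\kcol(G)$. If $H_t$ is an edge $uv$, then (if not already built) insert a short union sequence constructing the clique $B_u$ — every graph in it is a subgraph of $B_u$, hence of $\Delta'$-weight at most $\sum_i\alpha'(u_i)\le q$ — then likewise $B_v$, then $B_u\cup B_v$, whose $\Delta'$-weight is $\Delta'(B_u)+\Delta'(B_v)\le 2q$ since there are no edges between the two cliques, and finally add the $q^2$ edges between $B_u$ and $B_v$ one at a time; each such step is the union of the previous graph with a single edge both of whose endpoints are already present, so the $\Delta'$-weight only decreases and stays at most $2q$, ending at $\mblowup{H_t}{q}$. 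Since $\max(\kcol(G),2)\ge2$ we have both $2q\le q\max(\kcol(G),2)$ and $q\kcol(G)\le q\max(\kcol(G),2)$, so every graph in $S'$ has $\Delta'$-weight at most $q\max(\kcol(G),2)$; as $\mblowup{H_k}{q} = \mblowup G q$, it follows that $\kappa_{\Delta'}\bigl(\mblowup G q\bigr)\le q\max(\kcol(G),2)$, and maximizing over $\Delta'$ gives the claim.

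I expect the main obstacle to be bookkeeping rather than genuine difficulty: one must check that the graphs produced above can be arranged into a sequence of \emph{distinct} subgraphs in which each entry is an edge or the union of two earlier entries (e.g.\ build each $B_u$ only the first time it is needed, and observe that the auxiliary graphs attached to an edge $uv$ span only the blobs of $u$ and $v$ and so cannot coincide with any $\mblowup{H_s}{q}$ for an $H_s$ on more than two vertices). The only place the constant $2$ in the statement is used is the bound $\Delta'(B_u\cup B_v)\le 2q$ arising when two blobs are first merged; everything else is governed by $q\kcol(G)$.
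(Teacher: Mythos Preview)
Your proof is correct and follows essentially the same approach as the paper: push the threshold weighting on $\mblowup G q$ down to a threshold weighting $\hat\Delta$ on $G$ via $\hat\Delta(H)=\Delta'(\mblowup H q)/q$, take an optimal union sequence for $G$, and lift it to $\mblowup G q$ by filling in a union sequence for each $\mblowup e q$ with the crude bound $\Delta'(H')\le\alpha'(\mblowup e q)\le 2q$ for $H'\subseteq\mblowup e q$. The only cosmetic difference is that the paper uses this last bound directly on an arbitrary union sequence for $\mblowup e q$, rather than building $B_u$, $B_v$, $B_u\cup B_v$ and then adding cross-edges as you do; both routes land at the same $2q$.
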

\begin{proof}
	Let $\Delta=(\alpha,\beta) \in \tcl\left(\mblowup G q\right)$ such that $\kcol\left(\mblowup G q\right) = \kappa_\Delta\left(\mblowup G q\right)$. Define a threshold weighting $\Delta^\prime = (\alpha^\prime, \beta^\prime) \in \tcl(G)$ as follows: For all $u \in V(G)$ and $uv \in E(G)$,
	\begin{align*}
	\alpha^\prime(u) &= \frac{\Delta\left(\mblowup u q\right)}q = \frac1q\left(\sum_{i=1}^q \alpha(u_i) - \sum_{\mathclap{1 \le i < j \le q}} \beta(u_i u_j)\right),\\
	\beta^\prime(uv) &= \frac1q\sum_{i,j=1}^q \beta(u_i v_j).
	\end{align*}
	This is a threshold weighting because if $H\subseteq G$ then $\Delta^\prime(H) = \Delta(\mblowup H q)/q \ge 0$, with equality if $H = G$. It's also normalized to $\alpha^\prime \le 1$.
	
	Let $S^\prime$ be an optimal union sequence for $G$ with respect to $\Delta^\prime$. Construct a union sequence $S$ for $\mblowup G q$ as follows:
	\begin{enumerate}
		\item For each $e\in E(G)$ append an arbitrary union sequence for $\mblowup e q$.
		\item For each $H\in S^\prime$ (in order) append $\mblowup H q$.
	\end{enumerate}
	If $H\subseteq \mblowup e q$ then $\Delta(H) \le \alpha(\mblowup e q) \le 2q$, and we've already seen that $\Delta(\mblowup H q) = q\Delta^\prime(H)$ for all $H\in S^\prime$. Therefore,
	\begin{align*}
	\kcol\left(\mblowup G q\right) &= \kappa_\Delta\left(\mblowup G q\right) \le \max_{H\in S}\Delta(H) \le \max\left(2q, \max_{H\in S^\prime}q\Delta^\prime(H)\right) 
	= q\max(\kappa_{\Delta^\prime}(G),2) \\
	&\le q\max(\kcol(G),2).\qedhere
	\end{align*}
\end{proof}

Now we prove that $\kmarx(G)$ is $O(\kcol(G))$ (\cref{secondary}), using an argument similar to the proof by Marx~\cite{Mar10} that $\kmarx(G)$ is $O(\tw(G))$:
\begin{proof}
	Let $r>0$, and assume there exists an arbitrarily large 3-regular expander $H$ that's a minor of $\mblowup G {\ceil{e(H)/r}}$. Then by \cref{kap-ex,kcol-mm,kcol-blowup},
	\begin{equation*}
	e(H) = \Theta(v(H)) \le O(\kcol(H)) \le O\left(\kcol\left(\mblowup G {\ceil{e(H)/r}}\right)\right) \le O\left(\kcol(G)e(H)/r\right),
	\end{equation*}
	so $r$ must be $O(\kcol(G))$.
\end{proof}

Li et al.~\cite{LRR17} posed the question of whether \cref{mar10main} holds with $\kcol(G)$ in place of $\kmarx(G)$. By \cref{secondary} this would be a stronger bound, which makes the question even more interesting. This problem is open even in the case of 3-regular expanders: recall from \cref{intro} that if $G$ is a 3-regular expander then $\kmarx(G)$ is $\Theta(\tw(G)/\log\tw(G))$ and $\kcol(G)$ is $\Theta(\tw(G))$~\cite{AM11,LRR17}.

The fact that $\kcol(G)$ is $\Omega(\kmarx(G))$ gives an alternate proof, besides the one in \cite{LRR17}, that $\kcol(G)$ is $\Omega(\tw(G)/\log\tw(G))$.

\section{Separating \texorpdfstring{$\kcol$}{kappa} from Treewidth}\label{sec-4}
In \cref{kcol-clique} we prove that $\kcol(K_k) = k/4 + O(1)$, which is a special case of the more general result that $\kcol\left(\ham q d\right) = \Theta(q^d/d)$, and improves on the observation of Li et al.~\cite{LRR17} that $\kcol(K_k)$ is $\Theta(k)$. We obtain tighter multiplicative constants in the case $d=1$, and it provides an opportunity to illustrate the main ideas of our proof in a simpler setting, but it may be skipped without penalty. In \cref{o-qdd-even} we prove that $\kcol\left(\ham q d\right)$ is $O(q^d/d)$ when $q$ is even, which is sufficient to separate $\kcol$ from treewidth. Again, this case is cleaner than the general case and conveys most of the intuition behind it. In \cref{o-qdd-all} we prove that $\kcol\left(\ham q d\right)$ is $O(q^d/d)$ for all $q$. In \cref{omg-qdd} we prove that $\kcol\left(\ham q d\right)$ is $\Omega(q^d/d)$ in two different ways, completing the proof that $\kcol\left(\ham q d\right)$ is $\Theta(q^d/d)$ (\cref{main}), and we obtain as a corollary that $\kmarx\left(\ham q d\right)$ is $\Theta(q^d/d)$ as well. In \cref{tw-qdd} we summarize the proof of Chandran and Kavitha~\cite{CK06} that $\tw\left(\ham q d\right)$ is $\Theta\left(q^d\big/\sqrt d\right)$.
\subsection{Proof that \texorpdfstring{$\kcol(K_k) = k/4+O(1)$}{kappa(K\_k) = k/4 + O(1)}}\label{kcol-clique}

Rossman~\cite{Ros08} proved that $\kappa_{\Du}(K_k) \ge k/4$ (recall \cref{mc-unif}), so it suffices to prove the upper bound. By \cref{max-at-one} it suffices to prove that $\kappa_\Delta(K_k) \le k/4+O(1)$ for an arbitrary $\Delta = (1,\beta) \in \tcl(G)$. First we construct, by downwards induction, a sequence $U_1 \subseteq \dotsb \subseteq U_k = V(K_k)$ such that $U_i$ is an $i$-element subset of $V(K_k)$ and $\beta(K_k[U_i]) \ge \Bu(K_k[U_i])$ for all $i$. The set $U_k = V(K_k)$ satisfies this requirement because $\beta(K_k)$ and $\Bu(K_k)$ are both equal to $k$. Given $U_i$, let $\mathbf U_{i-1}$ be an $(i-1)$-element subset of $U_i$ chosen uniformly at random. Each pair of elements in $U_i$ is included in $\mathbf U_{i-1}$ with the same probability $p_i$ ($= 1-2/i$), so by linearity of expectation,
\begin{equation*}
\Ex[\beta(K_k[\mathbf U_{i-1}])] = \sum_{\mathclap{e \in E(K_k[U_i])}} \beta(e)p_i = p_i\beta(K_k[U_i]) \ge p_i\Bu(K_k[U_i]) = \Ex[\Bu(K_k[\mathbf U_{i-1}])].
\end{equation*}
Therefore there exists a fixed $U_{i-1}$ such that $\beta(K_k[U_{i-1}]) \ge \Bu(K_k[U_{i-1}])$.

Construct a union sequence $S$ for $K_k$ as follows: start by enumerating the edges, and then for $i$ from 1 to $k-1$, append $(K_k[U_i] \cup e_1, K_k[U_i] \cup e_1 \cup e_2, \dotsc, K_k[U_{i+1}])$, where $e_1, e_2, \dotsc$ are the edges between $U_i$ and $U_{i+1}-U_i$. Then,
\begin{equation*}
\max_{H \in S}\Delta(H) \le \max_i \Delta(K_k[U_i])+1 \le \max_i \Du(K_k[U_i]) + 1.
\end{equation*}
Finally, as observed in \cite{Ros08}, since $K_k$ is $(k-1)$-regular it follows from \cref{mc-eq} that
\begin{equation*}
	\Du(K_k[U_i]) = \frac{i(k-i)}{k-1} \le \frac{k^2}{4(k-1)} = \frac14\left(k + 1 + \frac{1}{k-1}\right) \le \frac{k+2}4 = k/4 + O(1).
\end{equation*}

\subsection{Proof that \texorpdfstring{$\kcol\left(\ham q d\right)$}{kappa(Kqd)} is \texorpdfstring{$O(q^d/d)$}{O(qd/d)} if \texorpdfstring{$q$}{q} is Even}\label{o-qdd-even}
First we reduce to the case $q=2$. The graph $\ham q d$ is a subgraph of $\mblowup {Q_d} {(q/2)^d}$ (recall \cref{emb-def}), as evidenced by the following argument. Let $\phi_L : [q] \rightarrow \bits$ and $\phi_R :[q] \rightarrow [q/2]$ such that $\phi_L \times \phi_R$ is a bijection from $[q]$ to $\bits \times [q/2]$, and let $\psi: [q/2]^d \rightarrow [(q/2)^d]$ be another arbitrary bijection. Then the following map is an injective homomorphism from $\ham q d$ to $\mblowup {Q_d} {(q/2)^d}$:
\begin{equation*}
(x_1,\dotsc,x_d) \mapsto ((\phi_L(x_1), \dotsc, \phi_L(x_d)), \psi(\phi_R(x_1), \dotsc, \phi_R(x_d))).
\end{equation*}
By \cref{kcol-mm,kcol-blowup}, if $\kcol(Q_d)$ is $O(2^d/d)$ then	
\begin{equation*}
\kcol\left(\ham q d\right) \le \kcol\left(\mblowup{Q_d}{(q/2)^d}\right) \le O\left(\left(\frac q 2\right)^d\kcol(Q_d)\right) \le O\left(\left(\frac q 2\right)^d \frac{2^d}d\right) = O(q^d/d).
\end{equation*}

Now we prove that $\kcol(Q_d)$ is $O(2^d/d)$, following some brief definitions and a high-level overview of the argument. Fix $d$. We identify each $u \in \bits^d$ with $\sum_{i=0}^{d-1} u_i 2^i$. For $0 \le a \le 2^d$ let $G(a) = Q_d[0, \dotsc, a-1]$. Recall that $\Du = (1,\Bu) = (1,2/d)$ is a threshold weighting on $Q_d$ (\cref{mc-unif}). Let $\mu = \max_{0\le a\le 2^d}\Du(G(a))$.
	
\begin{remark*}
	The intuition behind $\mu$ is as follows. The reader may note that $\kappa_{\Du}(Q_d) \le \mu+1$, by reasoning analogous to that in \cref{kcol-clique}. That is, for each vertex $u$ of $Q_d$ in increasing lexicographic order, add to an accumulator all edges $uv$ for which $v<u$.
	
	There is another union sequence captured by $\mu$ as well. If a subgraph $B \subseteq Q_d$ is isomorphic to $Q_k$ for some $k$, then since $Q_k$ is isomorphic to $G(2^k)$ (and $\Bu$ is uniform) it follows that $\Du(B) \le \mu$. Consider a depth-$d$ binary tree in which each node at depth $k$ is a subgraph of $Q_d$ isomorphic to $Q_{d-k}$ (in particular, the root is $Q_d$ and the leaves are vertices), and each interior node is the union of its two children along with some additional edges corresponding to a coordinate cut. This tree describes a union sequence $S$ for $Q_d$: recursively obtain the graphs $L$ and $R$ corresponding to the children of $Q_d$, and then take $L\cup R$ and add the missing edges. Note that $\max_{H\in S}\Du(H) = 2\max_{0\le k\le d}\Du(G(2^k)) \le 2\mu$.
	
	Analogous to \cref{kcol-clique}, the upper bound is obtained by comparing $\kappa_\Delta(Q_d)$ to $\mu$ for each $\Delta$, and bounding $\mu$. For this purpose we will consider the two union sequences mentioned above, as well as hybrids of them.
\end{remark*}

The proof is as follows:
\begin{align*}
\kcol(Q_d) &= \max_\beta\kappa_{(1,\beta)}(Q_d) &&\text{\cref{max-at-one}} \\
&\le 2\mu &&\text{\cref{tricky} (below) with $a=0$ and $k=d$} \\
&< 4/3\cdot2^d/d. &&\text{\cref{mu} (below)}
\end{align*}
	
For each threshold weighting $\Delta \in \tcl(Q_d)$, it will be convenient in the following to generalize $\kappa_\Delta$ to subgraphs $H\subseteq Q_d$ by $\kappa_\Delta(H) = \min_{S\in\seq(H)}\max_{F\in S}\Delta(F)$. (This is a nontrivial generalization of the definition of $\kappa_\Delta$, because if $\Delta(H) > 0$ then the restriction of $\Delta$ to subgraphs of $H$ is not a threshold weighting on $H$.) Also if $H$ is a single-vertex graph or the empty graph then let $\kappa_\Delta(H)=0$.
	
\begin{lem}\label{tricky}
	Let $0\le a\le 2^d$ and $0\le k\le d$ be such that $2^k$ divides $a$. Let $\Delta=(1,\beta)\in\tcl(Q_d)$ be such that $\beta(G(a))\ge\Bu(G(a))$ and $\beta(G(a+2^k)) \ge \Bu(G(a+2^k))$, and $\kappa_\Delta(G(a)) \le 2\mu$. Then $\kappa_\Delta(G(a+2^k)) \le 2\mu$.
\end{lem}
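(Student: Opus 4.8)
The plan is to induct on $k$, building a union sequence for $G(a+2^k)$ out of union sequences for the two dyadic halves $G(a+2^{k-1})$ and the ``shifted'' copy of $Q_{k-1}$ sitting on vertices $\{a+2^{k-1},\dotsc,a+2^k-1\}$. For the base case $k=0$, the graph $G(a+1)$ is obtained from $G(a)$ by adding a single vertex together with its at-most-$d$ edges back into $G(a)$; appending these edges one at a time to an optimal union sequence for $G(a)$ (which has cost $\le 2\mu$ by hypothesis) raises the cost by at most the contribution of those added edges. Here the hypothesis $\beta(G(a))\ge\Bu(G(a))$ is what lets us charge that increase against $\Du$: since $\Du=(1,2/d)$ is uniform on edges and $\Delta=(1,\beta)$ agrees with it on vertices, the inequality $\beta(G(a))\ge\Bu(G(a))$ means $\Delta(G(a))\le\Du(G(a))\le\mu$, and adding one vertex plus its $\le d$ edges (each of $\Delta$-weight $\le 2$, but really we want to bound $\Delta$ of the new graph, not just the increment) keeps us under $\mu + O(1)$, comfortably below $2\mu$ once $\mu$ is large; I would want to double-check the $O(1)$ slack is actually absorbed — see the obstacle below.

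For the inductive step, write $G(a+2^k)=G(a+2^{k-1})\cup B$ where $B=Q_d[a+2^{k-1},\dotsc,a+2^k-1]$ is isomorphic to $Q_{k-1}$ (because $2^{k-1}\mid a$, this block is a clean sub-hypercube, and as noted in the remark $\Du(B)\le\mu$, hence $\Delta(B)\le\mu$ provided $\beta$ dominates $\Bu$ on $B$). The union sequence for $G(a+2^k)$ is: (i) an optimal union sequence for $G(a+2^{k-1})$, obtained from the $k{-}1$ case of the lemma — this needs $\beta(G(a))\ge\Bu(G(a))$ and $\beta(G(a+2^{k-1}))\ge\Bu(G(a+2^{k-1}))$; (ii) a union sequence for $B$ of cost $\le\mu$ (here $B\cong Q_{k-1}\cong G(2^{k-1})$ and one uses the recursive/tree construction from the remark, all of whose intermediate graphs are sub-hypercubes, each of $\Du$-weight $\le\mu$, wait — that gives $\le 2\mu$ for $B$ itself, matching the $2\mu$ budget); (iii) the union $G(a+2^{k-1})\cup B$ followed by appending, one at a time, the edges of $G(a+2^k)$ crossing between the two halves. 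The crossing edges form a perfect matching (one coordinate flip), so after the union the running graph is $G(a+2^{k-1})\cup B$ plus a growing set of matching edges; each such intermediate graph $H$ satisfies $\Delta(H)\le\Delta(G(a+2^k))\le\Du(G(a+2^k))\le\mu$ as long as $\beta$ dominates $\Bu$ on all of $G(a+2^k)$, which follows from the two hypotheses $\beta(G(a))\ge\Bu(G(a))$ and $\beta(G(a+2^k))\ge\Bu(G(a+2^k))$ together with additivity — actually one must verify $\beta$ dominates $\Bu$ on each prefix $G(\cdot)$ appearing, and this is exactly the content of an auxiliary monotonicity fact that I expect needs its own short argument (possibly a separate averaging lemma, analogous to the one in \cref{kcol-clique}).

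The main obstacle is bookkeeping the constant $2$ exactly: every intermediate graph in the assembled sequence must have $\Delta$-value $\le 2\mu$, and the two genuinely ``expensive'' families are the sub-hypercube $B$ (handled by the depth-first tree construction, whose worst node is a union of two sub-cubes plus a matching — cost $\le 2\mu$) and the recursively-built prefix $G(a+2^{k-1})$ (cost $\le 2\mu$ by induction). The delicate point is the final gluing stage (iii): the graph $G(a+2^{k-1})\cup B$ together with a partial matching could a priori exceed $2\mu$ if $\Delta$ is not controlled there, so the whole argument hinges on showing $\Delta(G(a+2^k))\le\mu$ — i.e.\ that $\beta$ dominating $\Bu$ at the two endpoints $a$ and $a+2^k$ forces $\Delta\le\Du\le\mu$ on the relevant graph — and on a clean induction hypothesis that delivers a union sequence for $G(a+2^{k-1})$ all of whose graphs $H$ satisfy $\Delta(H)\le 2\mu$ \emph{and} the prefix-domination property $\beta(G(a'))\ge\Bu(G(a'))$ for the dyadic sub-block boundaries $a'$ that recursion will need. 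I would strengthen the inductive statement to also assert this prefix-domination at all dyadic points in $[a,a+2^k]$ (proved once and for all by the same uniform-averaging trick used in \cref{kcol-clique}, choosing sub-blocks so that $\beta$ stays above $\Bu$), so that the recursion in (i) and the tree construction in (ii) both have the hypotheses they need.
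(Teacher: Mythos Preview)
Your plan has the right skeleton (induction on $k$, averaging to propagate $\beta\ge\Bu$, glue two pieces and add the cut edges), but it is missing the key structural idea of the paper's proof and contains a backwards inequality.

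\textbf{The missing case split.} The paper does \emph{not} always recurse through the midpoint $a+2^{k-1}$. It sets $B=G(a+2^k)-G(a)\cong Q_k$ (the whole block, not $Q_{k-1}$) and splits into two cases on whether $\beta(B)\ge\Bu(B)$. In Case~1 ($\beta(B)\ge\Bu(B)$) the averaging is done \emph{inside $B$}: one finds a half $B(i,b)$ with $\beta(B(i,b))\ge\Bu(B(i,b))$, relabels so this half is ``first'', and applies induction twice inside $B$ to get $\kappa_\Delta(B)\le 2\mu$; then one glues $G(a)$ to $B$. In Case~2 ($\beta(B)<\Bu(B)$) one averages over $H(i,b)=Q_d[\{0,\dots,a-1\}\cup V(B(i,b))]$ and uses
\[
\Ex[\beta(H)]=\tfrac12\beta(G(a))+\tfrac12\beta(G(a+2^k))-\tfrac1{2k}\beta(B)
\]
to find a relabeled midpoint with $\beta\ge\Bu$; then induction goes $a\to a+2^{k-1}\to a+2^k$. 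The point is that the $-\tfrac1{2k}\beta(B)$ term has the \emph{wrong sign}, so this averaging only yields the desired inequality when $\beta(B)<\Bu(B)$. Your proposal to establish $\beta(G(a'))\ge\Bu(G(a'))$ at all dyadic $a'$ ``once and for all'' by averaging cannot work in general: when $\beta(B)\ge\Bu(B)$, no averaging over coordinate-halves of the full prefix gives you a good midpoint, and you must instead switch to the $G(a)\cup B$ decomposition. The case split is not optional bookkeeping; it is the idea.

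\textbf{The gluing bound is reversed.} You write that each intermediate $H$ (two halves plus some of the matching edges) satisfies $\Delta(H)\le\Delta(G(a+2^k))$. Since adding edges can only \emph{decrease} $\Delta$, the correct inequality is $\Delta(H)\ge\Delta(G(a+2^k))$, so this line of reasoning gives nothing. The paper instead bounds the worst intermediate value by $\Delta(G(a))+\Delta(B)$, and then uses $\beta(G(a))\ge\Bu(G(a))$ and (in Case~1) $\beta(B)\ge\Bu(B)$ to get $\Delta(G(a))\le\mu$ and $\Delta(B)\le\mu$ separately. With your midpoint split $G(a+2^{k-1})\cup B'$ you would need $\beta(G(a+2^{k-1}))\ge\Bu(G(a+2^{k-1}))$ \emph{and} $\beta(B')\ge\Bu(B')$, neither of which is available, which is exactly why the paper's decomposition is $G(a)\cup B$ rather than yours.
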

	
\begin{proof}		
	The proof is by induction on $k$. The inductive hypothesis will actually be (slightly) stronger in the following way: given a labeling of the vertices of $Q_d$ with the elements of $\bits^d$, the labels can be rearranged according to any of the $2^d d!$ isomorphisms of $Q_d$, and the inductive hypothesis is required to hold with respect to any such labeling. (The value of $\mu$ doesn't depend on the labeling used in its definition because of the symmetry of $\Bu$.)
		
	Let $B$ = $G(a+2^k)-G(a)$. Since $2^k$ divides $a$, it follows that $B$ is isomorphic to $Q_k$. In the inductive step we handle separately the cases where $\beta(B)\ge\Bu(B)$ and $\beta(B)<\Bu(B)$. The base case is a special case of the former because if $B$ is a single vertex then $\beta(B)$ and $\Bu(B)$ are both zero.
		
	\textbf{Case 1: $\beta(B)\ge\Bu(B)$.} If $k=0$ then $\kappa_\Delta(B) = 0 \le 2\mu$; we now obtain the same result in the case where $k>0$. For $0\le i<k$ and $b\in\bits$ let $B(i,b) = B[v\in V(B)\mid v_i=b]$. Choose $\mathbf i \in \{0, \dotsc, k-1\}$ and $\mathbf b \in \bits$ independently and uniformly at random. By symmetry, each $e \in E(B)$ is in $B(\mathbf i,\mathbf b)$ with the same probability $p$. (Specifically, $p=(k-1)/2k$: For any edge $uv$, there is a unique index $i$ in which $u$ and $v$ differ. If $\mathbf i = i$ then exactly one of $u$ and $v$ is in $B(\mathbf i, \mathbf b)$; otherwise $uv$ is in $B(\mathbf i, \mathbf b)$ with probability 1/2 depending on $\mathbf b$.) By linearity of expectation,
	\begin{equation*}
	\Ex[\beta(B(\mathbf i,\mathbf b))] = \sum_{\mathclap{e \in E(B)}}P(e \in B(\mathbf i, \mathbf b))\beta(e) = p\beta(B).
	\end{equation*}
	Similarly, $\Ex[\Bu(B(\mathbf i,\mathbf b))] = p\Bu(B)$. By our assumption that $\beta(B) \ge \Bu(B)$,
	\begin{equation*}
	\Ex[\beta(B(\mathbf i,\mathbf b))] = p\beta(B) \ge p\Bu(B) = \Ex[\Bu(B(\mathbf i,\mathbf b))].
	\end{equation*}
	Therefore there exist fixed $i$ and $b$ such that $\beta(B(i,b)) \ge \Bu(B(i,b))$.
		
	Now our claim that $\kappa_\Delta(B) \le 2\mu$ follows from two applications of the inductive hypothesis. Since we required the inductive hypothesis to hold for all labelings of $Q_d$, we can assume without loss of generality that $i = k-1$ and $b = 0$. Ignoring $G(a)$, an application of the inductive hypothesis with $a^\prime = 0$ and $k^\prime = k-1$ reveals that $\kappa_\Delta(B(i,b)) \le 2\mu$, and then a second application of the inductive hypothesis with $a^{\prime\prime} = 2^{k-1}$ and $k^{\prime\prime} = k-1$ reveals that $\kappa_\Delta(B) \le 2\mu$.
		
	Let $S$ be an optimal (with respect to $\Delta$) union sequence for $G(a)$, followed by an optimal union sequence for $B$, followed by $G(a) \cup B, G(a)\cup B\cup e_1, \dotsc, G(a)\cup B\cup \{e_j\}$, where the $\{e_j\}$ are the edges between $G(a)$ and $B$ in $Q_d$.\footnote{It is also necessary to add each edge $e_j$ individually to the union sequence, but clearly $\Delta(e_j) \le 2 \le 2(2-2/d) = 2\Du(G(1)) \le 2\mu$ if $d\ge2$, and if $d=1$ then the lemma holds trivially.} (If $G(a)$ or $B$ lacks edges then omit certain graphs from this sequence.) Then,
	\begin{equation*}
	\max_{H\in S}\Delta(H) \le \max(\kappa_\Delta(G(a)), \kappa_\Delta(B), \Delta(G(a))+\Delta(B)).
	\end{equation*}
		
	We proceed to bound each of these three terms by $2\mu$, completing the proof. We have assumed that $\kappa_\Delta(G(a)) \le 2\mu$, and proved that $\kappa_\Delta(B) \le 2\mu$. We have also assumed that $\beta(G(a)) \ge \Bu(G(a))$, and since $\Delta$ and $\Du$ both evaluate to 1 on all vertices, it follows that $\Delta(G(a)) \le \Du(G(a)) \le \mu$ (with the last step following from the definition of $\mu$). Similarly, since $B$ is isomorphic to $G(2^k)$ it follows that $\Delta(B) \le \Du(B) \le \mu$. Therefore $\Delta(G(a)) + \Delta(B) \le 2\mu$.
		
	\textbf{Case 2: $\beta(B)<\Bu(B)$.} For $i<k$ and $b \in \bits$ let $H(i,b) = Q_d[0,\dotsc,a-1,V(B(i,b))]$ (where $B(i,b)$ is defined as above). Choose $\mathbf i<k$ and $\mathbf b \in \bits$ independently and uniformly at random. Note that $\beta(G(a+2^k)) = \beta(G(a)) + \beta(G(a),B) + \beta(B)$.\footnote{Recall from \cref{tp} that $\beta(A,B) \coloneqq \sum_{u\in V(A), v \in V(B)} \beta(uv)$.} By reasoning similar to that in the previous case (and applying our various assumptions),
	\begin{align*}
	\Ex[\beta(H(\mathbf i,\mathbf b))] &= \beta(G(a)) + \frac12\beta(G(a),B) + \frac{k-1}{2k}\beta(B)\\
	&= \frac12\beta(G(a)) + \frac12\beta(G(a+2^k)) - \frac1{2k}\beta(B) \\
	&> \frac12\Bu(G(a)) + \frac12\Bu(G(a+2^k)) - \frac1{2k}\Bu(B) \\
	&= \Ex[\Bu(H(\mathbf i,\mathbf b))].
	\end{align*}
	Therefore $\beta(H(i,b)) > \Bu(H(i,b))$ for some fixed $i$ and $b$.
		
	Assume without loss of generality that $i=k-1$ and $b=0$; then $H(i,b) = G(a+2^{k-1})$. Applying the inductive hypothesis with $a^\prime = a$ and $k^\prime = k-1$ reveals that $\kappa_\Delta(G(a+2^{k-1})) \le 2\mu$, and then applying the inductive hypothesis with $a^{\prime\prime} = a+2^{k-1}$ and $k^{\prime\prime} = k-1$ reveals that $\kappa_\Delta(G(a+2^k)) \le 2\mu$.
\end{proof}
	
\begin{lem}\label{mu}
	$\mu < 2/3 \cdot 2^d/d$.
\end{lem}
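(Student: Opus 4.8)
The plan is to recognize $\mu$ — up to the factor $1/d$ — as the largest edge boundary of an initial segment of the vertices of $Q_d$ in binary order, and then to bound that quantity by an induction on $d$ that looks two levels down at a time. For the reduction: since $Q_d$ is $d$-regular we have $\Du=(1,\Bu)=(1,2/d)$, so taking the Markov chain with $M_{u,v}=\Ind{uv\in E(Q_d)}/d$ in \cref{mc-eq}, and using that $G(a)=Q_d[0,\dotsc,a-1]$ is an induced subgraph, every term of \cref{mc-eq} corresponds to a boundary edge of $G(a)$ with coefficient $1/d$. Hence $\Du(G(a))=\frac1d\,\partial_d(a)$, where $\partial_d(a)$ is the number of edges of $Q_d$ with exactly one endpoint in $\{0,\dotsc,a-1\}$ (equivalently, $2e(G(a))+\partial_d(a)=da$). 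So it suffices to prove $M_d:=\max_{0\le a\le 2^d}\partial_d(a)<\tfrac23\cdot 2^d$, since $\mu=M_d/d$.

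Next I would record two elementary facts about $\partial_d$. \emph{Symmetry:} complementing all $d$ coordinates is an automorphism of $Q_d$ carrying $\{a,\dotsc,2^d-1\}$ onto $\{0,\dotsc,2^d-1-a\}$, and a vertex set and its complement have the same edge boundary; therefore $\partial_d(a)=\partial_d(2^d-a)$, and so $M_d=\max_{0\le a\le 2^{d-1}}\partial_d(a)$. \emph{Recursion:} split $Q_d$ along its top coordinate into two subcubes isomorphic to $Q_{d-1}$ joined by a perfect matching. For $0\le a\le 2^{d-1}$ the set $\{0,\dotsc,a-1\}$ lies in the bottom subcube, contributing $\partial_{d-1}(a)$ boundary edges inside that subcube and one matching edge from each of its $a$ vertices, so $\partial_d(a)=\partial_{d-1}(a)+a$. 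For $2^{d-1}\le a\le 2^d$ the set is the whole bottom subcube together with the first $a-2^{d-1}$ vertices of the top one, contributing $\partial_{d-1}(a-2^{d-1})$ boundary edges inside the top subcube and $2^{d-1}-(a-2^{d-1})=2^d-a$ matching edges, so $\partial_d(a)=\partial_{d-1}(a-2^{d-1})+(2^d-a)$. (One checks the two formulas agree at $a=2^{d-1}$.)

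Now I would combine the facts. They give $M_d=\max_{0\le a\le 2^{d-1}}\bigl(\partial_{d-1}(a)+a\bigr)$; split this range at $2^{d-2}$ and apply the recursion once more at level $d-1$. For $a\le 2^{d-2}$, $\partial_{d-1}(a)+a=\partial_{d-2}(a)+2a\le\bigl(\partial_{d-2}(a)+a\bigr)+2^{d-2}\le M_{d-1}+2^{d-2}$, using that $\max_{0\le a\le 2^{d-2}}\bigl(\partial_{d-2}(a)+a\bigr)=M_{d-1}$ (again by the two facts). For $2^{d-2}\le a\le 2^{d-1}$, writing $a=2^{d-2}+b$, the recursion at level $d-1$ yields $\partial_{d-1}(a)+a=\partial_{d-2}(b)+2^{d-1}\le M_{d-2}+2^{d-1}$. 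Hence $M_d\le\max\bigl(M_{d-1}+2^{d-2},\,M_{d-2}+2^{d-1}\bigr)$, and with the base cases $M_0=0$, $M_1=1$ an induction closes: assuming $M_{d-1}<\tfrac23 2^{d-1}$ and $M_{d-2}<\tfrac23 2^{d-2}$ gives $M_{d-1}+2^{d-2}<\tfrac{7}{12}2^d<\tfrac23 2^d$ and $M_{d-2}+2^{d-1}<\tfrac23 2^d$.

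The one delicate point — and the reason the induction must descend two levels rather than one — is that $\partial_{d-1}(a)$ is \emph{small} precisely when $a$ is near $0$ or $2^{d-1}$, so in $\partial_d(a)=\partial_{d-1}(a)+a$ one cannot maximize the two summands independently; a one-level induction only gives the useless $M_d\le M_{d-1}+2^{d-1}$. Isolating the large-$a$ regime (where $\partial$ is close to its maximum and the extra $+a$ can be partly absorbed against $2^d-a$) from the small-$a$ regime is exactly what produces the constant $\tfrac23$; the rest is routine bookkeeping with the two facts above. In fact the same recursion pins down $M_d=\floor{2^{d+1}/3}$ exactly, so the bound is tight up to an $O(1)$ additive term, consistent with the strict inequality in the statement.
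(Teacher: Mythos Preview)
Your proof is correct and follows essentially the same route as the paper: both reduce $\mu$ to the maximum edge boundary $M_d=\max_a e(G(a),Q_d-G(a))$ via \cref{mc-eq}, use the bit-flip symmetry to restrict to $a\le 2^{d-1}$, and then descend two levels to reach the recursion $M_d\le M_{d-2}+2^{d-1}$ on the range $a\in[2^{d-2},2^{d-1}]$. The only minor difference is in how the range $a\in[0,2^{d-2}]$ is handled: the paper compares $\partial_d(a)$ directly to $\partial_d(2^{d-1}-a)$ to show the maximum cannot lie there (hence obtaining the exact value $M_d=2^{d-1}+2^{d-3}+\dotsb$), whereas you bound it by $M_{d-1}+2^{d-2}$ and carry the extra term through the induction.
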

\begin{proof}
	For any $0 \le a \le 2^d$, it follows from \cref{mc-eq} that $\Du(G(a)) = e(G(a), Q_d - G(a))/d$, so it suffices to prove that $e(G(a), Q_d - G(a)) < 2^{d+1}/3$ for all $a$. Let $G(a,b) = Q_d[a,\dotsc,b-1]$. Since
	\begin{equation*}
	e(G(0,a), G(a,2^d)) = e(G(0,2^d-a), G(2^d-a,2^d))
	\end{equation*}
	(as can be seen by applying the automorphism $(x_1,\dotsc,x_d) \mapsto (1-x_1,\dotsc,1-x_d)$ to $Q_d$), we can restrict our search to $a \in [0,2^{d-1}]$. In that case,
	\begin{align*}
	e(G(0,a), G(a,2^d)) &= e(G(0,a), G(a,2^{d-1})) + e(G(0,a), G(2^{d-1},2^d)) \\
	&= e(G(0,a), G(a,2^{d-1})) + v(G(0,a)) \\
	&= e(G(0,a), G(a,2^{d-1})) + a.
	\end{align*}
	By the same reasoning,
	\begin{equation*}
	e(G(0,2^{d-1}-a), G(2^{d-1}-a,2^d)) = e(G(0,2^{d-1}-a), G(2^{d-1}-a,2^{d-1})) + 2^{d-1}-a.
	\end{equation*}
	Since $e(G(0,a), G(a,2^{d-1})) = e(G(0,2^{d-1}-a), G(2^{d-1}-a,2^{d-1}))$ (consider a similar automorphism), it follows that if $a < 2^{d-2}$ then
	\begin{equation*}
	e(G(0,2^{d-1}-a),G(2^{d-1}-a,2^d)) - e(G(0,a),G(a,2^d)) = 2^{d-1} - 2a > 0.
	\end{equation*}
	Therefore we can restrict our search to $a \in [2^{d-2},2^{d-1}]$, in which case
	\begin{align*}
	e(G(0,a),G(a,2^d)) &= e(G(0,a), G(a,2^{d-1})) + a \\
	&= e(G(2^{d-2},a), G(a,2^{d-1})) + e(G(0,2^{d-2}), G(a,2^{d-1})) + a \\
	&= e(G(2^{d-2},a), G(a,2^{d-1})) + 2^{d-1}.
	\end{align*}
	By induction it follows that $\mu = 2^{d-1} + 2^{d-3} + 2^{d-5} + \dotsb + \text{(2 or 1)} < 2^{d+1}/3$.
\end{proof}

\begin{remark*}
	Harper~\cite{Har04} proved that out of all subgraphs of $Q_d$ with $a$ vertices, $G(a) = G(0,a)$ has the fewest outgoing edges~\cite{Fil15}.
\end{remark*}

\subsection{Proof that \texorpdfstring{$\kcol\left(\ham q d\right)$}{kappa(Kqd)} is \texorpdfstring{$\Omega(q^d/d)$}{Omega(qd/d)} and \texorpdfstring{$\kmarx\left(\ham q d\right)$}{emb(Kqd)} is \texorpdfstring{$\Theta(q^d/d)$}{Theta(qd/d)}}\label{omg-qdd}

Alon and Marx~\cite[Theorem 4.3]{AM11} proved that $\kmarx\left(\ham q d\right)$ is $\Omega(q^d/d)$, and it follows from \cref{secondary} that $\kmarx\left(\ham q d\right) \le O\left(\kcol\left(\ham q d\right)\right) \le O(q^d/d)$. Therefore $\kmarx\left(\ham q d\right)$ is $\Theta(q^d/d)$.

It is implicit in the above argument that $\kcol\left(\ham q d\right) \ge \Omega\left(\kmarx\left(\ham q d\right)\right) \ge \Omega(q^d/d)$; we now present an alternate proof that $\kcol\left(\ham q d\right)$ is $\Omega(q^d/d)$ based on edge expansion. Since $\ham q d$ is $d(q-1)$-regular, by \cref{kcol-cheeger} it suffices to prove that $1-\lambda_2\left(\ham q d\right)/d(q-1)$ is $\Omega(1/d)$. We use the following well-known fact, where graphs are identified with their adjacency matrices:
\begin{fact}\label{eig-prod}
	The eigenvalues of $G\cart H$ are $\lambda_i(G)+\lambda_j(H)$ for $i\in[v(G)], j\in[v(H)]$.
\end{fact}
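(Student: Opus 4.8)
The plan is to identify the adjacency matrix of the Cartesian product with a Kronecker sum and then build an eigenbasis out of tensor products of eigenvectors of the factors. Concretely, index the vertex set $V(G)\cart V(H) = V(G)\times V(H)$ in the natural way, and write $A$, $A_G$, $A_H$ for the adjacency matrices of $G\cart H$, $G$, and $H$. From the definition of the Cartesian product, two vertices $(u_1,v_1)$ and $(u_2,v_2)$ are adjacent iff either $u_1=u_2$ and $v_1v_2\in E(H)$, or $v_1=v_2$ and $u_1u_2\in E(G)$; hence $A_{(u_1,v_1),(u_2,v_2)} = (A_G)_{u_1,u_2}\,[v_1=v_2] + [u_1=u_2]\,(A_H)_{v_1,v_2}$, which is exactly the matrix entry of $A_G\otimes I_{v(H)} + I_{v(G)}\otimes A_H$ (the Kronecker sum), so $A = A_G\otimes I_{v(H)} + I_{v(G)}\otimes A_H$.

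Next I would invoke the spectral theorem: since $G$ and $H$ are undirected, $A_G$ and $A_H$ are real symmetric, so there are orthonormal bases $x_1,\dots,x_{v(G)}$ of $\R^{V(G)}$ and $y_1,\dots,y_{v(H)}$ of $\R^{V(H)}$ with $A_G x_i = \lambda_i(G) x_i$ and $A_H y_j = \lambda_j(H) y_j$. For each pair $(i,j)$, a direct computation with the mixed-product property of the Kronecker product gives
\begin{equation*}
A (x_i\otimes y_j) = (A_G x_i)\otimes y_j + x_i\otimes(A_H y_j) = \bigl(\lambda_i(G)+\lambda_j(H)\bigr)(x_i\otimes y_j),
\end{equation*}
so $x_i\otimes y_j$ is an eigenvector of $A$ with eigenvalue $\lambda_i(G)+\lambda_j(H)$.

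Finally, the $v(G)\,v(H)$ vectors $\{x_i\otimes y_j\}$ are orthonormal (inner products factor as $\langle x_i,x_{i'}\rangle\langle y_j,y_{j'}\rangle$), hence form an orthonormal basis of $\R^{V(G)\times V(H)}$; since $A$ acts on a space of dimension $v(G)\,v(H)$, this basis accounts for the full multiset of eigenvalues, and therefore the eigenvalues of $G\cart H$ are precisely $\lambda_i(G)+\lambda_j(H)$ for $i\in[v(G)]$, $j\in[v(H)]$, counted with multiplicity. There is no real obstacle here; the only point requiring a moment's care is bookkeeping the identification of the index set $V(G)\times V(H)$ with the tensor-product basis so that the Kronecker-sum identity for $A$ holds on the nose.
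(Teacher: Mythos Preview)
Your proof is correct and follows essentially the same approach as the paper: identify the adjacency matrix of $G\cart H$ with the Kronecker sum $A_G\otimes I + I\otimes A_H$, exhibit the tensor products of eigenvectors as eigenvectors with eigenvalues $\lambda_i(G)+\lambda_j(H)$, and use orthogonality together with a dimension count to conclude these exhaust the spectrum. The only difference is that you spell out the derivation of the Kronecker-sum identity and the mixed-product computation in slightly more detail than the paper does.
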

\begin{proof}
	Observe that $G\cart H = G \otimes I + I \otimes H$, where the symbols $\otimes$ and $I$ denote the tensor product and the identity matrix respectively. Let $u_i$ (resp.\ $w_i$) be the $i$'th eigenvector of $G$ (resp.\ $H$); clearly $u_i\otimes w_j$ is an eigenvector of $G\cart H$ with eigenvalue $\lambda_i(G)+\lambda_j(H)$. Since a real symmetric matrix (in particular $G$ or $H$) has an orthogonal eigenbasis, it follows that the $u_i \otimes w_j$ are also orthogonal. Since $v(G\cart H) = v(G)v(H)$, there are no other eigenvalues of $G\cart H$.
\end{proof}

Since $\lambda_i(K_q)$ equals $q-1$ if $i=1$ and $-1$ otherwise, repeated application of \cref{eig-prod} reveals that $\lambda_2\left(\ham q d\right) = (q-1)(d-1)-1 = d(q-1)-q$, so $1-\lambda_2\left(\ham q d\right)/d(q-1) = q/(q-1)d$, as desired.

\begin{remark*}
	\cref{cheeger} is an equality in the case of hypercubes (see e.g.\ \cite{HLW06}): let $i\in[d]$ and define a cut by partitioning the vertices according to the values of their $i$'th coordinates. So for hypercubes, all slack in the application of \cref{kcol-cheeger} comes from \cref{kcol-expansion}.
\end{remark*}

\subsection{Proof that \texorpdfstring{$\tw\left(\ham q d\right)$}{tw(Kqd)} is \texorpdfstring{$\Theta\left(q^d/\sqrt d\right)$}{Theta(qd/root(d))}, Summarized}\label{tw-qdd}

(See \cite{CK06} for the full proof.) The proof that $\tw\left(\ham q d\right)$ is $O\left(q^d\big/\sqrt d\right)$ reduces to the case $q=2$ by reasoning analogous to that in the beginning of \cref{o-qdd-even}. For $k\in[d]$ let $U_k$ be the set of vertices of $Q_d$ with exactly $k$ or $k-1$ ones. The path $(U_1, \dotsc, U_d)$ is a tree decomposition of $Q_d$ with width approximately $2\binom d {d/2}$, and by Stirling's approximation this is $\Theta\left(2^d\big/\sqrt d\right)$.\footnote{Compared to the tree decomposition from \cite{CK06}, this one is a simpler variant whose width is larger by up to a constant factor.}

For a graph $G$ let $\phi(G)$ be the minimum over all $U \subseteq V(G), v(G)/4 \le |U| \le v(G)/2$ of the number of vertices in $V(G) - U$ with at least one neighbor in $U$. From a result of Robertson and Seymour~\cite{RS86} it follows that $\tw(G) \ge \phi(G) - 1$, and from a result of Harper~\cite{Har99} it follows that $\phi\left(\ham q d\right)$ is $\Omega\left(q^d\big/\sqrt d\right)$. (Also note the parallels between $\tw(G) \ge \phi(G) - 1$ and \cref{kcol-expansion}; interestingly, we've sign that both are tight to within a constant factor in the case of $\ham q d$.)

\section{\texorpdfstring{\aczero}{AC0} Upper Bound}\label{sec-5}
An \aczero circuit is a constant-depth circuit with unbounded-fanin $\textsf{AND}$ and $\textsf{OR}$ gates and $\textsf{NOT}$ gates. Fix a graph $G$ and threshold weighting $\Delta \in \tcl(G)$ for the remainder of this section. We prove the following, which is a more precise statement of \cref{asdf-aczero}:
\begin{thm}\label{circuit-main}
	There exists an \aczero circuit with $n^{\kappa_\Delta(G) + c}$ wires that solves $\subcol(\rg X \Delta n)$ with probability $1-n^{-\omega(1)}$, where $c>0$ is an absolute constant.
\end{thm}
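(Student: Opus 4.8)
The plan is to follow the $\tilde O(n^{\kappa_\Delta(G)})$-time algorithm of \cref{alg-main} — in particular the optimized version that uses the sort-merge join — and implement it as an \aczero circuit. The only step that is obviously not in \aczero is the sorting of the sets $\colt A X$ and $\colt B X$, which by H\aa stad's theorem~\cite{Has86} cannot be done by polynomial-size constant-depth circuits. So the core of the argument is to replace the sort with something \aczero-friendly that still lets us compute the natural join $\colt H X = \colt A X \bowtie \colt B X$ in $\tilde O(n^{\kappa_\Delta(G)})$ wires. The idea, as foreshadowed at the start of the section, is \emph{hashing}: instead of sorting each list by its projection onto $[n]^{V(A\cap B)}$, hash the projection values into a table of $\tilde O(n^{\Delta(A\cap B)})$ buckets, so that each bucket contains $\tilde O(1)$ entries a.a.s., and then join within each bucket by brute force. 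Crucially, since $\Delta(A\cap B) \le \Delta(A), \Delta(B) \le \kappa_\Delta(G)$ (as $A$ and $B$ are earlier graphs in an optimal union sequence, and $A\cap B \subseteq A$), the table has at most $n^{\kappa_\Delta(G)}$ buckets and the whole join stays within the wire budget.

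The key steps, in order, are as follows. First, set up the high-probability structural facts: using the concentration bounds promised for \cref{good-subsection} (the tighter $P(|\colt H X| > \tilde O(n^{\Delta(H)})) \le n^{-\omega(1)}$), show that for every $H$ in the fixed optimal union sequence $S$, simultaneously, $|\colt H X| \le \tilde O(n^{\Delta(H)})$ and moreover each projection fiber is small — i.e.\ for fixed $A, B \in S$ and each value $w \in [n]^{V(A\cap B)}$, the number of $\mathcal A \in \colt A X$ projecting to $w$ is $\tilde O(n^{\Delta(A) - \Delta(A\cap B)})$, and likewise for $B$. (This ``fiber'' bound is the analogue of the row/column sparsity that makes the join cheap; it should again follow from a first-moment plus concentration-of-measure argument about subgraph counts in $\rg X \Delta n$.) Second, describe the circuit layout: $\colt H X$ for each edge $H$ is read directly off the input; for $H = A \cup B$, allocate a three-dimensional array of ``cells'' indexed by (bucket, slot-in-$A$, slot-in-$B$) where the number of buckets is $n^{\Delta(A\cap B)}$ rounded up and the number of slots is the $\tilde O(\cdot)$ fiber bound, and in each cell put a constant-depth gadget that checks whether the corresponding pair $(\mathcal A, \mathcal B)$ has matching projection and, if so, outputs the tuple $\mathcal A \cup \mathcal B$. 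Because the number of cells is $\tilde O(n^{\Delta(A)+\Delta(B)-\Delta(A\cap B)}) = \tilde O(n^{\Delta(A\cup B)})= \tilde O(n^{\Delta(H)}) \le \tilde O(n^{\kappa_\Delta(G)})$ (using the union-difference identity $\Delta(A) + \Delta(B) = \Delta(A\cup B) + \Delta(A\cap B)$, which holds since $\alpha,\beta$ are additive), the total wire count over all of $S$ is $|S|\cdot\tilde O(n^{\kappa_\Delta(G)}) = n^{\kappa_\Delta(G)+c}$ for a suitable absolute constant $c$ (the $\tilde O$ polylog factors and the $G$-dependent constants are absorbed into $c$ via $n^c \ge \log^{O(1)} n$). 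Third, handle the representation issue: each $\mathcal A \in \colt A X$ is a tuple in $\prod_{u\in V(A)}[n^{\alpha(u)}]$, so the hash function must map such tuples (restricted to coordinates in $V(A\cap B)$) to buckets; the natural choice — interpret the projection as an integer in base $n$ and reduce — is computable in \aczero since addition and iterated addition of a bounded number of $O(\log n)$-bit numbers is in \aczero. Finally, chain the gadgets: feed the output array of the $H$-gadget as the input list for any later graph in $S$ that is a union involving $H$, and at the end test whether the array for $G$ is all-empty, which is a single big \textsf{OR}.

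The main obstacle I expect is \textbf{making the fiber bound hold uniformly with probability $1 - n^{-\omega(1)}$ over all buckets of all pairs $(A,B) \in S$ simultaneously}, and in particular ruling out a ``heavy bucket'' — a projection value $w$ with anomalously many preimages in $\colt A X$. A naive union bound over the $\le n^{\Delta(A\cap B)}$ buckets costs a polynomial factor, so the per-bucket tail must be not just $o(1)$ but $n^{-\omega(1)}$ even after that union bound; this is exactly the kind of statement the paper defers to \cref{good-subsection}, and the proof will presumably reduce the count of $A$-colored subgraphs through a fixed $A\cap B$-colored subgraph to a sum of independent (or negatively correlated) indicator variables and apply a Chernoff/Janson-type bound, using that $\Delta(A) - \Delta(A\cap B) = \Delta$ evaluated on the ``new'' part is nonnegative and that $n^{\Delta(\cdot)}$ grows polynomially. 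A secondary subtlety is that the circuit must have a \emph{fixed} size independent of the random input, so the array dimensions are worst-case bounds, and one must argue that \emph{when} the structural event fails the circuit may output garbage — which is fine, since we only claim correctness with probability $1 - n^{-\omega(1)}$. Depth is not an issue: the number of composition stages is $|S|$, which depends only on $G$, and each stage is constant depth, so the total depth is $O_G(1)$ as required for \aczero.
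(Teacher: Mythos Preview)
Your hash-join plan has a genuine gap: the fiber bound you rely on is false. You claim that for each $w\in[n]^{V(A\cap B)}$ the number of $\mathcal A\in\colt A X$ projecting to $w$ is $\tilde O(n^{\Delta(A)-\Delta(A\cap B)})$, and you justify the bucket count by ``$\Delta(A\cap B)\le\Delta(A)$ since $A\cap B\subseteq A$''. But $\Delta$ is \emph{not} monotone (e.g.\ $\Delta(G)=0$ while $\Delta(H)>0$ for proper $H\subseteq G$), so neither inequality need hold. The correct extension count is governed by $\dst_A(A\cap B)=\min_{A\cap B\subseteq H\subseteq A}\Delta(H)$, not by $\Delta(A\cap B)$: whenever $A\cap B$ fails to be an $A$-base (i.e.\ some $H\in(A\cap B,A]$ has $\Delta(H)<\Delta(A\cap B)$), most projection values $w$ have \emph{zero} $A$-extensions while a few have $\Theta(n^{\Delta(A)-\dst_A(A\cap B)})\gg n^{\Delta(A)-\Delta(A\cap B)}$ of them. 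This is not a concentration failure---the heavy fibers are there in expectation---so no Chernoff/Janson argument will save the uniform-slot allocation. If you plug the correct worst-case fiber bound $F_A=\tilde O(n^{\Delta(A)-\dst_A(A\cap B)})$ (and likewise $F_B$) into your three-dimensional array, the cell count becomes $\tilde O(n^{\Delta(A\cap B)+\Delta(A)-\dst_A(A\cap B)+\Delta(B)-\dst_B(A\cap B)})$; using submodularity one gets $\dst_A(A\cap B)+\dst_B(A\cap B)\ge\Delta(A\cap B)$, but this only yields an exponent $\le\Delta(A)+\Delta(B)\le 2\kappa_\Delta(G)$, i.e.\ you recover precisely the $n^{2\kappa_\Delta(G)+O(1)}$ bound of \cite{LRR17} and no better.

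The paper's fix is to abandon the flat ``bucket by projection onto $V(A\cap B)$'' layout and instead store $\colt H X$ as a depth-$v(H)$ tree whose fan-out at depth $i$ is $\tilde O(n^{\phi_i})$ with $\phi_i=\dst_H(\pi^1\cup\dotsb\cup\pi^{i+1})-\dst_H(\pi^1\cup\dotsb\cup\pi^{i})$. This is exactly the refinement your scheme lacks: it buckets one vertex at a time, with the slot count at each level set by the \emph{correct} ($\dst$-based) bound that \cref{good-whp} guarantees uniformly, so the total tree size telescopes to $\tilde O(n^{\Delta(H)})$ (\cref{tree-size}) and the non-uniformity of fibers is absorbed level by level. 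The join then becomes a tree-merge (\cref{merge}) rather than a flat hash-join, and an \aczero ``resort'' primitive (\cref{sort}) is needed to permute the vertex order of the tree so that $V(A\cap B)$ appears first in both $T(A,\pi)$ and $T(B,\pi')$; both gadgets use the random-hashing lemma (\cref{hash}) and cost $\tilde O(n^{\max(\Delta(A),\Delta(B))+O(1)})$ wires, giving the claimed $n^{\kappa_\Delta(G)+c}$.
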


Since in any circuit the number of gates is at most one plus the number of wires, the circuit from \cref{circuit-main} has size $n^{\kappa_\Delta(G)+O(1)} \le n^{\kcol(G)+O(1)}$. (In this discussion, all $\pm O(1)$ terms in an exponent are independent of $G$.) For comparison, it was proved in \cite{LRR17} (building on a line of previous work~\cite{Ros08, Ama10, Ros10, NW11}) that the average-case \aczero complexity of $\subcol(\rg X \Delta n)$ is between $n^{\kappa_\Delta(G)-o(1)}$ and $n^{2\kappa_\Delta(G)+O(1)}$. Another related result, regarding the uncolored $k$-clique problem, is that the average-case \aczero complexity of $\subuncol[K_k]\left(\ER n {n^{-2/(k-1)}}\right)$ is at most $n^{k/4+O(1)}$~\cite{Ama10, Ros14} ($= n^{\kcol(K_k)\pm O(1)}$ by \cref{kcol-clique}). See \cite{Ros18} for a survey of the average-case circuit complexity of subgraph isomorphism more generally.

One challenge to implementing the algorithm behind \cref{alg-main} in \aczero is that sorting cannot be done in (polynomial-size) \aczero~\cite{Has86}. The $n^{2\kappa_\Delta(G)+O(1)}$-size circuit from \cite{LRR17} computes $\colt {A\cup B} X$ by finding the relevant pairs in $\colt A X \times \colt B X$ by brute force with $\tilde O(|\colt A X| \cdot |\colt B X|)$ gates. Our circuit differs in that we represent $\colt H X$ as a depth-$v(H)$ tree, where the non-root vertices are assigned labels in $[n]$, and the (sequences of labels along the) paths from the root to the leaves correspond to the elements of $\colt H X$. This will allow us to compute $\colt {A \cup B} X$ with high probability given $\colt A X$ and $\colt B X$, on a circuit of size nearly linear in $|\colt A X| + |\colt B X|$. A key fact in our construction is that \aczero circuits can (with high probability) convert between representations of $\colt H X$ corresponding to different orderings of $V(H)$.

Our construction requires fairly precise estimates for how many children to assign each node. Luckily this number is highly concentrated around its mean if the input graph is $\rg X \Delta n$. This result will follow from the concentration inequality below, whose statement requires several definitions:

\begin{dfn}
	Let $X$ be in the support of $\rg X \Delta n$, and let $U\subseteq G$ be an arbitrary graph (which we think of as a ``universe"). Let $\colt U n$ be the set of all possible elements of $\colt U {\rg X \Delta n}$; note that this can be identified with $\prod_{v \in V(U)} [n^{\alpha(v)}]$. If $A\subseteq U$ and $\mathcal A \in \colt A n$ then let $\mathcal A$ \emph{extend to $U$ in $X$} if there exists a graph $\mathcal U \in \colt U X$ (called a \emph{$U$-extension of $\mathcal A$}) such that $\mathcal A \subseteq \mathcal U$. (In context, $X$ or $\mathbf X$ will be implicit.) Equivalently, $\mathcal A$ could be required to be in $\colt A X$ rather than $\colt A n$ in the latter definition.
	
	Let $\dst_U(A) = \min_{A\subseteq H\subseteq U}\Delta(H)$. Let $X$ be \emph{good} if for all graphs $U\subseteq G$ and $A\subseteq U$, and for all $\mathcal A \in \colt A n$ and vertices $v\in V(U)-V(A)$, there are $\tilde O\left(n^{\dst_U(A\cup v)-\dst_U(A)}\right)$ values of $i\in[n^{\alpha(v)}]$ such that $\mathcal A \cup v_i$ extends to $U$. (Recall our unconventional definition of $\tilde O(\cdot)$ from \cref{sec-2}, e.g.\ $\tilde O(1)$ denotes $\log^{O(1)}n$.) Finally, let an event occur \emph{with high probability (w.h.p.)}\ if it occurs with probability $1-n^{-\omega(1)}$.
\end{dfn}

We prove the following:
\begin{thm}\label{good-whp}
	The graph $\rg X \Delta n$ is good w.h.p.
\end{thm}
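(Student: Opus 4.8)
The plan is to prove a one‑sided tail bound separately for each ``extension instance'' and then union‑bound. An instance is a tuple $(U,A,\mathcal A,v)$ with $A\subseteq U\subseteq G$, $\mathcal A\in\colt A n$ and $v\in V(U)-V(A)$; write $N=N_{(U,A,\mathcal A,v)}$ for the number of $i\in[n^{\alpha(v)}]$ with $\mathcal A\cup v_i$ extending to $U$ in $\rg X\Delta n$, and $M=\dst_U(A\cup v)-\dst_U(A)$. I would show $P\!\bigl(N>\tilde O(n^M)\bigr)\le n^{-\omega(1)}$ for each fixed instance; since the only super‑constant parameter is $\mathcal A$ and $|\colt A n|=n^{\alpha(A)}$, there are $n^{O(1)}$ instances, so a union bound finishes. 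First a reduction: if $H^*\in[A\cup v,U]$ minimizes $\Delta$, then restricting any $U$‑copy to $H^*$ gives $N_{(U,A,\mathcal A,v)}\le N_{(H^*,A,\mathcal A,v)}$, while $\dst_{H^*}(A\cup v)=\Delta(H^*)=\dst_U(A\cup v)$ and $\dst_{H^*}(A)\ge\dst_U(A)$, so the target exponent does not grow. Hence it suffices to treat instances with $\dst_U(A\cup v)=\Delta(U)$.

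The core idea is that the edges incident to vertices of color $v$ carry exactly the randomness needed to concentrate $N$, while a crude first moment handles everything else. I would induct on the \emph{gap} $g=v(U)-v(A)$, proving that $\rg X\Delta n$ is good for all instances of gap at most $g$ w.h.p.; since $g\le v(G)$ this gives the theorem, and the base case $g=0$ is vacuous. In the inductive step, condition on $\mathbf X_0$, the graph obtained from $\rg X\Delta n$ by deleting all vertices of color $v$. Because $G$ is simple there are no edges between two color‑$v$ vertices, so conditionally on $\mathbf X_0$ the events ``$\mathcal A\cup v_i$ extends to $U$'' are mutually independent, each a function of the edges out of $v_i$ alone; a Chernoff/Poisson‑tail estimate then reduces the goal to $\Ex[N\mid\mathbf X_0]\le\tilde O(n^M)$ w.h.p.\ over $\mathbf X_0$. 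Now $\mathcal A\cup v_i$ extends to $U$ iff there is a \emph{frame}, i.e.\ a copy of $U-v$ in $\mathbf X_0$ restricting to $\mathcal A$ on $V(A)$, joined to $v_i$ by the $U$‑edges at $v$; for a fixed frame this last event has probability $n^{-\gamma}$ with $\gamma=\beta(U)-\beta(U-v)$, so (reasoning as in \cref{ex-sub-count}) $\Ex[N\mid\mathbf X_0]\le n^{\alpha(v)}\cdot(\#\text{frames})\cdot n^{-\gamma}$. Telescoping one‑vertex goodness over the $\le g-1$ vertices of $V(U-v)-V(A)$ bounds $\#\text{frames}$ by $\tilde O\bigl(n^{\Delta(U-v)-\dst_{U-v}(A)}\bigr)$ w.h.p.\ by the inductive hypothesis (and this event is $\mathbf X_0$‑measurable). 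Since $\alpha(v)+\Delta(U-v)-\gamma=\Delta(U)$, and since $\dst_{U-v}(A)\ge\dst_U(A)$ and $\dst_U(A\cup v)=\Delta(U)$ by the reduction, this yields $\Ex[N\mid\mathbf X_0]\le\tilde O\bigl(n^{\Delta(U)-\dst_U(A)}\bigr)=\tilde O(n^M)$, as required.

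The main obstacle is making these exponents match: the naive estimate $\Ex[N]\le n^{\Delta(U)-\alpha(A)}$ from union‑bounding over all $U$‑copies is typically far larger than $n^M$, for two independent reasons that must both be fixed --- passing to the tight universe $H^*$ trades $\Delta(U)$ for $\dst_U(A\cup v)$, and bootstrapping from smaller gap trades (essentially) $\alpha(A)$ for $\dst_{U-v}(A)\ge\dst_U(A)$ --- while the observation that everything \emph{except} the color‑$v$ edges can be frozen is precisely what promotes the improved first moment to honest concentration rather than a bound in expectation. The remaining work is routine: the identity $\alpha(v)+\Delta(U-v)-\gamma=\Delta(U)$ follows from additivity of $\alpha$ over vertices and $\beta$ over edges; the product of $O(1)$ telescoping factors $\tilde O(\cdot)$ is again $\tilde O(\cdot)$; and a concrete Chernoff threshold suffices, e.g.\ $P(N\ge k\mid\mathbf X_0)\le\Ex[N\mid\mathbf X_0]^k/k!$ with $k=\tilde\Theta(\max(n^M,1))$.
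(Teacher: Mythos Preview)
Your argument is correct. The core concentration step---freeze all randomness except the edges incident to the color-$v$ vertices, observe that the indicators $\mathbf Z_i=\Ind{\mathcal A\cup v_i\text{ extends to }U}$ become independent Bernoullis, and apply the Chernoff-type \cref{tail}---is exactly the mechanism the paper uses inside \cref{base-extensions}. Where you differ is in the scaffolding around that step. The paper builds the closure operator $\Gamma_U(\cdot)$, proves it is well-defined via the lattice identity of \cref{Delta-sum}, then establishes two intermediate lemmas: \cref{atom-extensions} (when $\Gamma_U(A)=U$, any $\mathcal A$ has $\tilde O(1)$ extensions, proved by a separate maximal-vertex-disjoint-packing argument) and \cref{base-extensions} (when $A$ is a $U$-base). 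The final assembly routes through $\Gamma(A)$ and $\Gamma(A\cup v)$ and uses both lemmas. You instead induct directly on goodness: your reduction to a minimizer $H^*$ plays the role of $\Gamma(A\cup v)$ without needing its uniqueness, and the single inequality $\dst_{U-v}(A)\ge\dst_U(A)$ absorbs what the paper extracts from the $\Gamma(A)$ step together with \cref{atom-extensions}. Telescoping the one-vertex goodness bound over $V(U-v)\setminus V(A)$ then recovers the frame count that the paper obtains from the inductive hypothesis of \cref{base-extensions}. The upshot is that your route is shorter and avoids the vertex-disjoint argument of \cref{atom-extensions} entirely; the paper's route, on the other hand, isolates \cref{base-extensions} as a standalone statement (a tight bound on the number of $U$-extensions of any base), which it remarks is essentially sharp and which your proof yields only after-the-fact via telescoping.
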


Observe that this is a substantially stronger concentration bound than the application of Markov's Inequality in the proof of \cref{alg-main}. In \cref{good-subsection} we prove \cref{good-whp}, and then in \cref{circuit-subsection} we use this result to prove \cref{circuit-main}. Both proofs use the following concentration inequality, which is proved by a Chernoff bound:

\begin{lem}\label{tail}
	If $\mathbf S = \mathbf S(n)$ is a sum of independent Bernoulli random variables then w.h.p.\ $\mathbf S \le \max(\Ex[\mathbf S],1) \cdot \tilde O(1)$.
\end{lem}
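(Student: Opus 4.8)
The plan is to reduce to a standard multiplicative Chernoff bound and then choose the deviation parameter to be a suitable power of $\log n$, exploiting the fact that "w.h.p." only requires beating $n^{-\omega(1)}$, not an exponentially small probability. First I would write $\mathbf S = \sum_i \mathbf X_i$ with the $\mathbf X_i$ independent Bernoulli, and set $\mu = \Ex[\mathbf S]$. Recall the one-sided Chernoff bound: for any $t \ge 1$, $P(\mathbf S \ge t\mu) \le e^{-\mu(t\ln t - t + 1)} \le \left(e/t\right)^{t\mu}$, and more usefully for large $t$, $P(\mathbf S \ge s) \le (e\mu/s)^{s}$ for any $s \ge e\mu$. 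The second form is the one to use because it does not degrade when $\mu$ is tiny.

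The argument then splits on the size of $\mu$. If $\mu \le 1$, apply the bound with $s = \log^2 n$ (say): then $P(\mathbf S \ge \log^2 n) \le (e/\log^2 n)^{\log^2 n}$, which is certainly $n^{-\omega(1)}$ since the exponent $\log^2 n$ times $\log(\log^2 n/e) = \Theta(\log^2 n \cdot \log\log n)$ dwarfs $\log n$. Hence $\mathbf S \le \log^2 n = \max(\mu,1)\cdot\tilde O(1)$ w.h.p. If $\mu \ge 1$, apply the bound with $s = \mu\log^2 n \ge e\mu$: then $P(\mathbf S \ge \mu \log^2 n) \le (e/\log^2 n)^{\mu \log^2 n} \le (e/\log^2 n)^{\log^2 n}$, again $n^{-\omega(1)}$, so $\mathbf S \le \mu\log^2 n = \max(\mu,1)\cdot\tilde O(1)$ w.h.p. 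In both cases the claimed inequality $\mathbf S \le \max(\Ex[\mathbf S],1)\cdot\tilde O(1)$ holds with the $\tilde O(1)$ factor being $\log^2 n$.

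There is essentially no obstacle here — the only point requiring a little care is making sure the constant/exponent in the $\tilde O(1)$ is genuinely independent of everything relevant and that the failure probability is $n^{-\omega(1)}$ uniformly; this is immediate since $\log^2 n \cdot \log\log n / \log n \to \infty$. I would also note that one should state the Chernoff bound in the form valid for all $s \ge e\Ex[\mathbf S]$ (rather than the more common form requiring $s$ within a constant factor of the mean), so that the $\mu \le 1$ regime is handled without a separate ad hoc argument; this is standard (e.g.\ it follows from bounding the moment generating function $\Ex[e^{\lambda \mathbf S}] \le e^{(e^\lambda - 1)\mu}$ and optimizing $\lambda$). Everything else is a routine substitution.
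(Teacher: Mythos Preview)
Your proposal is correct and essentially identical to the paper's proof: both use the Chernoff bound in the form $P(\mathbf S \ge r) \le (e\mu/r)^r$ (which the paper derives from the moment generating function exactly as you suggest) and then take $r = \max(\mu,1)\log^2 n$ to obtain the $n^{-\omega(1)}$ tail. The only cosmetic difference is that the paper handles the two regimes $\mu \le 1$ and $\mu \ge 1$ in one line via the single choice $r = \max(\mu,1)\log^2 n$, whereas you split into cases.
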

\begin{proof}
	Let $\mathbf S = \sum_i \mathbf B_i$ be a decomposition of $\mathbf S$ as a sum of independent Bernoulli random variables. Let $p_i = \Ex[\mathbf B_i]$ and $\mu = \Ex[\mathbf S] = \sum_i p_i$. Then for $r,t \ge 0$,
	\begin{align*}
	P(\mathbf S\ge r) &= P(\exp(t\mathbf S) \ge \exp(tr)) \le \exp(-tr)\prod_i \Ex[\exp(t\mathbf B_i)] = \exp(-tr)\prod_i(1-p_i+p_i e^t) \\
	&\le \exp(-tr)\prod_i \exp(p_ie^t) = \exp(-tr + \mu e^t).
	\end{align*}
	Letting $t = \log(r/\mu)$ gives $P(\mathbf S \ge r) \le (e\mu/r)^r$ assuming $t \ge 0$, and then (for example) letting $r = \max(\mu,1) \log^2 n$ gives, for sufficiently large $n$,
	\begin{equation*}
	P(\mathbf S \ge r) \le (e/\log^2 n)^{\log^2 n} \le (1/e)^{\log^2 n} = n^{-\log n}. \qedhere
	\end{equation*}
\end{proof}
\subsection{Proof of \texorpdfstring{\cref{good-whp}}{Theorem 6.3}}\label{good-subsection}

First we derive some algebraic properties of the threshold weighting $\Delta$.
\begin{lem}\label{Delta-sum}
	If $A,B\subseteq G$ then $\Delta(A) + \Delta(B) = \Delta(A\cap B) + \Delta(A\cup B)$.
\end{lem}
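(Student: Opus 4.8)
The plan is to expand everything in terms of the underlying weights $(\alpha,\beta)$ and reduce the identity to two separate identities: one for the vertex weights $\alpha$, and one for the edge weights $\beta$. Recall $\Delta(H) = \alpha(H) - \beta(H)$ where $\alpha(H) = \sum_{u\in V(H)}\alpha(u)$ and $\beta(H) = \sum_{e\in E(H)}\beta(e)$. So it suffices to show $\alpha(A) + \alpha(B) = \alpha(A\cap B) + \alpha(A\cup B)$ and $\beta(A) + \beta(B) = \beta(A\cap B) + \beta(A\cup B)$, since subtracting the second from the first gives the claim.

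First I would handle the vertex part. By definition $V(A\cup B) = V(A)\cup V(B)$ and $V(A\cap B) = V(A)\cap V(B)$ (this is exactly how $\cup$ and $\cap$ of graphs were defined in \cref{sec-2}). So this reduces to the elementary inclusion–exclusion fact that for any function $\alpha$ on a ground set and finite subsets $V(A), V(B)$, summing $\alpha$ over $V(A)$ plus over $V(B)$ equals summing over $V(A)\cup V(B)$ plus over $V(A)\cap V(B)$; each vertex $u$ is counted the same number of times on both sides (twice if $u\in V(A)\cap V(B)$, once if it lies in exactly one, zero otherwise).

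Next I would handle the edge part identically: $E(A\cup B) = E(A)\cup E(B)$ and $E(A\cap B) = E(A)\cap E(B)$, again directly from the definitions of graph union and intersection, so the same inclusion–exclusion argument applied to $\beta$ as a function on $E(G)$ gives $\beta(A)+\beta(B) = \beta(A\cap B)+\beta(A\cup B)$. Combining the two and using linearity, $\Delta(A)+\Delta(B) = \bigl(\alpha(A)+\alpha(B)\bigr) - \bigl(\beta(A)+\beta(B)\bigr) = \bigl(\alpha(A\cap B)+\alpha(A\cup B)\bigr) - \bigl(\beta(A\cap B)+\beta(A\cup B)\bigr) = \Delta(A\cap B)+\Delta(A\cup B)$.

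Honestly, there is no real obstacle here — $\Delta$ is a modular (valuation-type) function because it is built additively from weights on vertices and edges, and the only subtlety to be careful about is that $G\cap H$ can have isolated vertices (as the paper warns), but this does not matter since we are summing weights over $V$ and $E$ separately and the definition of $\alpha(H)$ already includes isolated vertices. The whole proof is two sentences invoking inclusion–exclusion on $V$ and on $E$.
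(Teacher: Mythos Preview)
Your proof is correct and is essentially the same as the paper's: the paper's one-line argument (``each vertex or edge in one (resp.\ two) of $A$ and $B$ is also in one (resp.\ two) of $A\cap B$ and $A\cup B$'') is exactly the inclusion--exclusion counting you spell out separately for $\alpha$ on vertices and $\beta$ on edges.
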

\begin{proof}
	Each vertex or edge in one (resp.\ two) of $A$ and $B$ is also in one (resp.\ two) of $A\cap B$ and $A\cup B$.
\end{proof}

\begin{dfn}
	For $A\subseteq U\subseteq G$ let $\Gamma_U(A) = \bigcap\{H \in [A,U] \mid \Delta(H) = \dst_U(A)\}$, and let $A$ be a \emph{$U$-base} if $\Delta(A) = \dst_U(A)$.
\end{dfn}

Throughout this subsection, $U$ will be an arbitrary subgraph of $G$ unless additional structure is imposed on it, and missing subscripts on $\dst$ and $\Gamma$ default to $U$.

\begin{lem}\label{Gamma-lemma-1}
	If $A\subseteq U$ then $\Delta(\Gamma(A)) = \dst(A)$ and $A\subseteq\Gamma(A)$.
\end{lem}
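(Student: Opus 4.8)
The plan is to prove the two claims of \cref{Gamma-lemma-1} in order: first that $\Delta(\Gamma(A)) = \dst(A)$, and then that $A \subseteq \Gamma(A)$ (the latter being nearly immediate from the definition, since every $H$ in the intersection defining $\Gamma(A)$ satisfies $A \subseteq H$, so $A$ is contained in their intersection). The substance is in the first claim, which says that the family $\mathcal F = \{H \in [A,U] : \Delta(H) = \dst(A)\}$ of $\Delta$-minimizers in the interval $[A,U]$ is closed under intersection, so that $\Gamma(A) = \bigcap \mathcal F$ is itself the (unique) minimal element of $\mathcal F$.

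The key tool is \cref{Delta-sum}: for any $H, H' \in \mathcal F$ we have $\Delta(H \cap H') + \Delta(H \cup H') = \Delta(H) + \Delta(H') = 2\dst(A)$. Since $A \subseteq H \cap H'$ and $H \cup H' \subseteq U$ (as both $H$ and $H'$ lie in $[A,U]$), both $H \cap H'$ and $H \cup H'$ lie in the interval $[A,U]$, so by definition of $\dst(A) = \min_{A \subseteq F \subseteq U}\Delta(F)$ each of $\Delta(H\cap H')$ and $\Delta(H \cup H')$ is at least $\dst(A)$. Combined with their sum being $2\dst(A)$, this forces $\Delta(H \cap H') = \Delta(H \cup H') = \dst(A)$, so in particular $H \cap H' \in \mathcal F$. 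Thus $\mathcal F$ is closed under pairwise intersection, hence (by finiteness of $[A,U]$) under arbitrary intersection, so $\Gamma(A) = \bigcap \mathcal F \in \mathcal F$, which gives $\Delta(\Gamma(A)) = \dst(A)$.

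For the second claim: every $H \in \mathcal F$ contains $A$ by definition of the interval $[A,U]$, so $A \subseteq \bigcap_{H \in \mathcal F} H = \Gamma(A)$. (One should note $\mathcal F$ is nonempty — indeed any $H \in [A,U]$ achieving the minimum in the definition of $\dst(A)$ lies in $\mathcal F$, and such an $H$ exists since $[A,U]$ is finite and nonempty, containing $A$ itself — so the intersection is well-defined.)

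I do not anticipate a serious obstacle here: this is a standard submodularity-style argument showing that a finite lattice-theoretic minimizer set has a least element, and \cref{Delta-sum} is exactly the modularity identity needed. The only point requiring a moment's care is verifying that $H \cap H'$ and $H \cup H'$ genuinely stay inside the interval $[A,U]$ so that $\dst(A)$ is a valid lower bound for their $\Delta$-values — but this is immediate from $A \subseteq H, H' \subseteq U$ together with the definitions of graph intersection and union used in the paper.
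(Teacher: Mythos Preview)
Your proposal is correct and follows essentially the same approach as the paper: both arguments show the set of minimizers $\{H \in [A,U] : \Delta(H) = \dst(A)\}$ is closed under intersection by applying \cref{Delta-sum} and using that $A \subseteq H \cap H'$ and $H \cup H' \in [A,U]$ force each term to equal $\dst(A)$. Your write-up is slightly more explicit about nonemptiness and the ``$A \subseteq \Gamma(A)$'' part, but the substance is identical.
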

\begin{proof}
	It suffices to show that the set $S = \{H \in [A,U] \mid \Delta(H) = \dst(A)\}$ is closed under intersection. Let $B, C \in S$. By the definition of $S$, \cref{Delta-sum}, and the fact that $A\subseteq B\cup C$,
	\begin{equation*}
	2\dst(A) = \Delta(B) + \Delta(C) = \Delta(B\cap C) + \Delta(B\cup C) \ge \Delta(B\cap C) + \dst(A),
	\end{equation*}
	so $\Delta(B\cap C) \le \dst(A)$. On the other hand, $\Delta(B\cap C) \ge \dst(A)$ because $A\subseteq B\cap C$. Therefore $\Delta(B\cap C) = \dst(A)$, so $B\cap C \in S$.
\end{proof}

\begin{lem}\label{gb}
	If $A\subseteq \Gamma(A) \subseteq U^\prime \subseteq U$ then $\Gamma(A)$ is a $U^\prime$-base.
\end{lem}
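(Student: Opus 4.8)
The plan is to unwind the definitions and use the minimality of $\Gamma(A)$ inside a smaller universe. Recall $\Gamma(A) = \Gamma_U(A)$ is the intersection of all $H \in [A,U]$ with $\Delta(H) = \dst_U(A)$, and by \cref{Gamma-lemma-1} it satisfies $A \subseteq \Gamma(A)$ and $\Delta(\Gamma(A)) = \dst_U(A)$. We want to show $\Gamma(A)$ is a $U^\prime$-base, i.e.\ $\Delta(\Gamma(A)) = \dst_{U^\prime}(\Gamma(A)) = \min_{\Gamma(A) \subseteq H \subseteq U^\prime}\Delta(H)$.

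First I would observe that $\dst_{U^\prime}(\Gamma(A)) \ge \dst_U(\Gamma(A))$, simply because the minimum in $\dst_{U^\prime}$ is taken over a subset of the graphs appearing in $\dst_U$ (namely those $H$ with $\Gamma(A) \subseteq H \subseteq U^\prime \subseteq U$). Next, I claim $\dst_U(\Gamma(A)) = \dst_U(A)$: indeed $\dst_U(\Gamma(A)) \ge \dst_U(A)$ because $A \subseteq \Gamma(A)$ (so fewer graphs compete, pushing the min up), while $\dst_U(\Gamma(A)) \le \Delta(\Gamma(A)) = \dst_U(A)$ since $\Gamma(A)$ is itself a candidate in the minimum defining $\dst_U(\Gamma(A))$. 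Combining, $\dst_{U^\prime}(\Gamma(A)) \ge \dst_U(\Gamma(A)) = \dst_U(A) = \Delta(\Gamma(A))$. The reverse inequality $\dst_{U^\prime}(\Gamma(A)) \le \Delta(\Gamma(A))$ is immediate since $\Gamma(A)$ lies in $[\Gamma(A), U^\prime]$. Hence $\dst_{U^\prime}(\Gamma(A)) = \Delta(\Gamma(A))$, which is exactly the statement that $\Gamma(A)$ is a $U^\prime$-base.

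I do not anticipate a serious obstacle here; the only point requiring a little care is making sure the monotonicity directions are right — enlarging the lower endpoint of the interval $[A,U]$ (replacing $A$ by $\Gamma(A)$) can only increase $\dst$, while shrinking the upper endpoint (replacing $U$ by $U^\prime$) can also only increase $\dst$ — and that $\Gamma(A)$ is always an admissible competitor in each minimum because of the inclusions $A \subseteq \Gamma(A) \subseteq U^\prime \subseteq U$ guaranteed by the hypothesis. No use of \cref{Delta-sum} or the threshold-weighting axioms beyond what is already packaged in \cref{Gamma-lemma-1} should be needed.
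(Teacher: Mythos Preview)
Your proof is correct and follows essentially the same approach as the paper's: both use the interval containment $[\Gamma(A),U']\subseteq[A,U]$ together with $\Delta(\Gamma(A))=\dst_U(A)$ from \cref{Gamma-lemma-1} to sandwich $\dst_{U'}(\Gamma(A))$ between $\Delta(\Gamma(A))$ and itself. The only cosmetic difference is that you factor the inequality $\dst_U(A)\le\dst_{U'}(\Gamma(A))$ through the intermediate quantity $\dst_U(\Gamma(A))$, whereas the paper handles the change of both endpoints in a single step.
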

\begin{proof}
	Since the interval $[A,U]$ includes the interval $[\Gamma(A), U^\prime]$, it follows from \cref{Gamma-lemma-1} that $\Delta(\Gamma(A)) = \dst_U(A) \le \dst_{U^\prime}(\Gamma(A)) \le \Delta(\Gamma(A))$. Therefore $\Delta(\Gamma(A)) = \dst_{U^\prime}(\Gamma(A))$.
\end{proof}

\begin{lem}\label{Gamma-subset}
	If $A\subseteq B\subseteq U$ then $\Gamma(A) \subseteq \Gamma(B)$.
\end{lem}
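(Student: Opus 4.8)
The plan is to exploit the same intersection-closure idea that powered \cref{Gamma-lemma-1}, together with the submodularity identity \cref{Delta-sum}. Recall that $\Gamma(B) = \bigcap\{H \in [B,U] \mid \Delta(H) = \dst(B)\}$, so to show $\Gamma(A) \subseteq \Gamma(B)$ it suffices to show that $\Gamma(A)$ is contained in \emph{every} $H \in [B,U]$ with $\Delta(H) = \dst(B)$; equivalently, that for any such $H$ we have $\Gamma(A) \cap H = \Gamma(A)$, i.e.\ $\Gamma(A) \subseteq H$.

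First I would fix such an $H$, so $B \subseteq H \subseteq U$ and $\Delta(H) = \dst(B)$. Consider the graph $\Gamma(A) \cap H$. Since $A \subseteq \Gamma(A)$ by \cref{Gamma-lemma-1} and $A \subseteq B \subseteq H$, we have $A \subseteq \Gamma(A) \cap H$, so $\Delta(\Gamma(A)\cap H) \ge \dst(A)$. Now apply \cref{Delta-sum} to the pair $\Gamma(A)$, $H$:
\begin{equation*}
\Delta(\Gamma(A)) + \Delta(H) = \Delta(\Gamma(A) \cap H) + \Delta(\Gamma(A) \cup H).
\end{equation*}
Here $\Delta(\Gamma(A)) = \dst(A)$ by \cref{Gamma-lemma-1}, and $\Gamma(A) \cup H \in [A,U]$ so $\Delta(\Gamma(A)\cup H) \ge \dst(A)$. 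Substituting and using $\Delta(\Gamma(A)\cap H)\ge\dst(A)$ gives $\dst(A) + \Delta(H) \ge \dst(A) + \dst(A)$, hence $\Delta(H) \ge \dst(A)$ — but more usefully, rearranging the identity the other way yields $\Delta(\Gamma(A)\cap H) = \dst(A) + \Delta(H) - \Delta(\Gamma(A)\cup H) \le \dst(A) + \Delta(H) - \dst(A) = \Delta(H)$. Combined with $\Delta(\Gamma(A)\cap H)\ge\dst(A)$ we would be done if $\dst(A)$ and $\Delta(H)$ coincided; since $\Delta(H)=\dst(B)\ge\dst(A)$ (as $[B,U]\subseteq[A,U]$), this only shows $\dst(A)\le\Delta(\Gamma(A)\cap H)\le\dst(B)$, which is not yet enough.

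The fix — and the step I expect to be the main obstacle — is to apply the same argument relative to a smaller universe rather than $U$. By \cref{gb} (taking $U' = H$, which is legitimate since $A \subseteq \Gamma(A)$ and I will argue $\Gamma(A)\subseteq H$... which is what we want to prove, so this is circular). The cleaner route: instead compare $\Gamma(A)$ with $\Gamma(B)$ directly. Set $F = \Gamma(A) \cup \Gamma(B)$ and $F' = \Gamma(A)\cap\Gamma(B)$; note $A \subseteq F'$ and $B \subseteq F \subseteq U$. By \cref{Delta-sum}, $\Delta(\Gamma(A)) + \Delta(\Gamma(B)) = \Delta(F') + \Delta(F)$, i.e.\ $\dst(A) + \dst(B) = \Delta(F') + \Delta(F)$. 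Since $A\subseteq F'\subseteq U$ we have $\Delta(F')\ge\dst(A)$, and since $B\subseteq F\subseteq U$ we have $\Delta(F)\ge\dst(B)$; the two inequalities sum to exactly the left side, so both must be equalities: $\Delta(F) = \dst(B)$ and $\Delta(F')=\dst(A)$. From $\Delta(F)=\dst(B)$ and $F\in[B,U]$ we get $\Gamma(B)\subseteq F = \Gamma(A)\cup\Gamma(B)$ trivially, but also, crucially, $F$ is one of the graphs in the intersection defining $\Gamma(B)$, so $\Gamma(B)\subseteq F$; this gives nothing new. The genuine payoff is the other equality: $\Delta(\Gamma(A)\cap\Gamma(B)) = \dst(A)$, so $\Gamma(A)\cap\Gamma(B)$ is one of the graphs $H\in[A,U]$ with $\Delta(H)=\dst(A)$, whence by definition of $\Gamma(A)$ as the intersection of all such graphs, $\Gamma(A) \subseteq \Gamma(A)\cap\Gamma(B)$, i.e.\ $\Gamma(A)\subseteq\Gamma(B)$, as desired. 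So the whole proof is: write down \cref{Delta-sum} for $\Gamma(A)$ and $\Gamma(B)$, observe the sum of the two "obvious" lower bounds is tight, conclude equality in each, and read off the containment from the definition of $\Gamma(A)$.
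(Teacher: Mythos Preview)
Your final argument is correct and is essentially the paper's proof: apply \cref{Delta-sum} to $\Gamma(A)$ and $\Gamma(B)$, use $\Delta(\Gamma(A)\cup\Gamma(B))\ge\dst(B)$ and $\Delta(\Gamma(A)\cap\Gamma(B))\ge\dst(A)$ to force $\Delta(\Gamma(A)\cap\Gamma(B))=\dst(A)$, and conclude $\Gamma(A)\subseteq\Gamma(A)\cap\Gamma(B)$ from the definition of $\Gamma$. The exploratory detours (the attempt with an arbitrary minimizer $H$ and the circular appeal to \cref{gb}) should simply be deleted; only the paragraph beginning ``The cleaner route'' is needed.
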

\begin{proof}
	Since $B \subseteq \Gamma(B) \subseteq \Gamma(B) \cup \Gamma(A)$ it follows that $\dst(B) \le \Delta(\Gamma(B) \cup \Gamma(A))$, so by \cref{Gamma-lemma-1,Delta-sum},
	\begin{align*}
	\Delta(\Gamma(A)\cap\Gamma(B)) + \dst(B)
	&\le \Delta(\Gamma(A)\cap\Gamma(B)) + \Delta(\Gamma(A)\cup\Gamma(B)) \\
	&= \Delta(\Gamma(A)) + \Delta(\Gamma(B)) \\
	&= \dst(A) + \dst(B).
	\end{align*}
	Therefore $\dst(A) \ge \Delta(\Gamma(A)\cap\Gamma(B))$. On the other hand, since $A\subseteq \Gamma(A)$ and $A\subseteq B\subseteq \Gamma(B)$ it follows that $A\subseteq \Gamma(A)\cap\Gamma(B)$, so $\dst(A) \le \Delta(\Gamma(A)\cap\Gamma(B))$. Therefore $\dst(A) = \Delta(\Gamma(A)\cap\Gamma(B))$, so it follows from the definition of $\Gamma(A)$ that $\Gamma(A) \subseteq \Gamma(A)\cap\Gamma(B) \subseteq \Gamma(B)$.
\end{proof}

We now analyze the concentration of $\rg X \Delta n$, making liberal use of the fact that if $n^{O(1)}$ events occur with uniformly high probability then their conjunction also occurs w.h.p.\ by a union bound. For the rest of this subsection, ``extensions" are with respect to an implicit $\mathbf X \equiv \rg X \Delta n$.

\begin{lem}\label{atom-extensions}
	If $A\subseteq U$ and $\Gamma_U(A) = U$ (i.e. $\Delta(H)>\Delta(U)$ for all $H\in[A,U)$) then the number of $U$-extensions of any $\mathcal A\in \colt A n$ is $\tilde O(1)$ w.h.p.
\end{lem}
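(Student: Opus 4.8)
The plan is a strong induction on $v(U)-v(A)$, exposing the edges of $\mathbf X\equiv\rg X \Delta n$ in stages and controlling a running count of partial copies with the Chernoff bound \cref{tail}; at the end a union bound over the $n^{O(1)}$ relevant pairs $(\mathcal A,(A,U))$ gives the statement for all of them simultaneously. (We may assume $A=U[V(A)]$, since otherwise a $U$-extension of $\mathcal A$ exists only if the missing induced edges are present, which reduces to that case.) If $v(U)=v(A)$ then a $U$-extension of $\mathcal A$ is forced by filling in the remaining edges, so there is at most one. For the inductive step, observe first that $\Gamma_U(A)=U$ says exactly that $\Delta(H)>\Delta(U)$ for all $H\in[A,U)$; combined with \cref{Gamma-lemma-1} this forces $\dst_U(B)=\Delta(U)$ for every $B\in[A,U]$, so the same property is inherited by every pair $(B,U)$ with $A\subseteq B\subseteq U$. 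In particular a union of components of $U$ disjoint from $V(A)$ would, by $\Delta$-additivity, give a proper subgraph of $U$ containing $A$ with $\Delta$ equal to $\Delta(U)$ — impossible — so the boundary $\partial:=N_U(V(A))\setminus V(A)$ is nonempty.

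Next I would expose all edges of $\mathbf X$ incident to $V(\mathcal A)$. This reveals, for each $w\in\partial$, the set $J_w$ of locations $i$ with $w_i$ adjacent in $\mathbf X$ to every vertex of $\mathcal A$ coloured by $N_U(w)\cap V(A)$; each $|J_w|$ is a sum of independent Bernoulli variables, hence $\tilde O(\max(n^{\alpha(w)-e_w},1))$ w.h.p., where $e_w$ is the $\beta$-weight of the edges from $w$ to $V(A)$. Write $U_0=U-V(A)$ and let $A^+\subseteq U$ be the subgraph on $V(A)\cup\partial$ with all edges of $U$ meeting $V(A)$, so that $E(U)=E(A^+)\sqcup E(U_0)$. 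Since $A^+\cap H$ is the edgeless graph on $\partial$ for every $H\in[U_0[\partial],U_0]$, two applications of \cref{Delta-sum} give $\Delta(A^+)-\alpha(\partial)=\Delta(U)-\Delta(U_0)$ and, for every $H$ with $U_0[\partial]\subseteq H\subsetneq U_0$,
\[\Delta(H)=\Delta(H\cup A^+)-\bigl(\Delta(U)-\Delta(U_0)\bigr)>\Delta(U_0),\]
using $H\cup A^+\in[A,U)$. Hence $(U_0[\partial],U_0)$ is again atomic, of strictly smaller codimension because $\partial\ne\emptyset$, so the inductive hypothesis applies to it (for all placements of $\partial$ at once, by a union bound). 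Every $U$-extension of $\mathcal A$ restricts to a placement $\rho\in\prod_{w\in\partial}J_w$ of $\partial$ together with a $U_0$-extension of $\rho$, so the number of $U$-extensions of $\mathcal A$ is at most $\sum_{\rho\in\prod_w J_w}\bigl(\text{number of }U_0\text{-extensions of }\rho\bigr)$, each summand being $\tilde O(1)$.

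The main obstacle is this final sum. One cannot bound it by $\bigl(\prod_w|J_w|\bigr)\cdot\tilde O(1)$, because $\prod_w|J_w|$ may be a genuine power of $n$: the partial expectations $n^{\Delta(B)-\alpha(A)}$ (the expected number of $B$-copies extending $\mathcal A$, by \cref{ex-sub-count}) need not be $\le1$ for $A\subsetneq B\subsetneq U$. Instead the sum must be analysed as a single Bernoulli-type sum over the still-unexposed edges of $\mathbf X$ lying inside $U_0$; conditioned on the $J_w$ and on which copies of $U_0$ occur, its conditional expectation telescopes — via $\Delta(A^+)-\alpha(\partial)=\Delta(U)-\Delta(U_0)$ and \cref{ex-sub-count} — to $n^{\Delta(U)-\alpha(A)}\cdot\tilde O(1)$, which is $\tilde O(1)$ since $\alpha(A)\ge\Delta(A)\ge\Delta(U)$, and a further application of \cref{tail} then finishes. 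Making this last step rigorous — choosing an inductive statement strong enough that the counts can be summed rather than merely maximised (e.g.\ a concentration bound of the form ``the count is $\tilde O(\max(\text{its expectation},1))$'' valid even when the expectation is large), and checking that the conditional expectations stay controlled for every admissible configuration of the $J_w$ — is the technical heart of the proof.
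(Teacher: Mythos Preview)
Your decomposition is sound up to the point where you must sum the $\tilde O(1)$ bound over all placements $\rho\in\prod_{w\in\partial}J_w$, and you correctly flag that $\prod_w|J_w|$ can be a genuine power of $n$. But the proposed fix---``analyse the sum as a single Bernoulli-type sum \dots\ a further application of \cref{tail}''---does not go through: \cref{tail} requires a sum of \emph{independent} Bernoulli variables, whereas the total count of $U$-extensions of $\mathcal A$ is $\sum_{\mathcal U}\Ind{\mathcal U\subseteq\mathbf X}$ over all $\mathcal U\in\colt{U}{n}$ with $\mathcal A\subseteq\mathcal U$, and these indicators are highly dependent (any two such $\mathcal U$'s sharing an edge outside $\mathcal A$ are correlated). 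No amount of staging the edge-exposure removes this: after conditioning on the $J_w$'s, the remaining count is still a sum over $U_0$-copies that share edges. Your fallback---strengthen the inductive hypothesis to ``the count is $\tilde O(\max(\text{its expectation},1))$''---is essentially \cref{base-extensions}, whose proof in the paper \emph{invokes} \cref{atom-extensions}, so building it into the induction here would be circular.

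The paper sidesteps the dependence entirely with a different argument. It first bounds, by a plain union bound (no Chernoff), the size of any family of $U$-extensions of $\mathcal A$ whose projections onto $\colt{U-A}{n}$ are pairwise \emph{vertex-disjoint}: there are at most $n^{(\alpha(U)-\alpha(A))\log n}$ candidate families of size $\log n$, each present with probability $n^{(\beta(A)-\beta(U))\log n}$, and since $\Delta(U)-\Delta(A)<0$ this product is $n^{-\omega(1)}$. Taking a \emph{maximal} such family $S$ (of size $\tilde O(1)$ w.h.p.), every $U$-extension of $\mathcal A$ must agree with some $\mathcal U\in S$ on some $H\in(A,U]$; the inductive hypothesis applied to the pair $(H,U)$ (which inherits $\Gamma_U(H)=U$) then bounds the number of such extensions by $\tilde O(1)$ for each fixed $(\mathcal U,H)$, and summing over $|S|\cdot|(A,U]|=\tilde O(1)$ choices finishes. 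The key idea you are missing is this ``maximal vertex-disjoint family'' trick, which converts the dependent sum into a controlled sum indexed by a $\tilde O(1)$-size set.
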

(The above conditions are equivalent because, by the definition of $\Gamma(A)$, we have $\Gamma(A) = U$ if and only if $U$ is the unique $H\in[A,U]$ that minimizes $\Delta(H)$.)
\begin{proof}
	The result is trivial for $A=U$; assume it's true for all $B\in(A,U]$ and that $A\neq U$. (Since $\Delta(H) > \Delta(U)$  for all $H \in [A,U)$, for any $B \in (A,U]$ it is the case that $\Delta(H) > \Delta(U)$ for all $H \in [B,U)$.) Assume without loss of generality that $A=U[V(A)]$, since all $U$-extensions of $\mathcal A$ are also $U$-extensions of $\mathcal A$'s unique possible $U[V(A)]$-extension. Also condition on $\mathcal A \subseteq \mathbf X$, since otherwise $\mathcal A$ trivially has zero $U$-extensions.
	
	There are $n^{\alpha(U)-\alpha(A)}$ possible $U$-extensions of $\mathcal A$, so there are at most $n^{(\alpha(U)-\alpha(A))\log n}$ sets of $\log n$ possible $U$-extensions of $\mathcal A$ whose projections onto $\colt {U-A} n$ are pairwise vertex-disjoint. (This is true even if we omit the condition about vertex-disjointness.) For each of these sets, all of its elements are subgraphs of $\mathbf X$ with probability $n^{(-\beta(U)+\beta(A))\log n}$, so this occurs for at least one such set with probability at most $n^{(\Delta(U)-\Delta(A))\log n}$ (by a union bound). By assumption, $\Delta(U)-\Delta(A)<0$, so w.h.p.\ any set of $U$-extensions of $\mathcal A$ whose projections onto $\colt {U-A} n$ are pairwise vertex-disjoint has $\tilde O(1)$ elements.
	
	Let $S$ be one such set, such that $S$ is maximal. It follows that every $U$-extension of $\mathcal A$ agrees with some element of $S$ on some vertex in $V(U)-V(A)$. Therefore $\mathcal A$ has at most $\sum_{\mathcal U\in S}\sum_{H\in(A,U]}E(\mathcal U,H)$ $U$-extensions, where $E(\mathcal U,H)$ is the number of $U$-extensions of $\mathcal A$ that agree with $\mathcal U$ on precisely $H$. By the inductive hypothesis, $E(\mathcal U,H)$ is $\tilde O(1)$ w.h.p.\ for all $\mathcal U$ and $H$ (independent of $S$), so $\mathcal A$ has $\tilde O(1)$ $U$-extensions w.h.p.\ by a union bound.
\end{proof}

\begin{lem}\label{base-extensions}
	If $A$ is a $U$-base then any $\mathcal A\in \colt A n$ has $\tilde O(n^{\Delta(U)-\Delta(A)})$ $U$-extensions w.h.p.
\end{lem}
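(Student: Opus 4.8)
The plan is to reduce to \cref{atom-extensions} by factoring the extension problem through the ``closure'' $\Gamma_U(A)$, and then argue inductively. Recall that $A$ being a $U$-base means $\Delta(A) = \dst_U(A)$, so in this case $\Delta(\Gamma(A)) = \Delta(A)$. The key point is that every $U$-extension of $\mathcal A$ factors as follows: first extend $\mathcal A$ to a $\Gamma(A)$-extension $\mathcal G$, and then extend $\mathcal G$ to a $U$-extension. Since $\Delta(\Gamma(A)) = \Delta(A)$ and $\Gamma(A)$ is the unique minimizer of $\Delta$ in $[A, \Gamma(A)]$, the number of $\Gamma(A)$-extensions of $\mathcal A$ is $\tilde O(1)$ w.h.p.\ by \cref{atom-extensions} (applied with universe $\Gamma(A)$ in place of $U$). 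It therefore remains to bound, for each such $\mathcal G$, the number of $U$-extensions of $\mathcal G$, and then sum over the $\tilde O(1)$ choices of $\mathcal G$.

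Now I claim $\Gamma(A)$ is a $U$-base: this is exactly \cref{gb} with $U' = U$, giving $\Delta(\Gamma(A)) = \dst_U(\Gamma(A))$. So if the lemma is already known for the pair $(\Gamma(A), U)$, we would get that each $\mathcal G$ has $\tilde O(n^{\Delta(U) - \Delta(\Gamma(A))}) = \tilde O(n^{\Delta(U)-\Delta(A)})$ $U$-extensions, and multiplying by the $\tilde O(1)$ count of $\mathcal G$'s finishes the proof. To make this non-circular, I would set up an induction — either on $v(U) - v(A)$, noting that when $\Gamma(A) \supsetneq A$ we have strictly fewer ``missing'' vertices, or (cleaner) on $v(U) - v(A)$ combined with the case split: if $\Gamma(A) = A$ then no reduction happened and I argue directly; if $\Gamma(A) \supsetneq A$ then I invoke the inductive hypothesis on $(\Gamma(A), U)$.

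For the base case where $\Gamma_U(A) = A$ (equivalently $A$ is its own closure in $U$), I would proceed much as in the proof of \cref{atom-extensions}: pick a maximal family $S$ of $U$-extensions of $\mathcal A$ whose projections onto $\colt{U-A}{n}$ are pairwise vertex-disjoint; show $|S|$ is $\tilde O(n^{\Delta(U)-\Delta(A)})$ w.h.p.\ by a union bound over $\binom{n^{\alpha(U)-\alpha(A)}}{|S|}$-many candidate families, each surviving with probability $n^{(-\beta(U)+\beta(A))|S|}$, so the expected number of surviving families of size $s$ is roughly $n^{(\Delta(U)-\Delta(A))s}$, which is controlled once $s$ exceeds $\tilde O(n^{\Delta(U)-\Delta(A)})$; then by maximality every $U$-extension agrees with some element of $S$ on some vertex, reducing the count to $\sum_{\mathcal U \in S}\sum_{H \in (A,U]} E(\mathcal U, H)$ where each $H$ strictly contains $A$ and hence (using that $A$ is a $U$-base and $\Gamma(A) = A$, so $\Delta(H) > \Delta(A)$ for $H \in (A,U)$, while $H$ could equal $U$) we can bound $E(\mathcal U, H)$ via the inductive hypothesis on the smaller universe interval.

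The main obstacle I anticipate is getting the induction to terminate cleanly: the recursion on $U$-extensions of $\mathcal U$ that ``agree on precisely $H$'' changes both the base $A$ (to something containing a vertex of $\mathcal U$) and the exponent, and one must check that the relevant $\Delta$-differences telescope correctly so that the final bound is $\tilde O(n^{\Delta(U)-\Delta(A)})$ rather than accumulating spurious factors. I would organize this by always reducing to a $U$-base of the new configuration (using \cref{Gamma-lemma-1,Gamma-subset,gb} to locate it), so that at each stage the inductive hypothesis applies with the ``clean'' exponent $\Delta(U) - \Delta(\text{current base})$, and the $\tilde O(1)$ overhead from \cref{atom-extensions} is incurred only a constant (in $n$) number of times along any branch of the recursion.
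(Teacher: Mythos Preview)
Your first reduction is vacuous. The hypothesis is that $A$ is a $U$-base, i.e.\ $\Delta(A)=\dst_U(A)$; then $A$ itself lies in $\{H\in[A,U]:\Delta(H)=\dst_U(A)\}$, and since $\Gamma_U(A)$ is the intersection of that set we get $\Gamma_U(A)\subseteq A$, hence $\Gamma_U(A)=A$. So the case ``$\Gamma(A)\supsetneq A$'' never occurs, and your entire argument collapses to what you call the base case.

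In that base case your maximal-family argument does not close. You can indeed show $|S|\le\tilde O(n^{\Delta(U)-\Delta(A)})$ w.h.p.\ via the $(e\mu/s)^s$ tail bound, but after that, maximality only yields
\[
\#\{U\text{-extensions of }\mathcal A\}\ \le\ \sum_{\mathcal U\in S}\ \sum_{H\in(A,U]} E(\mathcal U,H),
\]
and the inductive hypothesis (after passing to $\Gamma_U(H)$) gives at best $E(\mathcal U,H)\le\tilde O(n^{\Delta(U)-\dst_U(H)})$. Since $\Gamma_U(A)=A$ you do get $\dst_U(H)>\Delta(A)$ for $H\in(A,U]$, but multiplying by $|S|$ yields roughly $\tilde O\bigl(n^{2(\Delta(U)-\Delta(A))-\epsilon}\bigr)$ for some constant $\epsilon>0$, which overshoots the target $\tilde O(n^{\Delta(U)-\Delta(A)})$ whenever $\Delta(U)-\Delta(A)>\epsilon$. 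The atom-extensions template works only because there the target is $\tilde O(1)$, so both $|S|$ and each $E(\mathcal U,H)$ are $\tilde O(1)$; here they are both polynomially large and their product is too big.

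The paper avoids this by \emph{not} factoring through $\Gamma_U(A)$ at all. Instead it fixes a single vertex $v\in V(U)\setminus V(A)$ and looks at $\Gamma_U(A\cup v)$, which \emph{can} be strictly between $A$ and $U$; when it is, two applications of the inductive hypothesis (on $(A,\Gamma(A\cup v))$ and $(\Gamma(A\cup v),U)$) telescope cleanly. In the remaining case $\Gamma_U(A\cup v)=U$, the paper conditions on the subgraph $\mathbf W$ induced on all colors except $v$; the inductive hypothesis (applied to $(A,U-v)$) bounds the number of $(U{-}v)$-extensions of $\mathcal A$, and then the indicators $\mathbf Z_i=\Ind{\mathcal A\cup v_i\text{ extends to }U}$ are \emph{conditionally independent} Bernoulli variables with controlled mean, so \cref{tail} finishes. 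This conditional-independence-plus-Chernoff step is the key idea you are missing; no purely combinatorial union bound of the atom-extensions type seems to substitute for it.
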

\begin{proof}
	Again, assume that $A$ is an induced subgraph of $U$ and condition on $\mathcal A \subseteq \mathbf X$. Also assume without loss of generality that $\beta$ is strictly positive on $E(U)$. The proof is by induction on $v(U)-v(A)$, for all $U \subseteq G$. The base case $A = U$ is trivial. Fix an arbitrary vertex $v \in V(U)-V(A)$. First we consider the case where $\Gamma(A\cup v) \neq U$. The number of $U$-extensions of $\mathcal A$ equals the sum over all $\gamma \in \{\text{$\Gamma(A\cup v)$-extensions of $\mathcal A$}\}$ of the number of $U$-extensions of $\gamma$. Clearly $A$ is a $\Gamma(A\cup v)$-base, and \cref{gb} implies that $\Gamma(A\cup v)$ is a $U$-base. It follows from our assumptions that $v(A) < v(\Gamma(A\cup v)) < v(U)$, so we can apply the inductive hypothesis twice: w.h.p.\ $\mathcal A$ has $\tilde O(n^{\Delta(\Gamma(A\cup v)) - \Delta(A)})$ extensions to $\Gamma(A\cup v)$, each of which has $\tilde O(n^{\Delta(U) - \Delta(\Gamma(A\cup v))})$ extensions to $U$, and the result follows.
	
	Now assume that $\Gamma(A\cup v) = U$. \cref{atom-extensions} implies that $\mathcal A \cup v_i$ has $\tilde O(1)$ $U$-extensions w.h.p.\ for any $i$, so it suffices to show that w.h.p.\ there are $\tilde O(n^{\Delta(U)-\Delta(A)})$ values of $i$ such that $\mathcal A \cup v_i$ extends to $U$. Let $\mathbf W = \mathbf X[u_i \mid u \in V(U)-v, i \in [n^{\alpha(u)}]]$, and if $\mathcal A$ has $\tilde O\left(n^{\Delta(U-v)-\Delta(A)}\right)$ extensions to $U-v$ when $\mathbf W=W$ then let $W$ be ``okay". Since $A$ is a $(U-v)$-base, $\mathbf W$ is okay w.h.p.\ by the inductive hypothesis. Let $\mathbf Z_i = \Ind{\text{$\mathcal A \cup v_i$ extends to $U$}}$, and let $E$ be the event that $\sum_i\mathbf Z_i > \tilde O(n^{\Delta(U)-\Delta(A)})$. Then,
	\begin{align*}
	P(E) &= P(E\mid\text{$\mathbf W$ is okay})P(\text{$\mathbf W$ is okay}) + P(E\mid\text{$\mathbf W$ isn't okay})P(\text{$\mathbf W$ isn't okay})\\
	&\le P(E\mid\text{$\mathbf W$ is okay}) + P(\text{$\mathbf W$ isn't okay})\\
	&\le \max_{\text{okay $W$}}P(E\mid \mathbf W = W) + n^{-\omega(1)},
	\end{align*}
	so it suffices to prove that $P(E\mid\mathbf W=W) \le n^{-\omega(1)}$ for all okay $W$.
	
	The $\mathbf Z_i$ are independent Bernoulli random variables (given $W$). By a union bound, $\Ex[\mathbf Z_i]$ is at most the number of $(U-v)$-extensions of $\mathcal A$ times the probability that the requisite edges between any one of them and $v_i$ are in $\mathbf X$, i.e.\
	\begin{equation*}
	\Ex[\mathbf Z_i] \le \tilde O\left(n^{\Delta(U-v)-\Delta(A)}\right)n^{\beta(U-v)-\beta(U)} = \tilde O\left(n^{\Delta(U)-\Delta(A)-\alpha(v)}\right).
	\end{equation*}
	Since $A$ is a $U$-base, $\Delta(U) - \Delta(A) \ge 0$, so it follows from \cref{tail} that $\sum_{i=1}^{n^{\alpha(v)}} \mathbf Z_i$ is $\tilde O(n^{\Delta(U)-\Delta(A)})$ w.h.p.
\end{proof}

\begin{remark*}
	It follows from \cref{janson-app} that \cref{base-extensions} is essentially tight.
\end{remark*}

Now we prove that $\rg X \Delta n$ is good w.h.p.:
\begin{proof}[Proof of \cref{good-whp}]
	Let $A \subseteq U$, $\mathcal A \in \colt A n$ and $v \in V(U) - V(A)$. By a union bound it suffices to prove that w.h.p.\ there are $\tilde O(n^{\dst(A\cup v) - \dst(A)})$ values of $i$ such that $\mathcal A \cup v_i$ extends to $U$. The number of such $i$ is at most the number of $i$ such that $\mathcal A\cup v_i$ extends to $\Gamma(A\cup v)$, which is at most the number of $\Gamma(A\cup v)$-extensions of $\mathcal A$. Since $\Gamma(A) \subseteq \Gamma(A\cup v)$ (\cref{Gamma-subset}), this equals the sum over all $\gamma \in \{\text{$\Gamma(A)$-extensions of $\mathcal A$}\}$ of the number $\mathbf E_\gamma$ of $\Gamma(A\cup v)$-extensions of $\gamma$.
	
	It follows from \cref{atom-extensions} that $\mathcal A$ has $\tilde O(1)$ extensions to $\Gamma(A)$ w.h.p.\ (To see this, note that if $A \subseteq H \subset \Gamma(A)$ then $\Delta(H) \ge \dst(A) = \Delta(\Gamma(A))$ (\cref{Gamma-lemma-1}), and if $\Delta(H) = \dst(A)$ then it follows from the definition of $\Gamma(A)$ that $\Gamma(A) \subseteq H$, a contradiction.) Since $\Gamma(A)$ is a $\Gamma(A\cup v)$-base (\cref{gb}), it follows from \cref{base-extensions} that any $\mathbf E_\gamma$ is $\tilde O(n^{\Delta(\Gamma(A\cup v)) - \Delta(\Gamma(A))})$ w.h.p.\ ($= \tilde O(n^{\dst(A\cup v) - \dst(A)})$ by \cref{Gamma-lemma-1}).
\end{proof}

\subsection{The Circuit}\label{circuit-subsection}

If $D$ is a data structure then let $|D|$ denote the number of bits used to represent it according to whatever schema we describe. If $A$ is a bit array and $b$ is a bit then let $(A\vee b)_i = A_i\vee b$ and $(A\wedge b)_i = A_i \wedge b$ for all $i\in[|A|]$. When there is a null element we represent it by the all-zeros string.

We now prove \cref{circuit-main}, i.e.\ that there exists an \aczero circuit with $\tilde O(n^{\kappa_\Delta(G)+3})$ wires that solves $\subcol(\rg X \Delta n)$ w.h.p. Since $\rg X \Delta n$ is good w.h.p.\ (\cref{good-whp}) it suffices to prove the existence of a small \aczero circuit $\mathsf C$ such that $P_{X\sim\rg X \Delta n}(\mathsf C(X)=\subcol(X)\mid \text{$X$ is good}) = 1-n^{-\omega(1)}$. By Yao's Principle~\cite{Yao77} it suffices to prove the existence of a small, random \aczero circuit $\mathbf C$ such that $P(\mathbf C(X)=\subcol(X))=1-n^{-\omega(1)}$ for all fixed good $X$. More precisely,
\begin{align*}
	\max_{\mathsf C : P(\mathbf C = \mathsf C)>0}P_{X\sim\rg X \Delta n} \left(\mathsf C(X) = \subcol(X)\right) \ge \\
	\max_{\mathsf C : P(\mathbf C = \mathsf C)>0}P_{X\sim\rg X \Delta n} \left(\mathsf C(X) = \subcol(X) \mid \text{$X$ is good}\right) P(\text{$\rg X \Delta n$ is good})\ge \\
	P_{X\sim\rg X \Delta n} \left(\mathbf C(X) = \subcol(X) \mid \text{$X$ is good}\right) \cdot \left(1-n^{-\omega(1)}\right)\ge \\
	\left(1-n^{-\omega(1)}\right)\min_{\text{good $X$}}P(\mathbf C(X)=\subcol(X)).
\end{align*}
	
The following result is essentially implicit in \cite{LRR17} (as is the argument above) and helps keep the random circuit small:
	
\begin{lem}[Random Hashing]\label{hash}
	Let $S$ be a set containing a null element, and assume all elements of $S$ are represented using the same number of bits. Let $l = l(n) \le n^{O(1)}$ and $m = m(n)$ be functions of $n$. Then there exists a random \aczero circuit $\mathbf C : S^l \rightarrow S^{\tilde O(m)}$ such that if $A$ is an array of $l$ values in $S$, of which all but at most $m$ are null, then $\mathbf C$ has at most $|A|n^{o(1)}$ gates and $|A|\tilde O(l/m)$ wires, and w.h.p.\ the multiset of non-null elements of $\mathbf C(A)$ is the same as that of $A$.
\end{lem}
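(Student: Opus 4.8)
The plan is to route each non-null entry of $A$ to a slot of the output using $T = \Theta(\log^2 n)$ independent uniformly random hash functions $h_1,\dots,h_T\colon [l]\to[B]$ with $B = 2\lceil m\rceil$, all hard-wired into the random circuit $\mathbf C$ (so no gates are spent evaluating them); the output is the $T\times B$ array of slots, which has $\tilde O(m)$ entries. Write $s$ for the common bit-length of elements of $S$, so $|A|\asymp ls$, and assume $l>m$ (otherwise output $A$ padded to length $\tilde O(m)$). Call a position $i$ a \emph{winner at level $j$} if $A_i$ is non-null and $i$ is the only position of the bucket $h_j^{-1}(h_j(i))$ whose entry is non-null, and let $w(i)$ be the least such $j$ (or $\infty$ if none exists). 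Slot $(j,k)$ of $\mathbf C(A)$ emits $A_i$ for the (necessarily unique, if it exists) $i\in h_j^{-1}(k)$ with $w(i)=j$, and $\mathrm{null}$ otherwise.

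For correctness, fix an input $A$ and let $I\subseteq[l]$ be its set of non-null positions, $|I|\le m$. For each $i\in I$ and level $j$, the probability over $h_j$ that $i$ is not a winner at level $j$ is at most $(|I|-1)/B\le \tfrac12$, and these events are independent over $j$, so $\Pr[w(i)=\infty]\le 2^{-T}=n^{-\omega(1)}$; a union bound over the at most $n^{O(1)}$ positions of $I$ shows that w.h.p.\ $w(i)<\infty$ for every $i\in I$. On that event, each $i\in I$ deposits $A_i$ in exactly the slot $(w(i),h_{w(i)}(i))$: it is deposited there because $i$ is the unique non-null position of that bucket; it is deposited in no later bucket $h_j^{-1}(h_j(i))$ because that slot detects that $i$ already won at level $w(i)<j$ and suppresses it; and a slot emits a non-null value only when its bucket has exactly one non-null position, so two distinct entries are never OR-ed into a spurious value. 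Hence the multiset of non-null outputs equals $\{A_i:i\in I\}$, the multiset of non-null entries of $A$.

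The point that keeps $\mathbf C$ in \aczero is that the predicate ``$i$ won at some level $j'<j$'' is determined by $A$ and $h_{j'}$ alone, with no recursion through earlier levels, so the whole $T\times B$ array is produced in parallel at constant depth. For the size (which holds for every realization of $\mathbf C$, since $\sum_{j,k}|h_j^{-1}(k)|=Tl$ regardless of the hashes): first compute the non-null indicators $\mathsf{nn}(i)=\bigvee_b (A_i)_b$, costing $O(ls)$ wires and $O(l)$ gates. Next, for each bucket $P=h_j^{-1}(k)$ of size $N$ we need the predicate ``$P$ has at most one non-null position''; the obvious CNF $\bigwedge_{i<i'}(\neg\mathsf{nn}(i)\vee\neg\mathsf{nn}(i'))$ has $\Theta(N^2)$ gates, which is too many, so instead we take a separating family of $\lceil\log N\rceil$ bipartitions $(L_b,R_b)$ of $P$ (the bit-coordinates of a labeling of its positions) and write the predicate as $\bigwedge_b\neg\bigl((\bigvee_{i\in L_b}\mathsf{nn}(i))\wedge(\bigvee_{i\in R_b}\mathsf{nn}(i))\bigr)$, which is constant depth with $O(N\log N)$ wires and $O(\log N)$ gates; summing over all $\tilde O(m)$ buckets gives $\tilde O(l)$ wires and $\tilde O(m)$ gates. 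Finally, slot $(j,k)$ computes for each $i\in h_j^{-1}(k)$ a one-bit selector equal to the AND of $\mathsf{nn}(i)$, the ``bucket $(j,k)$ has at most one non-null'' predicate, and $\bigwedge_{j'<j}\neg(\text{``bucket }(j',h_{j'}(i))\text{ has at most one non-null''})$ (reusing the predicates already built), and emits $\bigvee_i(\mathrm{selector}_i\wedge (A_i)_b)$ in bit $b$; this is $O(|h_j^{-1}(k)|(s+T))$ wires and $O(|h_j^{-1}(k)|s)$ gates per slot, hence $\tilde O(ls)=\tilde O(|A|)$ in total. Altogether $\mathbf C$ has $\tilde O(|A|)\le |A|n^{o(1)}$ gates and $\tilde O(|A|)\le |A|\tilde O(l/m)$ wires.

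The main obstacle is exactly the balancing act in the previous paragraph: one cannot afford a cost quadratic in a bucket's size, and one cannot iterate the hashing more than $O(1)$ times (each round of hashing shrinks bucket sizes by only a polylogarithmic factor, so an iterated scheme would have depth $\omega(1)$). The resolution is that the $T$ hash functions are run \emph{in parallel} rather than sequentially — the sole sequential-looking ingredient, ``has $i$ already won?'', is a non-recursive predicate and so collapses to constant depth — and that ``at most one set bit'' is computed in constant depth and near-linear size via separating bipartitions instead of a quadratic CNF. The polylogarithmic gap between $\tilde O(m)$ and $m$ in the output length is precisely what makes room for the $T=\Theta(\log^2 n)$ parallel copies needed to drive the failure probability below $n^{-\omega(1)}$.
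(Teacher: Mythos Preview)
Your proof is correct and takes a genuinely different route from the paper's. The paper uses a \emph{single} random hash $\mathbf h\colon[l]\to[m]$: each bucket then holds $\tilde O(1)$ non-null entries w.h.p.\ (by \cref{tail}), and the output lists, for each bucket $p$ and each $q\le\log^{O(1)}n$, the $q$-th non-null element of $A[\mathbf h^{-1}(p)]$. The work in that scheme is selecting the $q$-th non-null entry from a bucket of size $N=\tilde O(l/m)$, which needs the threshold function $T_q^N$ for polylogarithmic $q$; the paper imports the nontrivial circuit of H\aa stad et al.\ (\cref{threshold}) to do this in \aczero with $N^{o(1)}$ gates and $\tilde O(N)$ wires per call. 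You instead run $\Theta(\log^2 n)$ independent hashes in parallel and route each non-null position to the first level at which it lands alone in its bucket; this replaces the polylog-threshold primitive by the much simpler ``at most one set bit'' predicate, for which you give an elementary $O(N\log N)$-wire, $O(\log N)$-gate circuit via coordinate bipartitions. Your argument is therefore more self-contained (no black box from \cite{Has+94}) and in fact achieves the stronger bound of $\tilde O(|A|)$ wires, whereas the paper's single-hash scheme really does pay $|A|\,\tilde O(l/m)$ wires because each of the $\tilde O(l)$ bucket positions feeds its own $\tilde O(l/m)$-wire threshold circuit. The trade-off is that the paper's output has width $m\times\tilde O(1)$ rather than your $\tilde O(m)\times 1$ after flattening, but both fit within the $\tilde O(m)$ allowed by the statement.
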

We remark that \cref{hash} will only be called with $l \le \tilde O(n)$.
\begin{proof}
	The result is trivial if $l \le m$ (simply return $A$) so assume otherwise. Let $\mathbf h:[l]\rightarrow[m]$ be a uniform random function. Let $\mathbf B$ be an $m\times \tilde O(1)$ array of values in $S$, where $\mathbf B[p,q]$ is the $q$'th non-null element of $\mathbf A^{(p)} \coloneqq A[\mathbf h^{-1}(p)]$ if this set has at least $q$ elements, and $\mathbf B[p,q]$ is null otherwise. Each of the at most $m$ non-null elements of $A$ is independently in $\mathbf A^{(p)}$ with probability $1/m$, so for any particular $p$, the sub-array $\mathbf B[p,:]$ is large enough to store the non-null elements of $\mathbf A^{(p)}$ w.h.p.\ (\cref{tail}). It follows from a union bound that $\mathbf B$ has the same non-null elements as $A$ w.h.p. Also assume that $|\mathbf h^{-1}(p)|$ is $\tilde O(l/m)$ for all $p$; this occurs w.h.p.\ by \cref{tail}. Under these conditions it suffices to compute $\mathbf B$, and this can be done as follows.
		
	For $x\in\bits^N$ let $T_k^N(x) = \Ind{\text{$x$ has at least $k$ ones}}$. Then $\mathbf J^{(p)}[i] \coloneqq \Ind{\mathbf A^{(p)}[i] \neq \nul}$ can be computed by applying a single $\textsf{OR}$ gate to all elements of $\mathbf A^{(p)}[i]$, and
	\begin{equation*}
	\mathbf B[p,q] = \bigvee_{\mathclap{i\in[|\mathbf h^{-1}(p)|]}} \left(T_q^i\left(\mathbf J^{(p)}[1:i]\right) \wedge \neg T_q^{i-1}\left(\mathbf J^{(p)}[1:i-1]\right) \wedge \mathbf A^{(p)}[i]\right).
	\end{equation*}
		
	\begin{fact}[{\cite[Theorem 6]{Has+94}}]\label{threshold}
		If $k=\floor{\log^\gamma N}$ for constant $\gamma$, then $T_k^N$ can be computed for $m=\floor\gamma+1$ by monotone unbounded fan-in circuits of depth $m+2$ with $2^{O(\log^{\gamma/m}N\log\log N)}$ gates, where $\gamma/m<1$, and $O(N\log^{2\gamma+2}N)$ wires.
	\end{fact}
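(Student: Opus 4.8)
The plan is to compute $T_k^N$ (here $k=\floor{\log^\gamma N}$ and $m=\floor\gamma+1$) by a \emph{perfect‑hashing} reduction that repeatedly shrinks the universe, bottoming out in a brute‑force threshold circuit once the universe has size polynomial in $k$. First I would record the trivial depth‑$2$ formula $T_k^N(x)=\bigvee_{S\in\binom{[N]}{k}}\bigwedge_{i\in S}x_i$, which has $\binom Nk\le 2^{O(k\log N)}$ gates --- already superpolynomial for every $\gamma>0$, so something nontrivial is needed. The point is that if some $k$ of the ones of $x$ can be ``separated'' by a partition of the index set into blocks, each block containing at most one of them, then counting ones reduces to counting nonempty blocks.

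To make this precise, call a partition $P$ of $[N]$ into $b$ blocks \emph{good} for a $k$‑set $S\subseteq[N]$ if each block of $P$ meets $S$ in at most one point, and let $\mathcal H$ be a family of partitions into $b$ blocks such that every $k$‑set has a good partition in $\mathcal H$. Then
\begin{equation*}
T_k^N(x)=\bigvee_{P\in\mathcal H}T_k^{b}\bigl(y^P(x)\bigr),\qquad y^P_j(x)=\bigvee_{i\in P_j}x_i,
\end{equation*}
since $\ge k$ ones landing in distinct blocks make $\ge k$ blocks nonempty, and $\ge k$ nonempty blocks trivially witness $\ge k$ ones. A first‑moment argument shows that $O(k\log N)$ uniformly random partitions into $b=\Theta(k^2)$ blocks form such a family (a fixed $k$‑set is separated by a random partition with probability $\ge 1/2$; then union‑bound over the at most $N^k$ sets), and with more care $b$ can be taken only slightly larger than $k$. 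This reduction adds one layer of $\textsf{OR}$ gates, which merges with the top $\textsf{OR}$ of $T_k^{b}$ so the \emph{depth} grows by only one, replaces $N$ by $b\ll N$, and multiplies the gate count by $|\mathcal H|=\mathrm{polylog}(N)$.

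When $\gamma<1$ (so $m=1$) this already suffices: a bounded number of shrinking steps brings the universe down to $\mathrm{poly}(k)$ with the threshold still $k$, the trivial formula then has only $2^{O(k\log k)}=2^{O(\log^\gamma N\log\log N)}=N^{o(1)}$ gates (subpolynomial precisely because $\gamma<1$), collapsing adjacent $\textsf{OR}$‑layers gives depth $3=m+2$, and since the only wire‑heavy layer is the bottom one computing the block‑$\textsf{OR}$s --- which reads each input $\mathrm{polylog}(N)$ times across all $|\mathcal H|$ partitions --- the wire count is $O(N\log^{O(1)}N)$, matching $O(N\log^{2\gamma+2}N)$ once one tracks the exponent through the construction.

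The main obstacle is the recursion for $\gamma\ge1$ (i.e.\ $m\ge2$): now the bottom‑out step $\binom{k^2}{k}=2^{\Theta(k\log k)}$ is itself superpolynomial once $k\ge\log N$, so brute force fails. One must interleave the hashing with a \emph{split‑the‑threshold} step, using the identity
\begin{equation*}
T_k^{N'}(x)=\bigvee_{\,k_1+\dots+k_a=k\,}\ \bigwedge_{j=1}^{a}T_{k_j}^{N'/a}\bigl(x|_{B_j}\bigr),
\end{equation*}
together with a hash family guaranteeing that for the true set of ones some admissible distribution has every part $k_j\le k^{1-1/m}$, so that the inner $\bigvee$ may range only over such balanced distributions and the recursive sub‑threshold circuits can be built with one fewer layer. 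Balancing how fast the universe shrinks against how fast the surviving threshold exponent shrinks --- arranging that after $m$ effective levels the threshold reached at the bottom is $k^{1/m}$, so the gate count is $2^{O(k^{1/m}\log k)}=2^{O(\log^{\gamma/m}N\log\log N)}=N^{o(1)}$ (this is where $\gamma/m<1$ enters) while the depth stays $m+2$ --- is the delicate bookkeeping, and is the step I expect to be hardest.
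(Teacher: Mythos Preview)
The paper does not prove this statement at all: it is stated as a \emph{Fact} with a citation to \cite[Theorem~6]{Has+94} and is used as a black box inside the proof of \cref{hash}. So there is no ``paper's own proof'' to compare against; you are proposing a proof of a result the paper imports from the literature.

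That said, your outline is the right one and is essentially the construction in the cited source: reduce the universe by a perfect-hashing family so that the $k$ ones land in distinct buckets, collapse each bucket with an $\textsf{OR}$, and recurse on the smaller threshold instance; iterate $m$ times so that the residual threshold at the bottom is $k^{1/m}\approx\log^{\gamma/m}N$ and the brute-force DNF there has $2^{O(\log^{\gamma/m}N\log\log N)}$ gates. Your ``split-the-threshold'' identity is the correct device for pushing the recursion through when $\gamma\ge1$, and your wire accounting (only the bottom layer touches all $N$ inputs, each $\mathrm{polylog}\,N$ times) is the reason the wire bound is near-linear. The one place your sketch is genuinely loose is the interaction between the hashing step and the balance condition $k_j\le k^{1-1/m}$: you need the hash family to guarantee not just separation but an approximately equidistributed split of the ones across the $a$ blocks, and you need to bound the number of admissible $(k_1,\dots,k_a)$ tuples so the outer $\bigvee$ does not blow up the gate count. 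Both are routine once stated carefully, but as written they are asserted rather than argued.
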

	
	Let $N = \tilde O(l/m) = n^{O(1)}$, and let $\gamma$ be a constant such that the dimensions of $\mathbf B$ are at most $m \times k$ where $k = \floor{\log^\gamma N}$. Let $\mathsf T$ be the $N^{o(1)}$-size (hence $n^{o(1)}$-size) circuit from \cref{threshold} that computes $T_k^N$. Observe that $T_q^i(x) = \mathsf T(x,y)$ where $y \in \bits^{N-i}$ is an arbitrary fixed string with exactly $k-q$ ones that can be hard-coded in. Therefore $\mathbf B[p,q]$ can be computed by an \aczero circuit of size $\sum_{i\in[|\mathbf h^{-1}(p)|]}\left(n^{o(1)}+|\mathbf A^{(p)}[i]|\right) \le \left|\mathbf A^{(p)}\right|n^{o(1)}$. Summing over $p$ and $q$, the total number of gates is $|A|n^{o(1)}$. To count wires instead of gates, replace $n^{o(1)}$ with $\tilde O(N) = \tilde O(l/m)$.
\end{proof}

Given $H\subseteq G$ and an ordering $\pi = (\pi^1,\dotsc,\pi^{v(H)})$ of $V(H)$, we can represent $\colt H X$ as a tree in the following way. Start with a rooted, depth-$v(H)$ tree (meaning the root has depth 0 and the leaves have depth $v(H)$) in which each interior node has $n$ unordered children labeled $1, \dotsc, n$. Then take the induced subtree of this tree on the union of all root-to-leaf paths $(\mathrm{root}, l_1, \ldots, l_{v(H)})$ such that\footnote{Recall that $(\pi^j)_{l_j}$ is a $\pi^j$-colored vertex in $X$.} $\pi^1_{l_1}, \ldots, \pi^{v(H)}_{l_{v(H)}}$ are the vertices of an $H$-colored subgraph of $X$.
	
With respect to an implicit $H$ and $\pi$, let $\delta_i = \dst_H(\pi^1 \cup \dotsb \cup \pi^i)$ for $0 \le i \le v(H)$, and let $\phi_i = \delta_{i+1}-\delta_i$ for $0\le i<v(H)$.
	
\begin{lem}\label{phi-bound}
	$0 \le \phi_i \le 1$ for all $i$.
\end{lem}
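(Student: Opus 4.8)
The plan is to prove the two inequalities separately, in both cases by taking a subgraph of $H$ that witnesses the value $\delta_i = \dst_H(\pi^1 \cup \dotsb \cup \pi^i)$ and modifying it slightly. Throughout, write $A_i = \pi^1 \cup \dotsb \cup \pi^i$ for the subgraph of $H$ consisting of the first $i$ vertices of $\pi$ (and no edges).

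For the lower bound $\phi_i \ge 0$, I would simply observe that $\dst_H$ is monotone: since $A_i \subseteq A_{i+1}$, every $F \in [A_{i+1}, H]$ also lies in $[A_i, H]$, so the minimum of $\Delta$ over the first interval is at least the minimum over the second. Hence $\delta_{i+1} = \dst_H(A_{i+1}) \ge \dst_H(A_i) = \delta_i$.

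For the upper bound $\phi_i \le 1$, I would pick $F \in [A_i, H]$ with $\Delta(F) = \delta_i$ (such an $F$ exists as the minimum is over a nonempty finite set) and enlarge it by the single vertex $\pi^{i+1}$, setting $F' = F \cup \pi^{i+1}$. Then $A_{i+1} \subseteq F' \subseteq H$, so $\delta_{i+1} \le \Delta(F')$. Applying \cref{Delta-sum} with the single-vertex graph $\pi^{i+1}$ gives $\Delta(F') = \Delta(F) + \Delta(\pi^{i+1}) - \Delta(F \cap \pi^{i+1})$. The last term is nonnegative since $\Delta$ is nonnegative on every subgraph (part of the definition of a threshold weighting, \cref{tp}), and $\Delta(\pi^{i+1}) = \alpha(\pi^{i+1}) \le 1$ because $\alpha$ takes values in $[0,1]$ (again \cref{tp}). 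Therefore $\delta_{i+1} \le \Delta(F') \le \Delta(F) + 1 = \delta_i + 1$, i.e. $\phi_i \le 1$.

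I do not expect any real obstacle here: the lemma falls out immediately from monotonicity of $\dst_H$ and the submodular-type identity \cref{Delta-sum}, combined with the two normalization bounds $\Delta \ge 0$ and $\alpha \le 1$ that are baked into the definition of a threshold weighting. The only point requiring a moment's care is the degenerate case $i = 0$, where the optimal $F$ may be the empty graph with $\Delta(F) = 0$; the same chain of inequalities applies verbatim.
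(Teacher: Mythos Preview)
Your proposal is correct and follows essentially the same approach as the paper: both prove $\delta_i \le \delta_{i+1}$ by monotonicity of the minimum over a shrinking interval, and both bound $\delta_{i+1}$ above by taking a minimizer $F$ for $\delta_i$, adjoining the single vertex $\pi^{i+1}$, and using $\Delta(F \cup \pi^{i+1}) \le \Delta(F) + \alpha(\pi^{i+1}) \le \delta_i + 1$. The only cosmetic difference is that you route the inequality $\Delta(F \cup \pi^{i+1}) \le \Delta(F) + \alpha(\pi^{i+1})$ through \cref{Delta-sum} and nonnegativity of $\Delta$, whereas the paper states it directly.
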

\begin{proof}
	Clearly $\delta_i \le \delta_{i+1}$. Let $A \subseteq H$ such that $\pi^1, \dotsc, \pi^i \in V(A)$ and $\Delta(A) = \delta_i$. Then $\delta_{i+1} \le \Delta(A \cup \pi^{i+1}) \le \Delta(A) + \alpha(\pi^{i+1}) \le \delta_i + 1$.
\end{proof}

Let $T=T(H,\pi)$ be a depth-$v(H)$ tree in which each node at depth $i<v(H)$ has $n^{\phi_i}\log^{c_i}n$ children, where $c_i$ is a sufficiently large constant. Each non-root node $N$ has a label $\mathcal L(N)\in\{\nul\}\cup[n]$, and the root is labeled ``root". It is required that if we ignore the null nodes of $T$, then $T$ is isomorphic to the tree representation of $\colt H X$ described above. 
	
If the underlying tree structure of $T$ (that is, everything except the labels) is implicit, then we can represent $T$ by an array of values in $\{\nul\} \cup [n]$, indexed by the nodes of $T$. Each of these values can be associated with a bit string in a natural way. We will consider circuits that compute $T$ according to this representation.
	
Let $S$ be an \emph{immediate subtree} of $T$ (resp.\ of a node $N$), denoted $S\in T$ (resp.\ $S\in N$), if $S$'s root is a child of $T$'s root (resp.\ of $N$). Any subtree is considered to have the same label as its root.
	
\begin{lem}\label{tree-size}
	$|T|$ is $\tilde O(n^{\Delta(H)})$.
\end{lem}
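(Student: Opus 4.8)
The plan is to count the nodes of $T$ one depth at a time and exploit the fact that $v(H) = O(1)$ for fixed $G$. Since every node at depth $i < v(H)$ has exactly $n^{\phi_i}\log^{c_i}n$ children and the root is the unique node at depth $0$, the number of nodes at depth $j$ equals $\prod_{i=0}^{j-1}\left(n^{\phi_i}\log^{c_i}n\right)$ (an empty product, i.e.\ the root, when $j=0$). Each node stores a label in $\{\nul\}\cup[n]$, which takes $O(\log n)$ bits; there are only $v(H)+1 = O(1)$ depths; and the product of the $\log^{c_i}n$ factors along any root-to-leaf path is $\log^{O(1)}n = \tilde O(1)$ because the $c_i$ are constants and there are $O(1)$ of them. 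Hence $|T| = \tilde O\!\left(\max_{0\le j\le v(H)} n^{\sum_{i=0}^{j-1}\phi_i}\right)$.

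Next I would evaluate the exponent by telescoping: $\sum_{i=0}^{j-1}\phi_i = \sum_{i=0}^{j-1}(\delta_{i+1}-\delta_i) = \delta_j - \delta_0$. Here $\delta_0 = \dst_H(\text{empty graph}) = \min_{F\subseteq H}\Delta(F) = 0$, since $\Delta$ is nonnegative on every subgraph of $G$ and vanishes on the empty graph. Writing $A$ for the edgeless graph on $\{\pi^1,\dotsc,\pi^j\}$, we have $\delta_j = \dst_H(A) = \min_{A\subseteq F\subseteq H}\Delta(F) \le \Delta(H)$, because $H$ itself lies in the interval $[A,H]$. Combining, every depth contributes $\tilde O(n^{\Delta(H)})$ nodes and $|T| = \tilde O(n^{\Delta(H)})$.

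I do not expect a genuine obstacle here: the proof is a telescoping sum together with the boundary identity $\delta_0 = 0$ and the trivial bound $\delta_j \le \Delta(H)$ coming straight from the definition of $\dst$. The only bookkeeping to keep honest is the level-by-level count and the verification that the accumulated $\log^{c_i}n$ factors and the $O(\log n)$-bit node labels stay inside $\tilde O(\cdot)$, which holds because $v(H)$ and the $c_i$ are all constants depending only on $G$.
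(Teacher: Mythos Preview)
Your proof is correct and follows essentially the same approach as the paper's: telescope $\sum_i \phi_i$ into $\delta_j - \delta_0$, use $\delta_0 = 0$, and bound the exponent by $\Delta(H)$. The only cosmetic difference is that the paper invokes $\phi_i \ge 0$ (Lemma~\ref{phi-bound}) to argue that the leaf level dominates the node count and then uses $\delta_{v(H)} = \Delta(H)$ exactly, whereas you bound each depth separately via the trivial $\delta_j \le \Delta(H)$; both routes are equally valid.
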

\begin{proof}
	$\delta_0 = \Delta(\emptyset) = 0$ and $\delta_{v(H)}=\dst_H(V(H))=\Delta(H)$. It takes $\tilde O(1)$ bits to store an element of $[n]^{V(H)}$, and each $\phi_i$ is nonnegative (\cref{phi-bound}), so
	\begin{equation*}
		|T| = \tilde O\left(\prod_{i=0}^{v(H)-1} n^{\phi_i}\right) = \tilde O\left(n^{\sum_{i=0}^{v(H)-1}\phi_i}\right) = \tilde O\left(n^{\delta_{v(H)}-\delta_0}\right) = \tilde O\left(n^{\Delta(H)}\right). \qedhere
	\end{equation*}
\end{proof}
	
\begin{lem}\label{sort}
	For all $H\subseteq G$ and orderings $\pi,\pi^\prime$ of $V(H)$, there exists a random \aczero circuit, independent of $X$, with $\tilde O(n^{\Delta(H)+2})$ wires, that computes $T(H,\pi^\prime)$ from $T(H,\pi)$ w.h.p.
\end{lem}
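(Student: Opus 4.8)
The plan is to reduce to the case where $\pi^\prime$ differs from $\pi$ by a single adjacent transposition, say $\pi^\prime$ swaps positions $j$ and $j+1$ of $\pi$. Since $v(H)\le v(G)=O(1)$, any ordering of $V(H)$ is reachable from any other by composing $O(1)$ adjacent transpositions, so it suffices to build a circuit for one such transposition and compose $O(1)$ copies, taking a union bound over the (independent) failure events of the sub-circuits; the circuit's only randomness lives in the hash functions invoked below, so it is independent of $X$. Here we use that every ordering $\sigma$ of $V(H)$ satisfies $|T(H,\sigma)|=\tilde O(n^{\Delta(H)})$ by \cref{tree-size}, since the endpoints $\delta_0=0$ and $\delta_{v(H)}=\Delta(H)$ of the sequence $(\delta_i)$ do not depend on $\sigma$; hence all intermediate trees are small as well.

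So fix the transposition of positions $j$ and $j+1$. Because $\pi$ and $\pi^\prime$ induce the same sets $\{\pi^1,\dots,\pi^{j-1}\}$ and $\{\pi^1,\dots,\pi^{j+1}\}$, all $\delta_i$ are unchanged except that $\delta_j$ becomes $\delta^\prime_j\coloneqq\dst_H(\pi^1\cup\dots\cup\pi^{j-1}\cup\pi^{j+1})$; thus only $\phi_{j-1}$ and $\phi_j$ change and their sum $\phi_{j-1}+\phi_j=\delta_{j+1}-\delta_{j-1}$ is preserved. Consequently $T(H,\pi)$ and $T(H,\pi^\prime)$ are identical strictly above depth $j-1$ and at and below depth $j+1$; only the two-level ``bipartite'' structure hanging off each depth-$(j-1)$ node $N$ differs. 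The circuit handles all such $N$ in parallel. Below a fixed $N$, the data of $T(H,\pi)$ amounts to a sparse collection of triples $(a,b,S)$, where $a\in[n]$ is a $\pi^j$-label, $b\in[n]$ is a $\pi^{j+1}$-label, and $S$ is the common subtree (of size $\tilde O(n^{\Delta(H)-\delta_{j+1}})$) hanging below; we must re-present this collection grouped first by $b$ (into $n^{\phi^\prime_{j-1}}\log^{c_{j-1}}n$ depth-$j$ children of $N$) and then by $a$ (into $n^{\phi^\prime_j}\log^{c_j}n$ depth-$(j+1)$ children). By goodness of $X$, for each fixed $b$ there are at most $\tilde O(n^{\phi^\prime_j})$ valid values of $a$, and there are at most $\tilde O(n^{\phi^\prime_{j-1}})$ valid values of $b$ in all.

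The re-presentation proceeds in three phases. \emph{Fattening}: for each $N$ and each $b\in[n]$, build the array $F[b]$ indexed by the depth-$j$ children of $N$ in $T(H,\pi)$, whose entry at the child labeled $a$ is the triple $(a,b,S)$ if that child has a grandchild labeled $b$ carrying subtree $S$, and null otherwise; each entry is a single layer of \textsf{AND} and \textsf{OR} gates over the $\tilde O(n^{\phi_j})$ grandchildren, and the total wire cost over all $N$ and $b$ is $\tilde O(n^{\Delta(H)+1})$. \emph{First hashing}: for each $N$ and $b$, apply \cref{hash} to $F[b]$ with $l=n^{\phi_{j-1}}\log^{c_{j-1}}n$, $m=\tilde O(n^{\phi^\prime_j})$, and entry size $\tilde O(n^{\Delta(H)-\delta_{j+1}})$, compressing $F[b]$ to $\tilde O(n^{\phi^\prime_j})$ entries while preserving w.h.p.\ its multiset of non-null triples; these become the depth-$(j+1)$ children of $N$ in $T(H,\pi^\prime)$, with $a$-labels read off the triples. \emph{Second hashing}: for each $N$, regard the previous results as an array indexed by $b\in[n]$ whose $b$'th entry is the size-$\tilde O(n^{\Delta(H)-\delta^\prime_j})$ substructure obtained for that $b$, and apply \cref{hash} again with $l=n$ and $m=\tilde O(n^{\phi^\prime_{j-1}})$, producing the depth-$j$ children of $N$ with $b$-labels read off. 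Finally, copy over the parts of the tree above depth $j-1$ and pad every level to exactly $n^{\phi^\prime_i}\log^{c_i}n$ children, which is possible since the $c_i$ are taken sufficiently large relative to the polylogarithmic factors from \cref{hash}. A short computation using $0\le\phi_{j-1}\le1$ (\cref{phi-bound}) and $\delta_{j-1}\le\delta^\prime_j\le\delta_{j+1}$ shows that the wire cost of each hashing phase, summed over all $N$, is $\tilde O(n^{\Delta(H)+2})$; combining the $O(1)$ transpositions and the fact that each phase has constant depth yields an \aczero circuit with $\tilde O(n^{\Delta(H)+2})$ wires.

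The main obstacle is that \cref{hash} hashes by position index, not by content, so one cannot re-bucket the already-compact sub-forest of $T(H,\pi)$ directly by the label values: one must first ``fatten'' the relevant coordinate into an array indexed by all of $[n]$ and only then hash it back down to the correct compact size. Keeping this fatten-then-hash pattern within $\tilde O(n^{\Delta(H)+2})$ wires — rather than $\tilde O(n^{\Delta(H)+3})$, which is what one gets by fattening both coordinates simultaneously — is the delicate point, and is why the $\pi^j$-coordinate (already compact in $T(H,\pi)$) is compressed before the $\pi^{j+1}$-coordinate is fattened.
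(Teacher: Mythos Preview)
Your proposal is correct and follows essentially the same approach as the paper: reduce to an adjacent transposition, then for each depth-$(j-1)$ node $N$ expand the two-level sub-forest into $n$ candidate $b$-subtrees (your ``fattening'' is the paper's construction of $\sigma'_j$ for $j\in[n]$), hash each $b$-subtree's children from $\tilde O(n^{\phi_{j-1}})$ down to $\tilde O(n^{\phi'_j})$, and then hash the $n$ values of $b$ down to $\tilde O(n^{\phi'_{j-1}})$. Your wire accounting matches the paper's $\tilde O(n^{\Delta(H)+1})$ for the expansion step and the extra $\tilde O(n)$ factor from hashing.
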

	
\begin{proof}
	Assume that $\pi$ and $\pi^\prime$ differ only in positions $d$ and $d+1$. (The general case can be reduced to at most $\binom{v(H)}2$ copies of this circuit in succession.) Define $\delta_i^\prime$ and $\phi_i^\prime$ analogously to $\delta_i$ and $\phi_i$, but with respect to $\pi^\prime$ rather than $\pi$. Clearly $\delta_i = \delta_i^\prime$ for $i\neq d$, so $\phi_i = \phi_i^\prime$ for $i\notin\{d-1,d\}$.
		
	For each depth-$(d-1)$ node $N$ of $T(H,\pi)$, in parallel, do the following. For $\sigma \in N, j \in [n]$ let $\tau^\prime_{\sigma j} = \bigvee_{\tau \in \sigma} \left((\mathcal L(\tau) = j) \wedge \tau^{(\mathcal L(\sigma))}\right)$, where $\tau^{(\mathcal L(\sigma))}$ is formed from $\tau$ by replacing its (root's) label with $\mathcal L(\sigma)$. Let $\sigma^\prime_j$ be the tree whose immediate subtrees are $\tau^\prime_{\sigma j}$ for $\sigma \in N$, and whose label is $\left(\bigvee_{\sigma\in N}\bigvee_{\tau\in\sigma} (\mathcal L(\tau)=j)\right) \wedge \overline j$ where $\overline j$ is the bit-string representation of $j$. Hash the number of immediate subtrees of $\sigma_j^\prime$ down to $\tilde O(n^{\phi_d^\prime})$ for each $j$ in parallel, and hash the number of $\sigma_j^\prime$ down to $\tilde O(n^{\phi_{d-1}^\prime})$. (The hashing uses \cref{hash} and succeeds w.h.p.\ because $X$ is good; also note that $\phi_d+\phi_{d-1} = \delta_{d+1} - \delta_{d-1} = \phi_d^\prime+\phi_{d-1}^\prime$.) Finally, the new children of $N$ are the remaining $\sigma^\prime_j$.
		
	Computing $\tau^\prime_{\sigma j}$ takes $\tilde O(\sum_{\tau\in\sigma}|\tau|) = \tilde O(|\sigma|)$ wires, so computing $\sigma^\prime_j$ takes $\tilde O(\sum_{\sigma \in N}|\sigma|) = \tilde O(|N|)$ wires, and doing this for all $N$ and $j$ takes $\tilde O(n|T|) = \tilde O(n^{\Delta(H)+1})$ wires (\cref{tree-size}). The hashing increases the number of wires by a factor of $\tilde O(n)$.
\end{proof}

For $uv \in E(G)$ we can construct $T(uv)$ as follows. Suppose we're given the adjacency matrix $A \in \bits^{n^{\alpha(u)} \times n^{\alpha(v)}}$ such that $A_{ij} = \Ind{u_iv_j\in E(X)}$. Let $\tau^\prime_{ij} = A_{ij} \wedge \overline i$. Let $\sigma^\prime_j$ be the tree with children $\tau^\prime_{ij}$ for $i \in [n^{\alpha(u)}]$, and label $\left(\bigvee_i A_{ij}\right) \wedge \overline j$. This setup is equivalent to the situation immediately before the hashing in the proof of \cref{sort}, and the rest of the construction is the same. This takes $\tilde O(n^3)$ wires, including the hashing.
	
\begin{lem}\label{merge}
	For all $H, H^\prime \subseteq G$ and orderings $\pi$ and $\pi^\prime$ of $V(H)$ and $V(H^\prime)$ respectively, there exists a random \aczero circuit, independent of $X$, with $\tilde O(n^{\max(\Delta(H), \Delta(H^\prime)) + 2})$ wires, that computes $T(H\cup H^\prime,\hat\pi)$ from $T(H,\pi)$ and $T(H^\prime,\pi^\prime)$ w.h.p.\ for some $\hat \pi$.
\end{lem}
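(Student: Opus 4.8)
The plan is to implement a ``sort--merge join'' on the tree representations: first make the two input trees agree on a common prefix of colors coming from $V(H\cap H')$, then zip them together along that prefix and take a Cartesian product below it, rebalancing the fan-out at every level by hashing.

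First I would apply \cref{sort} to reorder $V(H)$ and $V(H')$ so that $\pi$ and $\pi'$ both begin with one fixed ordering $\rho=(\rho^1,\dots,\rho^s)$ of $V(H\cap H')$, where $s=v(H\cap H')$; this costs $\tilde O(n^{\Delta(H)+2})$ and $\tilde O(n^{\Delta(H')+2})$ wires. I then take $\hat\pi$ to be $\rho$, followed by the remaining vertices of $V(H)$ in $\pi$-order, followed by the remaining vertices of $V(H')$ in $\pi'$-order. The structural fact that makes this work is that $E(H\cup H')=E(H)\cup E(H')$ with every edge of $H$ (resp.\ $H'$) inside $V(H)$ (resp.\ $V(H')$): hence a length-$s$ label sequence occurs at depth $s$ of $T(H\cup H',\hat\pi)$ iff it occurs at depth $s$ of both $T(H,\pi)$ and $T(H',\pi')$, and below such a matching node the set of $(H\cup H')$-colored completions of the corresponding partial subgraph is exactly the Cartesian product of its $H$-colored completions (the leaves of the subtree $S_H\subseteq T(H,\pi)$ rooted there) with its $H'$-colored completions (the leaves of $S_{H'}\subseteq T(H',\pi')$), since any two such completions agree on $V(H\cap H')$ and their union is automatically $(H\cup H')$-colored. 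So $T(H\cup H',\hat\pi)$ is obtained from the two input trees by discarding the non-matching depth-$s$ prefixes and then, under each surviving prefix, grafting a copy of $S_{H'}$ onto every leaf of $S_H$.

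The circuit builds $T(H\cup H',\hat\pi)$ level by level, in the spirit of the proof of \cref{sort}. For depths $i<s$ it only tracks which length-$i$ label sequences occur in both input trees; at each such node it locally spreads the (hashed) children arrays of the two trees into two length-$n$ arrays indexed by the child's label, ANDs them, and re-hashes the result down to $n^{\phi_i}\log^{c_i}n$ slots via \cref{hash}. At depth $s$ it does the same matching and attaches $S_H$ with grafted copies of $S_{H'}$; inside $S_H$ and $S_{H'}$ no matching is needed, but the fan-outs must still be converted from the ones dictated by $\dst_H$ and $\dst_{H'}$ to those dictated by $\dst_{H\cup H'}$, again by \cref{hash}. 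Each application of \cref{hash} succeeds w.h.p.\ because $X$ is good: a depth-$i$ node of $T(H\cup H',\hat\pi)$ corresponds to a partial $(H\cup H')$-colored subgraph, and the number of ways to extend it by the next vertex of $\hat\pi$ is $\tilde O(n^{\phi_i})$ by the definition of good applied with universe $U=H\cup H'$. The per-node work (spreading, ANDing, navigating hashed children, hashing) is a constant number of \aczero operations as in \cref{sort}, hashing contributes an extra $\tilde O(n)$ factor, and \cref{tree-size} bounds every intermediate tree; summing over nodes and levels yields the claimed $\tilde O(n^{\max(\Delta(H),\Delta(H'))+2})$ wire total.

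I expect the main obstacle to be carrying out the depth-$s$ matching without the quadratic blow-up a naive pairing of the two depth-$s$ prefix sets would incur, and without ever materializing a full $[n]^s$-indexed array of prefixes (which would be far too large, since the trees are small precisely because most prefixes do not extend). The way around this is exactly the spread/AND/re-hash idea above, performed one level at a time so that only length-one relabelings to $[n]$ ever arise; verifying that this fits inside the wire budget, and that the re-hash targets $n^{\phi_i}\log^{c_i}n$ are always large enough — which is where goodness of $X$ enters — is the crux.
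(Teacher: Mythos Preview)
Your proposal is correct and follows essentially the same approach as the paper: reorder via \cref{sort} so that $V(H\cap H')$ is a common prefix, zip the two trees together level by level along that prefix while carrying the pair of subtrees $(\sigma,\sigma')$, graft $S_{H'}$ onto the leaves of $S_H$ below depth $s$, and use \cref{hash} plus goodness of $X$ to keep fan-outs under control. The only cosmetic differences are that the paper matches children by iterating over the side with fewer children and OR-ing over the other (rather than spreading both into a length-$n$ array and AND-ing), defers all hashing to a single pass after building the joined prefix tree with fan-out $\psi_i=\min(\phi_i,\phi_i')$, and invokes \cref{dimensions-2} to observe that the fan-outs below depth $s$ already agree with those of $T(H\cup H',\hat\pi)$ so no re-hashing is needed there (though it notes, as you do, that hashing would also work).
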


\begin{proof}
	Let $T = T(H,\pi)$ and $T^\prime = T(H^\prime,\pi^\prime)$. By \cref{sort} we can assume without loss of generality that $\{\pi^1, \dotsc, \pi^{v(H\cap H^\prime)}\} = \{\pi^{\prime1}, \dotsc, \pi^{\prime v(H\cap H^\prime)}\} = V(H\cap H^\prime) = V(H) \cap V(H^\prime)$, and that $\pi^k = \pi^{\prime k} = \hat\pi^k$ for $k \in [v(H\cap H^\prime)]$. Define $\phi^\prime$ and $\hat\phi$ with respect to $(H^\prime, \pi^\prime)$ and $(H\cup H^\prime, \hat \pi)$ respectively.
		
	Let $\psi_i = \min(\phi_i,\phi_i^\prime)$. For $0 \le d \le v(H\cap H^\prime)$ let $S_d$ be a depth-$d$ tree in which each node at depth $i<d$ (including $i=0$) has $\tilde O(n^{\psi_i})$ children. Again, each non-root node $N$ of $S_d$ has a label $\mathcal L(N)\in\{\nul\}\cup[n]$, and the root is labeled ``root". It is required that if we ignore null nodes, then $S_d$ is isomorphic to the intersection of the depth-$d$ truncations of $T$ and $T^\prime$. Furthermore, each leaf $\ell$ of $S_d$ is associated with the pair $(\sigma,\sigma^\prime)$ of subtrees of $T$ and $T^\prime$ such that the $\mathrm{root}(S)$-to-$\ell$ path in $S$, the $\mathrm{root}(T)$-to-$\mathrm{root}(\sigma)$ path in $T$, and the $\mathrm{root}(T^\prime)$-to-$\mathrm{root}(\sigma^\prime)$ path in $T^\prime$ are all the same sequence of labels.
		
	The tree $S_0$ is the single node $(T,T^\prime)$, and we can compute $S_{d+1}$ from $S_d$ by doing the following for each leaf $(\sigma,\sigma^\prime)$ of $S_d$ in parallel. Assume without loss of generality that $\psi_d=\phi_d$. (If $\psi_d = \phi_d^\prime$, reverse the roles of $\sigma$ and $\sigma^\prime$ in the following construction.) For $\tau\in\sigma$ let $\rho_\tau$ be the immediate subtree of $\sigma^\prime$ with the same label as $\tau$ (if this exists), i.e.\ $\rho_\tau = \bigvee_{\tau^\prime\in\sigma^\prime} ((\mathcal L(\tau)=\mathcal L(\tau^\prime)) \wedge \tau^\prime)$. Replace $(\sigma,\sigma^\prime)$ with a new node with children $(\rho_\tau\neq\nul) \wedge (\tau,\rho_\tau)$ for all $\tau\in\sigma$. Assign the node replacing $(\sigma,\sigma^\prime)$ the same label as $(\sigma,\sigma^\prime)$, and assign $(\tau,\rho_\tau)$ the same label as $\tau$ and $\rho_\tau$.
		
	Computing $\rho_\tau$ takes $\tilde O\left(\sum_{\tau^\prime\in\sigma^\prime} |\tau^\prime|\right) = \tilde O(|\sigma^\prime|)$ wires, and there are at most $n$ values of $\tau$ (\cref{phi-bound}), so computing $\rho$ takes $\tilde O(n|\sigma^\prime|)$ wires. Given $\rho$, computing the leaves of the replacement for $(\sigma,\sigma^\prime)$ takes $O\left(\sum_{\tau\in\sigma}(|\tau|+|\rho_\tau|)\right) = O(|\sigma|+|\sigma^\prime|)$ wires. Since the roles of $\sigma$ and $\sigma^\prime$ might be reversed above, all of this takes at most $\tilde O(n|\sigma| + n|\sigma^\prime|)$ wires. Since $S_d$ has $\tilde O\left(n^{\sum_{i<d}\psi_i}\right)$ leaves, the number of wires is at most
	\begin{equation*}
	\tilde O\left(n^{\sum_{i<d}\psi_i}(n|\sigma|+n|\sigma^\prime|)\right) \le \tilde O\left(n^{1+\sum_{i<d}\phi_i}|\sigma| + n^{1+\sum_{i<d}\phi_i^\prime} |\sigma^\prime|\right) = \tilde O\left(n|T|+n|T^\prime|\right).
	\end{equation*}
		
	Let $S = S_{v(H\cap H^\prime)}$. For $d$ from $v(H\cap H^\prime)-1$ down to 0, for each depth-$d$ node $N$ in $S$, hash (\cref{hash}) the number of immediate subtrees of $N$ down from $\tilde O(n^{\psi_d})$ to $\tilde O(n^{\hat\phi_d})$, and if all of $N$'s children are null and $d>0$ then set $N$ to null. (We remark that $\hat\phi_d \le \psi_d$ by \cref{dimensions}.) This takes $\tilde O(|S|n) \le \tilde O((|T|+|T^\prime|)n) = \tilde O(n^{\max(\Delta(H), \Delta(H^\prime)) + 1})$ wires (\cref{tree-size}). By induction on $d$, a node retains its label if and only if it should retain its label in $T(H\cup H^\prime, \hat \pi)$, so the hashing succeeds w.h.p.\ because $X$ is good.
		
	Finally, for each leaf $(\tau,\tau^\prime)$ of $S$, append a copy of $\tau^\prime$ to each leaf of $\tau$, and put this in place of $(\tau,\tau^\prime)$ in $S$. This operation is purely semantic and requires no wires. The resulting tree does in fact have the proper number of children per node to be $T(H\cup H^\prime, (\pi^1,\dotsc,\pi^{v(H)},\pi^{\prime v(H\cap H^\prime)+1},\dotsc,\pi^{\prime v(H^\prime)}))$ by \cref{dimensions-2},\footnote{For $v(H\cap H^\prime) \le k < v(H)$ apply \cref{dimensions-2} with $L=H, R = H^\prime, A = H[\pi^1, \dotsc, \pi^k], B = H[\pi^1, \dotsc, \pi^{k+1}], C = A \cap B$, and for $v(H\cap H^\prime) \le k < v(H^\prime)$ apply \cref{dimensions-2} with $L = H^\prime, R = H, A = H^\prime[\pi^{\prime 1}, \dotsc, \pi^{\prime k}], B = H^\prime[\pi^{\prime 1}, \dotsc, \pi^{\prime k+1}], C = H$.} but without this knowledge we could instead use hashing on $\tau$ and $\tau^\prime$ as above, without knowing whether or not it succeeds vacuously.
\end{proof}

For each successive $H$ in an optimal union sequence, compute $T(H)$ as described above, and then apply a single $\textsf{OR}$ gate to all leaves of $T(G)$.

\section*{Acknowledgments}
\addcontentsline{toc}{section}{Acknowledgments}

Thanks to Benjamin Rossman for introducing me to this topic, and for having many helpful discussions about the research and about drafts of this paper. Thanks to Henry Yuen and the anonymous reviewers for their feedback as well. Part of this work was done while the author was visiting the Simons Institute for the Theory of Computing.

\appendix
\section{Equivalence of Threshold Weightings and Markov Chains}\label{markov}

\begin{thm}
	For any threshold weighting $(\alpha,\beta)\in\tcl(G)$ there exists a function $M:V(G)\times V(G)\rightarrow\R_{\ge0}$ such that
	\begin{enumerate}
		\item $M(u,u)=0$ for all $u$,
		\item $M(u,v)+M(v,u) = \beta(uv)$ for all $u\neq v$, and
		\item $\sum_{v\in V(G)} M(u,v) = \alpha(u)$ for all $u$.
	\end{enumerate}
\end{thm}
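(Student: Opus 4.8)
The plan is to recast the statement as a flow-feasibility problem (equivalently, a transportation problem) and to observe that the cut condition guaranteeing feasibility is exactly the defining inequality $\Delta(H)\ge0$ of a threshold weighting.

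First I would build an $s$-$t$ network with a source $s$, a sink $t$, one ``vertex node'' $\hat u$ for each $u\in V(G)$, and one ``edge node'' $\hat e$ for each $e\in E(G)$, with an arc $s\to\hat u$ of capacity $\alpha(u)$, an arc $\hat e\to t$ of capacity $\beta(e)$, and an arc $\hat u\to\hat e$ of capacity $+\infty$ whenever $u$ is an endpoint of $e$. Setting $M(u,v)$ equal to the flow on $\hat u\to\widehat{uv}$ for $uv\in E(G)$, and $M(u,v)=0$ when $u=v$ or $uv\notin E(G)$, one checks that a legal $M$ is precisely a flow saturating every arc out of $s$ and every arc into $t$: condition~2 is saturation of the arcs into $t$ (together with the convention $\beta(e)=0$ off $E(G)$ from \cref{tp}), condition~3 is flow conservation at the $\hat u$ plus saturation of the arcs out of $s$, condition~1 holds by fiat, and $M\ge0$ is nonnegativity of the flow.

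The crux is to show the maximum flow equals $\beta(G)$ (equivalently $\alpha(G)$, since $\Delta(G)=0$). A finite-capacity $s$-$t$ cut is determined by the set $S\subseteq V(G)$ of vertex nodes placed on the source side; finiteness forces every edge meeting $S$ onto the source side too, so the cheapest such cut has capacity $\alpha\big(V(G)\setminus S\big)+\beta(G)-\beta\big(G[V(G)\setminus S]\big)=\Delta\big(G[V(G)\setminus S]\big)+\beta(G)\ge\beta(G)$, with equality at $S=\emptyset$ and at $S=V(G)$. By the max-flow min-cut theorem --- in its LP-duality form, which holds for arbitrary nonnegative real capacities, the maximum being attained by compactness --- there is a flow of value $\beta(G)$; since the arcs out of $s$ have total capacity $\alpha(G)=\beta(G)$ and the arcs into $t$ have total capacity $\beta(G)$, any such flow must saturate all of them, which is exactly what conditions~2 and~3 demand. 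Reading off $M$ as above then finishes the proof. I do not anticipate a real obstacle: the only points needing care are the identification of the min-cut value with $\Delta$ of the complementary induced subgraph, and the (standard) use of max-flow min-cut over the reals. Alternatively one could quote the feasibility criterion for transportation problems directly on the bipartite vertex--edge incidence graph of $G$, where the Hall-type condition $\sum_{e\in T}\beta(e)\le\sum_{u\in N(T)}\alpha(u)$ for $T\subseteq E(G)$ is again literally $\Delta(H)\ge0$ with $H$ the subgraph on edge set $T$ and its incident vertices, and the balance condition is $\Delta(G)=0$.
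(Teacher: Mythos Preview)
Your argument is correct. The cut computation is exactly right: a finite $s$--$t$ cut is parameterized by the set $S$ of vertex nodes on the source side, the forced edge nodes are those meeting $S$, and minimizing over the remaining edge nodes leaves only the forced ones, giving cut value $\alpha(V(G)\setminus S)+\beta(G)-\beta(G[V(G)\setminus S])=\Delta(G[V(G)\setminus S])+\beta(G)\ge\beta(G)$. Max-flow min-cut then yields a saturating flow, and conservation at the $\hat u$ and $\hat e$ nodes translates directly into conditions~3 and~2. The transportation-problem phrasing you sketch at the end is an equally clean way to say the same thing.

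The paper's proof is entirely different: it is a bare-hands induction on $v(G)$. One picks a proper nonempty induced subgraph $H$ minimizing $\Delta$, assigns the off-diagonal block $M(u,v)$ for $u\in V(H),\,v\in V(G-H)$ so that its total mass is exactly $\Delta(H)$, checks that the residual weightings on $H$ and on $G-H$ are again threshold weightings (this is where the minimality of $\Delta(H)$ is used), and recurses on the two pieces. Your route is shorter and makes transparent \emph{why} the defining inequalities $\Delta(H)\ge0$ are exactly what is needed---they are literally the cut constraints---at the cost of importing max-flow min-cut over $\R_{\ge0}$. The paper's route is self-contained and yields an explicit recursive construction of $M$, which may be preferable if one later wants algorithmic control over the Markov chain, but it requires more bookkeeping to verify that the two residual weightings stay in $\tcl(H)$ and $\tcl(G-H)$.
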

\begin{proof}
	Let $\Delta = (\alpha,\beta)$. The proof is by induction on $v(G)$. If $G$ is a single vertex $u$ then $\tcl(G)$ consists only of $\alpha=0$, so setting $M(u,u) = 0$ satisfies the requirements. Now assume $v(G) > 1$. For $A, B \subseteq G$ let $M(A, B) = \sum_{u \in V(A), v \in V(B)} M(u, v)$ (once $M(u,v)$ is specified). Assume without loss of generality that $G$ is a clique, since we can assign $\beta=0$ on nonexistent edges.
	
	Let $H = \mathrm{argmin}_{F\subset G, 0 < v(F) < v(G)}\Delta(F)$, where ties are broken arbitrarily subject to $H$ being an induced subgraph of $G$. Since $\Delta(G)=0$,
	\begin{equation*}
	\beta(H,G-H) = \Delta(G) + \beta(H,G-H) = \Delta(H) + \Delta(G-H) \ge \Delta(H),
	\end{equation*}
	so for $u\in V(H), v\in V(G-H)$ we can define $M(u,v)\in[0,\beta(uv)]$ such that $M(H,G-H)=\Delta(H)$. For $u\in V(H)$ let $\alpha_H(u) = \alpha(u) - M(u,G-H)$, and let $\Delta_H$ be the restriction of $\alpha_H - \beta$ to subgraphs of $H$. For any $\emptyset\subset F\subseteq H$,
	\begin{equation*}
	\Delta_H(F) = \Delta(F) - M(F,G-H) \ge \Delta(F) - M(H,G-H) \ge \Delta(H) - M(H,G-H) = 0,
	\end{equation*}
	with equality if $F=H$. Therefore $\Delta_H$ is a threshold weighting on $H$. Recursively define a restriction of $M$ to $V(H)\times V(H)$ such that this restriction is a Markov Chain on $H$ that is equivalent to $\Delta_H$.
	
	For $u \in V(G-H), v \in V(H)$ let $M(u,v) = \beta(uv) - M(v,u)$. For $u \in V(G-H)$ let $\alpha_{G-H}(u) = \alpha(u) - M(u,H)$, and let $\Delta_{G-H}$ be the restriction of $\alpha_{G-H} - \beta$ to subgraphs of $G-H$. Then,
	\begin{align*}
	\Delta_{G-H}(G-H) = \Delta(G-H) - M(G-H, H) = \Delta(G-H) - \beta(G-H, H) + M(H, G-H) \\
	= \Delta(G-H) - \beta(G-H, H) + \Delta(H) = \Delta(G) = 0.
	\end{align*}
	For any $\emptyset\subset F\subset G-H$, if $v(F) < v(G-H)$ then
	\begin{equation*}
	\Delta_{G-H}(F) = \Delta(F)-M(F,H) \ge \Delta(F)-\beta(F,H) \ge \Delta(G[V(H), V(F)])-\Delta(H) \ge 0,
	\end{equation*}
	and if $v(F) = v(G-H)$ then $\Delta_{G-H}(F) \ge \Delta_{G-H}(G-H) = 0$. Therefore $\Delta_{G-H}$ is a threshold weighting on $G-H$. Recursively define a restriction of $M$ to $V(G-H)\times V(G-H)$ such that this restriction is a Markov Chain on $G-H$ that is equivalent to $\Delta_{G-H}$.
	
	We now verify that $M(u,G) = \alpha(u)$ for all $u$; the other requirements follow easily by induction. If $u \in V(H)$ then $M(u,H) = \alpha_H(u)$ by induction, and $M(u,G-H) = \alpha(u) - \alpha_H(u)$ by the definition of $\alpha_H$. Similarly, if $u \in V(G-H)$ then $M(u,G-H) = \alpha_{G-H}(u)$ and $M(u,H) = \alpha(u) - \alpha_{G-H}(u)$. Therefore $M(u,G) = M(u,H) + M(u,G-H) = \alpha(u)$ for all $u$.
\end{proof}

\section{Proof that \texorpdfstring{$\kcol\left(\ham q d\right)$}{kappa(Kqd)} is \texorpdfstring{$O(q^d/d)$}{O(qd/d)} for all \texorpdfstring{$q$}{q}}\label{o-qdd-all}

The proof below is self-contained; however in places with clear analogues in \cref{o-qdd-even} we will give less detailed explanations of the intermediate steps and intuition.

Fix $q$ and $d$. Let a \emph{query tree} be a binary tree in which each node is labeled with some $U_1 \times \dotsb \times U_d$ where each $U_i \subseteq [q]$. The root is labeled with $[q]^d$, each leaf is labeled with a singleton set, and for any interior node $N$ labeled with $U_1 \times \dotsb \times U_d$ there exist $i \in [d]$ and $k \in U_i$ such that the left and right children of $N$ are labeled with $U_1 \times \dotsb \times U_{i-1} \times (U_i - k) \times U_{i+1} \times \dotsb \times U_d$ and $U_1 \times \dotsb \times U_{i-1} \times \{k\} \times U_{i+1} \times \dotsb \times U_d$ respectively. (In the latter case, $U_i$ necessarily has at least two elements.)
	
With respect to an implicit query tree $T$, let $\ell_0, \dotsc, \ell_{q^d-1}$ be the leaves in increasing order from left to right, and for $0 \le a \le q^d$ let $G(a) = \ham q d[\ell_0, \dotsc, \ell_{a-1}]$. Let $\mu_T = \max_a \Du(G(a))$ and let $\mu$ be the maximum of $\mu_T$ over all query trees $T$. For a threshold weighting $\Delta \in \tcl\left(\ham q d\right)$ and $H \subseteq \ham q d$ let $\kappa_\Delta(H) = \min_{S\in\seq(H)}\max_{F\in S}\Delta(F)$, and if $H$ is a single-vertex graph or the empty graph then let $\kappa_\Delta(H)=0$. By \cref{max-at-one} it suffices to prove that $\kappa_\Delta\left(\ham q d\right)$ is $O(q^d/d)$ for all threshold weightings $\Delta = (1,\beta)$.
	
\begin{lem}\label{some-lemma}
	Fix a query tree $T$. Let $0 \le a < b \le q^d$ such that $\ell_a, \dotsc, \ell_{b-1}$ are exactly the leaves descended from some node of $T$. Let $\Delta = (1,\beta) \in \tcl\left(\ham q d\right)$ such that $\beta(G(a)) \ge \Bu(G(a))$ and $\beta(G(b)) \ge \Bu(G(b))$, and $\kappa_\Delta(G(a)) \le 2\mu$. Then $\kappa_\Delta(G(b)) \le 2\mu$.
\end{lem}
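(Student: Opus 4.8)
The plan is to adapt the proof of \cref{tricky}, with query subtrees playing the role of the coordinate splits of a sub-hypercube used there; note that the even-$q$ argument of \cref{o-qdd-even} reduced to the case $q=2$ via a blow-up, which is unavailable for general $q$, so we argue about $\ham q d$ directly. Write $B = G(b) - G(a) = \ham q d[\ell_a, \dotsc, \ell_{b-1}]$; since $\ell_a, \dotsc, \ell_{b-1}$ are exactly the leaves under a node of the query tree, $B$ is a ``box'' $K_{c_1} \cart \dotsb \cart K_{c_d}$ with $c_i = |U_i|$. I would begin with three observations. \emph{(i)} By applying an automorphism of $\ham q d$ (which fixes $\mu$ and maps $\tcl(\ham q d)$ to itself) and then peeling alphabet elements off one coordinate at a time, every box is isomorphic to a prefix of some query tree; hence $\Du(B) \le \mu$, and since $\Delta = (1,\beta)$ and $\Du = (1,\Bu)$ agree on single vertices, $\Delta(B) \le \Du(B) \le \mu$ whenever $\beta(B) \ge \Bu(B)$. \emph{(ii)} $\Delta$ is subadditive over vertex-disjoint graphs, with equality when no edges run between them; also $\mu \ge \Du(G(1)) = 1$ and $\Delta(e) \le 2$ for every edge $e$. \emph{(iii)} The hypotheses and conclusion of \cref{some-lemma} depend on the query tree only through the graphs $G(a)$ and $G(b)$ and through $\mu$ (a maximum over \emph{all} query trees), so within the induction we may freely rearrange the subtree below $N$, and when recursing ``inside'' a box we may relabel it to a standard prefix; this is the analogue of the ``all labelings'' strengthening of the inductive hypothesis in \cref{tricky}.

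Then I would induct on $b - a = v(B)$, splitting into two cases as in \cref{tricky}. If $\beta(B) \ge \Bu(B)$, the goal is $\kappa_\Delta(B) \le 2\mu$: the averaging argument below produces an active coordinate $i$ (one with $|U_i| \ge 2$) and a value $\ell \in U_i$ such that the slice-complement $B - B_{i,\ell}$, where $B_{i,\ell} = B[\{v \in V(B) : v_i = \ell\}]$, satisfies $\beta(B - B_{i,\ell}) \ge \Bu(B - B_{i,\ell})$; rearranging $B$'s subtree to split first at coordinate $i$, value $\ell$, makes $B - B_{i,\ell}$ and $B_{i,\ell}$ into two consecutive nodes, so two applications of the inductive hypothesis inside $B$ (to $\emptyset \to B - B_{i,\ell}$, then to $B - B_{i,\ell} \to B$) give $\kappa_\Delta(B) \le 2\mu$. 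The union sequence consisting of one for $G(a)$ realizing $\kappa_\Delta(G(a))$, then one realizing $\kappa_\Delta(B)$, then $G(a) \cup B$ with the edges between $G(a)$ and $B$ inserted one at a time, has every graph of $\Delta$-value at most $\max\bigl(\kappa_\Delta(G(a)), \kappa_\Delta(B), \Delta(G(a)) + \Delta(B), 2\bigr) \le 2\mu$ by the hypothesis $\kappa_\Delta(G(a)) \le 2\mu$ and observation (i) applied to $G(a)$ and to $B$. If instead $\beta(B) < \Bu(B)$, the averaging produces $i$ and $\ell$ for which the midpoint $G(c) := G(a) \cup (B - B_{i,\ell})$ satisfies $\beta(G(c)) > \Bu(G(c))$; after the same rearrangement, $B - B_{i,\ell}$ and $B_{i,\ell}$ are consecutive nodes past $G(a)$, so the inductive hypothesis applied to $G(a) \to G(c)$ and then to $G(c) \to G(b)$ closes the induction. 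In both cases $v(B - B_{i,\ell}), v(B_{i,\ell}) < v(B)$, so the recursion is legitimate, and the base case $v(B) = 1$ is immediate.

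The heart of the argument, and where I expect the real work, is the averaging. In \cref{tricky}, $B$ was a sub-hypercube whose two coordinate halves in any direction were isomorphic with a common per-edge survival probability, so an unweighted average over coordinates and halves sufficed; here $B = K_{c_1} \cart \dotsb \cart K_{c_d}$ can have unequal clique sizes, and a uniform choice of coordinate and slice makes an edge in a ``thick'' direction survive with a different probability than one in a ``thin'' direction. The fix is to pick the splitting coordinate $i$ to be each active coordinate $j$ with probability $|U_j| / s$, where $s = \sum_{j \text{ active}} |U_j|$, and then $\ell \in U_i$ uniformly at random. With these weights, and writing $m$ for the number of active coordinates, the probability that a fixed edge of $B$ lies in $B - B_{i,\ell}$ works out to $(s - m - 1)/s$ independently of the edge, and the probability that a fixed vertex of $B$ lies outside the removed slice to $(s - m)/s$ independently of the vertex; combining these with $\beta(G(b)) = \beta(G(a)) + \beta(G(a), B) + \beta(B)$ gives
\begin{align*}
\Ex\bigl[\beta(B - B_{i,\ell})\bigr] &= \frac{s - m - 1}{s}\,\beta(B), \\
\Ex\bigl[\beta\bigl(G(a) \cup (B - B_{i,\ell})\bigr)\bigr] &= \frac{m}{s}\,\beta(G(a)) + \Bigl(1 - \frac{m}{s}\Bigr)\beta(G(b)) - \frac{1}{s}\,\beta(B),
\end{align*}
and the same two identities with $\Bu$ in place of $\beta$, by the symmetry of $\Bu$. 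Since $s \ge 2m \ge m + 1$, the first identity together with $\beta(B) \ge \Bu(B)$ gives $\Ex[\beta(B - B_{i,\ell})] \ge \Ex[\Bu(B - B_{i,\ell})]$ in the first case; in the second case, the second identity together with the hypotheses $\beta(G(a)) \ge \Bu(G(a))$ and $\beta(G(b)) \ge \Bu(G(b))$ (whose coefficients $m/s$ and $1 - m/s$ are nonnegative) and the assumption $\beta(B) < \Bu(B)$ (whose coefficient $-1/s$ is negative) gives $\Ex[\beta(G(c))] > \Ex[\Bu(G(c))]$. So in either case a suitable fixed pair $(i, \ell)$ exists, and what remains is the routine bookkeeping of the two inductive applications and the relabelings of observation (iii).
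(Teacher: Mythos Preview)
Your proposal is correct and follows essentially the same route as the paper's proof. The one point worth noting is that what you describe as a ``fix'' (choosing coordinate $j$ with probability $|U_j|/s$ and then a value uniformly) is exactly the distribution the paper uses, phrased instead as ``choose a pair $(i,k)$ uniformly at random out of all pairs with $i$ active and $k\in U_i$''; these are the same distribution, and the paper's survival probabilities $p=1-(|\mathcal I|+1)/\sum_i|U_i|$ and $p_1=1-|\mathcal I|/\sum_i|U_i|$ coincide with your $(s-m-1)/s$ and $(s-m)/s$. The remaining differences (inducting on $v(B)$ rather than $\sum_i(|U_i|-1)$, and invoking an automorphism in observation~(i) where the paper simply builds a suitable query tree $T'$) are cosmetic.
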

\begin{proof}
	Let $N$ be the node of $T$ such that the leaves descended from $N$ are exactly $\ell_a, \dotsc, \ell_{b-1}$. Let $U_1 \times \dotsb \times U_d$ be the label of $N$. Let $B = G(b) - G(a) = \ham q d[U_1 \times \dotsb \times U_d] = \ham q d[\ell_a, \dotsc, \ell_{b-1}]$.
		
	The proof is by induction on $\sum_i (|U_i|-1)$, for all query trees $T$. (It follows from the definitions that $|U_i|\ge1$, with equality for all $i$ if and only if $N$ is a leaf.) In the inductive step we handle separately the cases where $\beta(B) \ge \Bu(B)$ and $\beta(B) < \Bu(B)$. The base case is a special case of the former because if $B$ is a single vertex then $\beta(B)$ and $\Bu(B)$ are both zero.
		
	\textbf{Case 1: $\beta(B) \ge \Bu(B)$.} If $B$ is a single vertex then $\kappa_\Delta(B) = 0 \le 2\mu$; we now obtain the same result in the case where $B$ is not a single vertex. Let $\mathcal I = \{i\in[d] \mid |U_i|\ge2\}$, and note that $\mathcal I$ is nonempty. For $i \in \mathcal I$ and $k \in U_i$ let $B(i,k) = B[v \in V(B) \mid v_i \neq k]$. Choose a pair $(\mathbf i, \mathbf k)$ uniformly at random out of all pairs $(i, k)$ such that $i \in \mathcal I$ and $k \in U_i$. Each edge in $B$ is also in $B(\mathbf i, \mathbf k)$ with the same probability $p = 1 - (|\mathcal I|+1)/\sum_{i\in\mathcal I}|U_i|$ (since adjacent vertices differ in a unique coordinate), so by linearity of expectation,
	\begin{equation*}
	\Ex[\beta(B(\mathbf i, \mathbf k))] = p \beta(B) \ge p \Bu(B) = \Ex[\Bu(B(\mathbf i, \mathbf k))].
	\end{equation*}
	Therefore $\beta(B(i,k)) \ge \Bu(B(i,k))$ for some fixed $i$ and $k$.
		
	Now our claim that $\kappa_\Delta(B) \le 2\mu$ follows from two applications of the inductive hypothesis. Let $T^\prime$ be any query tree in which the sequence of labels along the path from the root to the leftmost leaf includes $U_1 \times \dotsb \times U_d$ followed by $U_1 \times \dotsb \times (U_i - k) \times \dotsb \times U_d$. With respect to $T^\prime$, an application of the inductive hypothesis with $a^\prime = 0$ and $b^\prime = (1-1/|U_i|) \prod_j |U_j|$ reveals that $\kappa_\Delta(B(i,k)) \le 2\mu$, and then an application of the inductive hypothesis with $a^{\prime\prime} = (1-1/|U_i|) \prod_j |U_j|$ and $b^{\prime\prime} = \prod_j |U_j|$ ($= b-a$) reveals that $\kappa_\Delta(B) \le 2\mu$.
		
	The rest of the proof is essentially identical to the case $q = 2$. Let $S$ be an optimal (with respect to $\Delta$) union sequence for $G(a)$, followed by an optimal union sequence for $B$, followed by $G(a) \cup B, G(a)\cup B\cup e_1, \dotsc, G(a)\cup B\cup \{e_j\}$, where the $\{e_j\}$ are the edges between $G(a)$ and $B$ in $\ham q d$. (If $G(a)$ or $B$ lacks edges then omit certain graphs from this sequence.) Then,
	\begin{equation*}
	\max_{H\in S}\Delta(H) \le \max(\kappa_\Delta(G(a)), \kappa_\Delta(B), \Delta(G(a))+\Delta(B)).
	\end{equation*}
		
	We proceed to bound each of these three terms by $2\mu$, completing the proof. We have assumed that $\kappa_\Delta(G(a)) \le 2\mu$, and proved that $\kappa_\Delta(B) \le 2\mu$. We have also assumed that $\beta(G(a)) \ge \Bu(G(a))$, and since $\Delta$ and $\Du$ both evaluate to 1 on all vertices, it follows that $\Delta(G(a)) \le \Du(G(a)) \le \mu$ (with the last step following from the definition of $\mu$). Similarly, $\Delta(B) \le \Du(B) \le \mu$, and it follows that $\Delta(G(a)) + \Delta(B) \le 2\mu$.
		
	\textbf{Case 2: $\beta(B) < \Bu(B)$.} For $i \in \mathcal I$ and $k \in U_i$ let $H(i,k) = \ham q d[\ell_0, \dotsc, \ell_{a-1}, V(B(i,k))]$ (where $\mathcal I$ and $B(i,k)$ are defined as above). Choose a pair $(\mathbf i, \mathbf k)$ uniformly at random out of all pairs $(i, k)$ such that $i \in \mathcal I$ and $k \in U_i$. Then there exist $p_0 > p_1 > p_2 \ge 0$ (specifically, $p_0 = 1$, $p_1 = 1-|\mathcal I|/\sum_{i\in\mathcal I}|U_i|$, and $p_2 = 1 - (|\mathcal I|+1)/\sum_{i\in\mathcal I}|U_i|$) such that
	\begin{align*}
	\Ex[\beta(H(\mathbf i,\mathbf k))] &= p_0\beta(G(a)) + p_1\beta(G(a),B) + p_2\beta(B) \\
	&= (p_0-p_1)\beta(G(a)) + p_1\beta(G(b)) + (p_2-p_1)\beta(B) \\
	&> (p_0-p_1)\Bu(G(a)) + p_1\Bu(G(b)) + (p_2-p_1)\Bu(B) \\
	&= \Ex[\Bu(H(\mathbf i,\mathbf k))].
	\end{align*}
	Therefore $\beta(H(i,k)) > \Bu(H(i,k))$ for some fixed $i$ and $k$.
		
	Preparing to apply the inductive hypothesis, let $T^{\prime\prime}$ be any query tree structured and labeled exactly like $T$ on all ancestors of $\ell_j$ for all $j<a$, and on all ancestors of $N$, but now the left child of $N$ is labeled with $U_1 \times \dotsb \times (U_i - k) \times \dotsb \times U_d$. With respect to $T^{\prime\prime}$, an application of the inductive hypothesis with $a^\prime = a$ and $b^\prime = a + (1-1/|U_i|) \prod_j |U_j|$ reveals that $\kappa_\Delta(G(a + (1-1/|U_i|) \prod_j |U_j|)) \le 2\mu$, and a second application of the inductive hypothesis with $a^{\prime\prime} = a + (1-1/|U_i|) \prod_j |U_j|$ and $b^{\prime\prime} = b$ ($= a + \prod_j|U_j|$) reveals that $\kappa_\Delta(G(b)) \le 2\mu$.
\end{proof}

\begin{lem}\label{some-other-lemma}
	$\mu$ is $O(q^d/d)$.
\end{lem}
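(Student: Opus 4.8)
The plan is to reduce \cref{some-other-lemma} to a statement about edge boundaries and then prove that statement by a single induction. Since $\ham q d$ is $d(q-1)$-regular and $\Du$ is its uniform random walk, \cref{mc-eq} gives $\Du(G(a)) = e(G(a), \ham q d - G(a))/(d(q-1))$ for every query tree and every $a$, exactly as in the proof of \cref{mu}. So it suffices to show that the edge boundary $e(G(a), \ham q d - G(a))$ is $O(q^{d+1})$, uniformly over query trees and indices $a$.

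For the induction I would prove a more general statement. For a Cartesian product $P = K_{q_1} \cart \dotsb \cart K_{q_m}$ of complete graphs, define a query tree for $P$ exactly as for $\ham q d$, with coordinate $j$ ranging over $[q_j]$, and let $f(P)$ be the maximum of $e(G(a), P - G(a))$ over all query trees for $P$ and all $0 \le a \le v(P)$. The claim is $f(P) \le v(P)\cdot\max_j q_j$. Taking all $q_j=q$ gives $e(G(a), \ham q d - G(a)) \le q^d\cdot q = q^{d+1}$, hence $\mu = \max_a \Du(G(a)) \le q^{d+1}/(d(q-1)) = O(q^d/d)$, which is what we want.

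The claim is proved by induction on $\sum_j (q_j-1)$; the base case ($P$ a single vertex) is trivial. For the inductive step, let the root of a query tree for $P$ split coordinate $i$ at value $k$. Let $L$ be the subgraph of $P$ induced on $\{x : x_i \neq k\}$; it is isomorphic to $K_{q_1}\cart\dotsb\cart K_{q_i-1}\cart\dotsb\cart K_{q_m}$, and the left subtree is a query tree for it. Let $R$ be the subgraph induced on $\{x : x_i = k\}$, isomorphic to $K_{q_1}\cart\dotsb\cart K_{q_{i-1}}\cart K_{q_{i+1}}\cart\dotsb\cart K_{q_m}$, with the right subtree as its query tree. The leaf order of the whole tree lists the leaves of $L$ first and then those of $R$, so a prefix $G(a)$ is either contained in $V(L)$, or equals $V(L)$ together with a prefix $S'$ of $R$'s ordering. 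Two elementary identities drive the estimate: every vertex of $L$ has exactly one neighbor in $R$ (change coordinate $i$ to $k$), so $e(S, V(R)) = |S|$ for $S \subseteq V(L)$; and every vertex of $R$ has exactly $q_i-1$ neighbors in $L$, so $e(V(L), T) = (q_i-1)|T|$ for $T \subseteq V(R)$. In the first case $e(G(a), P - G(a)) = e(G(a), L - G(a)) + v(G(a)) \le f(L) + v(L)$, and in the second case $e(G(a), P - G(a)) = e(S', R - S') + (q_i-1)(v(R)-|S'|) \le f(R) + (q_i-1)v(R) = f(R) + v(L)$. Either way the boundary is at most $\max(f(L),f(R)) + v(L)$. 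Since $v(L) = v(P)(q_i-1)/q_i$ and $v(R) = v(P)/q_i$, and the inductive hypothesis gives $f(L) \le v(L)\max_j q_j$ and $f(R) \le v(R)\max_{j\neq i}q_j \le v(L)\max_j q_j$, we get $e(G(a), P - G(a)) \le v(L)(\max_j q_j + 1) = v(P)\tfrac{q_i-1}{q_i}(\max_j q_j + 1) \le v(P)\max_j q_j$, the last step because $q_i \le \max_j q_j$.

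The main work is the bookkeeping inside the inductive step: checking the two neighbor-counting identities, verifying that the two locations of $G(a)$ are exhaustive and give the stated bounds, and confirming that the closing inequality $\tfrac{q_i-1}{q_i}(\max_j q_j+1)\le\max_j q_j$ holds; everything else is routine. A more hands-on proof imitating the symmetry reductions in the proof of \cref{mu} is also possible, but the recursion above is cleaner, and the weaker multiplicative constant it yields is immaterial for \cref{main}.
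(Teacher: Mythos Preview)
Your proof is correct, and it takes a genuinely different route from the paper's. Both arguments start the same way: since $\ham q d$ is $d(q-1)$-regular, \cref{mc-eq} reduces the task to bounding the edge boundary $e(G(a),\ham q d - G(a))$ by $O(q^{d+1})$. From there the two proofs diverge. The paper fixes a query tree and prefix, locates the nearest common ancestor $N_0$ of the first and last included leaf, and classifies boundary edges into three types: edges leaving the box at $N_0$ (bounded via AM-GM by $q^{d+1}/e$), edges between the two children of $N_0$ (at most $q^d$), and edges within the right child (recursed on dimension). You instead strengthen the statement to heterogeneous products $K_{q_1}\cart\dotsb\cart K_{q_m}$ and prove $f(P)\le v(P)\max_j q_j$ by induction on $\sum_j(q_j-1)$, splitting at the \emph{root} of the query tree rather than at an internal ancestor. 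Your inductive step is tight enough (the final inequality $\tfrac{q_i-1}{q_i}(\max_j q_j+1)\le \max_j q_j$ is equivalent to $q_i\le \max_j q_j+1$, which always holds) that you recover the explicit bound $q^{d+1}$ without needing AM-GM or the ancestor-finding step. The paper's decomposition is perhaps more conceptual, but your approach is more elementary and self-contained; for the purposes of \cref{main} both yield the same asymptotic constant, and you correctly note that the constant is immaterial here.
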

\begin{proof}
	We use a cruder bound here than in the case $q=2$. Let $T$ be an arbitrary query tree and $a \in [q^d]$. Let $N_0$ be the nearest common ancestor of $\ell_0$ and $\ell_{a-1}$. If $N_0$ is a leaf then clearly $\Du(G(a))$ is $O(q^d/d)$, so assume otherwise. Let $N_L$ and $N_R$ be the left and right children of $N_0$, and note that $\ell_{a-1}$ is a descendant of $N_R$. By \cref{mc-eq}, since $\ham q d$ is $(q-1)d$-regular it suffices to prove that $e(G(a), \ham q d - G(a))$ is $O(q^{d+1})$. Suppose $N_0$ is labeled with $U_1 \times \dotsb \times U_d$ and $N_L$ is labeled with $U_1 \times \dotsb \times (U_i - k) \times \dotsb \times U_d$. Since each vertex in $G(a)$ is a descendant of $N_0$, all edges between $G(a)$ and $\ham q d - G(a)$ are in one of the following classes:
	\begin{enumerate}
		\item \emph{Edges (in $\ham q d$) between a leaf descended from $N_0$ and a leaf not descended from $N_0$.} Each leaf descended from $N_0$ is adjacent to $\sum_{j=1}^d (q-|U_j|)$ leaves not descended from $N_0$, so this amounts to $(dq - \sum_j |U_j|) \prod_j |U_j|$ edges in total. By the AM-GM inequality, this is at most
		\begin{equation*}
		\left(\frac{(dq-\sum_j |U_j|) + \sum_j |U_j|}{d+1}\right)^{d+1} = \left(\frac{dq}{d+1}\right)^{d+1} = q^{d+1}\left(1 - \frac1{d+1}\right)^{d+1} < q^{d+1}/e.
		\end{equation*}
		\item \emph{Edges (in $\ham q d$) between a leaf descended from $N_L$ and a leaf descended from $N_R$.} Each leaf descended from $N_L$ is adjacent to one leaf descended from $N_R$, so this amounts to at most $\prod_j |U_j| \le q^d$ edges in total.
		\item \emph{Edges (in $\ham q d$) between a leaf descended from $N_R$ that's in $G(a)$, and a leaf descended from $N_R$ that's in $\ham q d - G(a)$.} This is at most what the value of $\mu$ would be if $d$ were $d-1$ instead. (Eliminate coordinate $i$, and replace $U_j$ with $[q]$ for all $j\neq i$.)
	\end{enumerate}
	The total number of edges in all classes is therefore $O(q^{d+1} + q^d + \dotsb) = O(q^{d+1})$.
\end{proof}

Finally, it follows from \cref{some-lemma,some-other-lemma} that $\kcol\left(\ham q d\right) \le 2\mu \le O(q^d/d)$.

\begin{remark*}
	The above argument holds even if we relax the definition of threshold weightings to allow $\Delta$ to take on negative values (where all definitions in terms of threshold weightings are with respect to this revised definition).
\end{remark*}

\section{Properties of Threshold Weightings and Threshold Random Graphs}\label{ptw}

\begin{lem}\label{janson-app}
	If $A \subseteq U \subseteq G$ are fixed graphs, $\Delta \in \tcl(G)$, and $\Delta(A) < \Delta(H)$ for all $H \in (A, U]$, then conditional on $\mathcal A \in \colt A {\rg X \Delta n}$, there are at least $n^{\Delta(U) - \Delta(A)}(1-o(1))$ $U$-extensions of $\mathcal A$ a.a.s.
\end{lem}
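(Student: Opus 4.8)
The plan is a second-moment argument, which is the ``application of Janson's inequality'' that the label anticipates; since only an a.a.s.\ lower bound is needed, Chebyshev's inequality suffices. Write $\mathbf X \equiv \rg X \Delta n$ and $(\alpha,\beta) = \Delta$, condition throughout on the event that $\mathcal A$ is a copy of $A$ in $\mathbf X$ (``$\mathcal A \subseteq \mathbf X$''), and let $\mathbf N$ be the number of $U$-extensions of $\mathcal A$. The first thing I would extract from the hypothesis is that $A$ must be an \emph{induced} subgraph of $U$: since $U[V(A)] \in [A,U]$ and $\Delta(U[V(A)]) = \alpha(V(A)) - \beta(U[V(A)]) \le \alpha(V(A)) - \beta(A) = \Delta(A)$ (because $\beta \ge 0$ and $E(A) \subseteq E(U[V(A)])$), the assumption that $\Delta$ exceeds $\Delta(A)$ everywhere on $(A,U]$ forces $U[V(A)] = A$. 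Consequently every edge of $U$ with both endpoints in $V(A)$ already lies in $E(A)$, so the edges of a $U$-extension of $\mathcal A$ that are not edges of $\mathcal A$ each have an endpoint outside $V(A)$ and are therefore independent of the conditioning event.

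For the first moment, a $U$-extension of $\mathcal A$ is obtained by choosing an index $i_v \in [n^{\alpha(v)}]$ for each $v \in V(U)-V(A)$ — giving $n^{\alpha(U)-\alpha(A)}(1\pm o(1))$ choices — and each such choice succeeds precisely when the $e(U)-e(A)$ relevant edges are present, which has probability $n^{-(\beta(U)-\beta(A))}$. Hence $\mu := \Ex[\mathbf N \mid \mathcal A \subseteq \mathbf X] = n^{\Delta(U)-\Delta(A)}(1\pm o(1))$.

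For the second moment, group ordered pairs $(\mathcal U,\mathcal U')$ of $U$-extensions of $\mathcal A$ by the vertex set $W$ on which they agree, so $V(A) \subseteq W \subseteq V(U)$; their common edges are exactly those of $U[W]$, whence $P(\mathcal U \cup \mathcal U' \subseteq \mathbf X \mid \mathcal A \subseteq \mathbf X) = n^{-(2\beta(U)-\beta(U[W])-\beta(A))}$, while the number of such pairs is at most $n^{(\alpha(W)-\alpha(A)) + 2(\alpha(U)-\alpha(W))}$. Multiplying these and using $\alpha(W) = \alpha(U[W])$, the contribution of $W$ is at most $n^{2\Delta(U)-\Delta(A)-\Delta(U[W])}$, so
\begin{equation*}
\frac{\Ex[\mathbf N^2 \mid \mathcal A \subseteq \mathbf X]}{\mu^2} \le (1+o(1))\sum_{V(A) \subseteq W \subseteq V(U)} n^{\Delta(A)-\Delta(U[W])}.
\end{equation*}
The summand for $W = V(A)$ equals $1$ (here the inducedness of $A$ is used, to get $U[V(A)] = A$), and for every other $W$ one has $U[W] \in (A,U]$, so $\Delta(U[W]) > \Delta(A)$ and the summand is $n^{-\Omega(1)}$; as there are only finitely many (constant, depending on $G$) choices of $W$, the ratio is $1+o(1)$. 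Thus $\mathrm{Var}[\mathbf N \mid \mathcal A \subseteq \mathbf X] = o(\mu^2)$, and Chebyshev's inequality gives $\mathbf N \ge (1-o(1))\mu = n^{\Delta(U)-\Delta(A)}(1-o(1))$ a.a.s. (The case $U = A$ is trivial and is also covered, since then $\mathbf N = 1$ deterministically.)

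The only delicate part is the bookkeeping in the variance bound; everything hinges on the structural observation above that the hypothesis forces $A$ to be induced in $U$, which is exactly what makes the diagonal term $W = V(A)$ equal $1$ rather than a growing power of $n$. No use of Janson's inequality per se is required, though invoking it would replace the $o(1)$ from Chebyshev by an exponentially small lower-tail bound.
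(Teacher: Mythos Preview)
Your proof is correct and follows essentially the same second-moment route as the paper: both compute $\mu = n^{\Delta(U)-\Delta(A)}$ and bound the correlated-pair sum by $\mu^2 \sum_{H \in (A,U)} n^{\Delta(A)-\Delta(H)} = o(\mu^2)$. The only difference is that the paper feeds this into Janson's inequality to obtain an exponentially small lower-tail probability, whereas you use Chebyshev to get merely $o(1)$; since the lemma only claims a.a.s., your version suffices, and you correctly note this tradeoff yourself. Your explicit observation that the hypothesis forces $A = U[V(A)]$ is a nice clarification that the paper leaves implicit.
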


Li et al.~\cite{LRR17} stated without proof that a similar result can be obtained using Janson's Inequality~\cite{Jan90}:
\begin{fact}[Janson's Inequality]\label{janson}
	Let $\mathbf B_1, \ldots, \mathbf B_\ell$ be independent Bernoulli random variables, let $W_1, \ldots, W_k \subseteq [\ell]$, and for $i\in[k]$ let $\mathbf I_i = \prod_{j\in W_i} \mathbf B_j$. Also for $i,j\in[k], i\neq j$ let $i\sim j$ if $W_i \cap W_j \neq \emptyset$. Let $\mathbf S = \sum_i \mathbf I_i$ and $\mu = \Ex[\mathbf S]$. Then for all $0 \le \epsilon \le 1$,
	\begin{equation*}
		P(\mathbf S \le (1-\epsilon)\mu) \le \exp\left(-\frac{\epsilon^2}2 \cdot \frac{\mu^2}{\mu + \sum_{i \sim j}\Ex[\mathbf I_i\mathbf I_j]}\right).
	\end{equation*}
\end{fact}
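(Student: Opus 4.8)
This is the lower‑tail form of Janson's inequality~\cite{Jan90}; here is a sketch. The plan is a Chernoff‑type bound on the Laplace transform $M(t) := \Ex\!\left[e^{-t\mathbf S}\right]$, $t\ge0$. Since $e^{-t\mathbf S}$ is nonnegative, Markov's inequality gives
\begin{equation*}
	P\!\left(\mathbf S \le (1-\epsilon)\mu\right) = P\!\left(e^{-t\mathbf S} \ge e^{-(1-\epsilon)t\mu}\right) \le e^{(1-\epsilon)t\mu}M(t)
\end{equation*}
for every $t\ge 0$, so it suffices to upper‑bound $M(t)$ well and then choose $t$ optimally. This parallels the use of Markov's inequality in the proof of \cref{tail}; the new feature is that $e^{-t\mathbf S}$ is a \emph{decreasing} function of the $\mathbf B_j$, so positive correlations among the $\mathbf I_i$ now work in our favor.

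The heart of the argument is a differential inequality for $M$. Writing $\bar\Delta := \sum_{i\sim j}\Ex[\mathbf I_i\mathbf I_j]$, the claim is that $-M'(t) \ge e^{-t}\!\left(\mu M(t) - (1-e^{-t})\bar\Delta\right)$ for all $t\ge0$. To see this, note $-M'(t) = \Ex[\mathbf S e^{-t\mathbf S}] = \sum_i \Ex[\mathbf I_i e^{-t\mathbf S}]$. Fix $i$ and decompose $\mathbf S = \mathbf I_i + \sum_{j\sim i}\mathbf I_j + \mathbf S_i'$, where $\mathbf S_i'$ collects the $\mathbf I_j$ involving none of the variables in $W_i$, and hence is independent of $\mathbf I_i$. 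Using $\mathbf I_i e^{-t\mathbf I_i} = e^{-t}\mathbf I_i$ together with the convexity bound $e^{-tm} \ge 1 - (1-e^{-t})m$ valid for all integers $m\ge 0$ (the tangent‑line inequality for $x\mapsto x^m$ at $x=1$, evaluated at $x=e^{-t}\in[0,1]$), one obtains
\begin{equation*}
	\Ex[\mathbf I_i e^{-t\mathbf S}] \ge e^{-t}\!\left(\Ex[\mathbf I_i]\,\Ex\!\left[e^{-t\mathbf S_i'}\right] - (1-e^{-t})\!\sum_{j\sim i}\Ex[\mathbf I_i\mathbf I_j]\right) \ge e^{-t}\!\left(\Ex[\mathbf I_i]\,M(t) - (1-e^{-t})\!\sum_{j\sim i}\Ex[\mathbf I_i\mathbf I_j]\right),
\end{equation*}
where the first step also uses $e^{-t\mathbf S_i'}\le 1$ and the independence of $\mathbf I_i$ and $\mathbf S_i'$, and the second uses $\mathbf S_i'\le\mathbf S$. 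Summing over $i$ gives the claim. Changing variables to $u = 1-e^{-t}\in[0,1)$ and writing $N(u):=M(t)$, this becomes $N'(u)\le -\mu N(u) + u\bar\Delta$ with $N(0)=1$; comparing with the solution of the corresponding linear ODE (the function $e^{-\mu u}+\tfrac12 u^2\bar\Delta$ is a supersolution, since its derivative exceeds $-\mu(\,\cdot\,)+u\bar\Delta$ evaluated at it by $\tfrac12\mu u^2\bar\Delta\ge0$) yields the explicit bound $M(t)\le e^{-\mu(1-e^{-t})}+\tfrac12(1-e^{-t})^2\bar\Delta$.

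Finally, feed this into the Markov bound and optimize over $t$. When $\bar\Delta=0$ (e.g.\ the $\mathbf I_i$ mutually independent) the bound reads $M(t)\le e^{-\mu(1-e^{-t})}$, and the choice $t=-\log(1-\epsilon)$ gives $P(\mathbf S\le(1-\epsilon)\mu)\le\exp\!\left(-\mu\bigl(\epsilon+(1-\epsilon)\log(1-\epsilon)\bigr)\right)\le e^{-\epsilon^2\mu/2}$, using the elementary inequality $\epsilon+(1-\epsilon)\log(1-\epsilon)\ge\epsilon^2/2$ on $[0,1]$ — exactly the stated bound in this case. For general $\bar\Delta$ one carries the $\tfrac12(1-e^{-t})^2\bar\Delta$ correction through: to leading order the exponent becomes $-\epsilon\mu t+\tfrac{t^2}{2}(\mu+\bar\Delta)+\log(1+\tfrac12 t^2\bar\Delta e^{\mu t})$, which is minimized near $t\asymp\epsilon\mu/(\mu+\bar\Delta)$, but one must take $t$ somewhat smaller than the naïve $\approx\epsilon$ once $\mu$ is large (so that the correction term is not amplified by $e^{(1-\epsilon)t\mu}$), and check the regimes $\bar\Delta\le\mu$ and $\bar\Delta>\mu$ separately; doing so yields the claimed $\exp\!\left(-\tfrac{\epsilon^2}{2}\cdot\tfrac{\mu^2}{\mu+\bar\Delta}\right)$. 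I expect the main obstacle to be the correlation step that produces the differential inequality — cleanly isolating the genuinely independent part $\mathbf S_i'$ and bounding the contribution of the neighboring terms $\sum_{j\sim i}\mathbf I_j$ by the convexity estimate without losing the constant — together with the case bookkeeping in the optimization; once the differential inequality is in hand, everything else is routine calculus.
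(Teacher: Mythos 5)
First, note that the paper offers no proof of this statement: it is imported verbatim as a known \emph{Fact} with a citation to Janson~\cite{Jan90}, so there is no internal argument to compare yours against. Judged on its own, your sketch is correct up to and including the differential inequality (the identity $\mathbf I_i e^{-t\mathbf I_i}=e^{-t}\mathbf I_i$, Bernoulli's inequality for $x^m$, the independence of $\mathbf I_i$ and $\mathbf S_i'$, and the bounds $e^{-t\mathbf S_i'}\le1$ and $\Ex[e^{-t\mathbf S_i'}]\ge M(t)$ are all used legitimately), the ODE comparison is right, and the case $\bar\Delta=0$ goes through. The gap is in the last step, which you flag as ``routine calculus'': it is not, and in fact it cannot be completed from your bound on $M(t)$.

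The defect is that your loss term is \emph{additive} rather than \emph{multiplicative} in $M(t)$: bounding $\Ex[\mathbf I_i\mathbf I_j e^{-t\mathbf S_i'}]$ by $\Ex[\mathbf I_i\mathbf I_j]$ instead of by (roughly) $\Ex[\mathbf I_i\mathbf I_j]M(t)$ costs a factor that is exponentially large exactly when the conclusion has content, since $M(t)$ must be exponentially small for the Chernoff step to produce an exponentially small tail. Concretely, writing $u=1-e^{-t}$, your final bound is $e^{(1-\epsilon)t\mu}\left(e^{-\mu u}+\tfrac12u^2\bar\Delta\right)$, which is at least $\tfrac12u^2\bar\Delta$ because $e^{(1-\epsilon)t\mu}\ge1$; to get below the target $\exp\left(-\epsilon^2\mu^2/(2(\mu+\bar\Delta))\right)$ one is therefore forced to take $u\le\exp\left(-\epsilon^2\mu^2/(4(\mu+\bar\Delta))\right)/\sqrt{\bar\Delta}$, but then $t$ is exponentially small, the first factor satisfies $e^{(1-\epsilon)t\mu}e^{-\mu u}\ge e^{-\epsilon t\mu}=1-o(1)$, and the bound is vacuous. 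Taking $\bar\Delta=\mu\to\infty$ and $\epsilon=1/2$ shows that no choice of $t$ and no case split on $\bar\Delta\lessgtr\mu$ rescues this; even the exact solution of your ODE only improves the additive term to about $\bar\Delta u/\mu$, which has the same defect. The standard proof (Boppana--Spencer/Janson) repairs precisely this step with the Harris/FKG correlation inequality rather than Bernoulli's inequality: write $\mathbf S=\mathbf Y_i+\mathbf Z_i$ with $\mathbf Y_i=\mathbf I_i+\sum_{j\sim i}\mathbf I_j$, condition on $\mathbf I_i=1$ (still a product measure on the remaining $\mathbf B_j$), and use that $e^{-t\mathbf Y_i}$ and $e^{-t\mathbf Z_i}$ are both decreasing to get $\Ex[e^{-t\mathbf S}\mid\mathbf I_i=1]\ge\Ex[e^{-t\mathbf Y_i}\mid\mathbf I_i=1]\,\Ex[e^{-t\mathbf Z_i}]\ge\left(1-t-t\sum_{j\sim i}\Ex[\mathbf I_j\mid\mathbf I_i=1]\right)M(t)$. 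Summing over $i$ gives $-\frac{d}{dt}\log M(t)\ge\mu-t(\mu+\bar\Delta)$, hence $M(t)\le\exp\left(-t\mu+\tfrac{t^2}2(\mu+\bar\Delta)\right)$, and combining with Markov and setting $t=\epsilon\mu/(\mu+\bar\Delta)$ yields the stated bound with no case analysis. So the missing ingredient is a correlation inequality; the convexity estimate alone does not suffice.
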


\begin{proof}[Proof of \cref{janson-app}]
	Let $\mathcal U_1, \dotsc, \mathcal U_k$ be the possible $U$-extensions of $\mathcal A$, and let $\mathbf I_i = \Ind{\mathcal U_i \subseteq \mathbf X}$. Define $\mu$ as in \cref{janson}; clearly $\mu = n^{\Delta(U) - \Delta(A)}$, by reasoning similar to the proof of \cref{ex-sub-count}. If $i \sim j$ then the projection of $\mathcal U_i \cap \mathcal U_j$ onto $U$ must be some graph in $(A, U)$, so
	\begin{equation*}
		\sum_{i \sim j}\Ex[\mathbf I_i \mathbf I_j] \le \sum_{\mathclap{H \in (A, U)}} \mu n^{\Delta(U)-\Delta(H)} = \mu^2 \sum_{\mathclap{H \in (A, U)}} n^{\Delta(A)-\Delta(H)} = o(\mu^2).
	\end{equation*}
	Since $\mu$ is also $o(\mu^2)$, it follows that $\mu^2\big/\left(\mu + \sum_{i \sim j}\Ex[\mathbf I_i\mathbf I_j]\right) \ge \mu^2/o(\mu^2) = \omega(1)$, and the result follows from \cref{janson}.
\end{proof}

\begin{lem}\label{dimensions}
	For all $A\subseteq B\subseteq F\subseteq H\subseteq G$ and $\Delta \in \tcl(G)$,
	\begin{equation*}
	\dst_H(B)-\dst_H(A) \le \dst_F(B)-\dst_F(A).
	\end{equation*}
\end{lem}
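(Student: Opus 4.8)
\textbf{Proof proposal for \cref{dimensions}.}
The plan is to rewrite the claimed inequality in the symmetric form $\dst_H(B) + \dst_F(A) \le \dst_F(B) + \dst_H(A)$ and then exploit the modularity identity of \cref{Delta-sum}. So first I would pick a graph $K$ with $B \subseteq K \subseteq F$ and $\Delta(K) = \dst_F(B)$, and a graph $L$ with $A \subseteq L \subseteq H$ and $\Delta(L) = \dst_H(A)$; these exist since each $\dst_U(\cdot)$ is a minimum over a finite interval.

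The key observation is that $K \cup L$ is a valid competitor for $\dst_H(B)$ and $K \cap L$ is a valid competitor for $\dst_F(A)$. Indeed, $B \subseteq K \subseteq K \cup L$ and $K \cup L \subseteq H$ because $K \subseteq F \subseteq H$ and $L \subseteq H$; likewise $A \subseteq B \subseteq K$ together with $A \subseteq L$ gives $A \subseteq K \cap L$, while $K \cap L \subseteq K \subseteq F$. Hence $\dst_H(B) \le \Delta(K \cup L)$ and $\dst_F(A) \le \Delta(K \cap L)$.

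Adding these two bounds and applying \cref{Delta-sum} (with the pair $K, L$) yields
\begin{equation*}
\dst_H(B) + \dst_F(A) \le \Delta(K \cup L) + \Delta(K \cap L) = \Delta(K) + \Delta(L) = \dst_F(B) + \dst_H(A),
\end{equation*}
and rearranging gives the lemma. I do not expect any real obstacle here: the only thing to get right is choosing the minimizers $K$ and $L$ for the ``correct'' pair of quantities (namely $\dst_F(B)$ and $\dst_H(A)$, the two terms on the side we are bounding from below) so that $K \cup L$ and $K \cap L$ land in the intervals needed for the other two quantities; after that it is a one-line application of modularity. Note also that this argument never uses nonnegativity of $\Delta$ or that $\Delta(G) = 0$, so it would remain valid under the relaxed notion of threshold weighting mentioned earlier.
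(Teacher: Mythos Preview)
Your proposal is correct and is essentially identical to the paper's proof: the paper takes $K = \Gamma_F(B)$ and $L = \Gamma_H(A)$ (which by \cref{Gamma-lemma-1} are minimizers), verifies the same two inclusions $B \subseteq K\cup L \subseteq H$ and $A \subseteq K\cap L \subseteq F$, and then applies \cref{Delta-sum} exactly as you do. The only difference is that you use arbitrary minimizers rather than the canonical $\Gamma$-minimizers, which makes no difference to the argument.
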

\begin{proof}
	Since $B \subseteq \Gamma_F(B) \subseteq \Gamma_H(A) \cup \Gamma_F(B)$ it follows that $\dst_H(B) \le \Delta(\Gamma_H(A)\cup\Gamma_F(B))$, and since $A \subseteq \Gamma_H(A)$ and $A\subseteq B\subseteq \Gamma_F(B)$ it follows that $\dst_F(A) \le \Delta(\Gamma_H(A)\cap\Gamma_F(B))$. So by \cref{Delta-sum,Gamma-lemma-1},
	\begin{align*}
	\dst_H(B) + \dst_F(A) &\le \Delta(\Gamma_H(A) \cup \Gamma_F(B)) + \Delta(\Gamma_H(A) \cap \Gamma_F(B)) \\
	&= \Delta(\Gamma_H(A)) + \Delta(\Gamma_F(B)) \\
	&= \dst_H(A) + \dst_F(B).\qedhere
	\end{align*}
\end{proof}

\begin{lem}\label{dimensions-2}
	Let $\Delta \in \tcl(G)$ and assume $L \cap R \subseteq A \subseteq B \subseteq L \subseteq G$ and $L\cap R \subseteq C \subseteq R \subseteq G$. Then, $\dst_{L\cup R}(B\cup C)-\dst_{L\cup R}(A\cup C) = \dst_L(B)-\dst_L(A)$.
\end{lem}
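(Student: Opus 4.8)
The plan is to reduce the claimed identity to a ``splitting'' statement: $\dst_{L\cup R}$ of anything of the form $F\cup C$, with $L\cap R\subseteq F\subseteq L$, decouples into an $L$-part and an $R$-part. Precisely, I would first prove that for every graph $F$ with $L\cap R\subseteq F\subseteq L$,
\[
\dst_{L\cup R}(F\cup C)=\dst_L(F)+\dst_R(C)-\Delta(L\cap R).
\]
Applying this with $F=B$ and with $F=A$ (both legal since $L\cap R\subseteq A\subseteq B\subseteq L$) and subtracting, the common terms $\dst_R(C)-\Delta(L\cap R)$ cancel, leaving $\dst_{L\cup R}(B\cup C)-\dst_{L\cup R}(A\cup C)=\dst_L(B)-\dst_L(A)$, as desired.

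To prove the splitting claim I would use the modularity of $\Delta$ (\cref{Delta-sum}) once for each inequality. For ``$\le$'': choose $H_L\in[F,L]$ with $\Delta(H_L)=\dst_L(F)$ and $H_R\in[C,R]$ with $\Delta(H_R)=\dst_R(C)$, and set $H=H_L\cup H_R$, which lies in $[F\cup C,\,L\cup R]$. Since $L\cap R\subseteq F\subseteq H_L$ and $L\cap R\subseteq C\subseteq H_R$ while $H_L\cap H_R\subseteq L\cap R$, we get $H_L\cap H_R=L\cap R$, so \cref{Delta-sum} gives $\Delta(H)=\Delta(H_L)+\Delta(H_R)-\Delta(L\cap R)$, hence $\dst_{L\cup R}(F\cup C)\le\dst_L(F)+\dst_R(C)-\Delta(L\cap R)$. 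For ``$\ge$'': take an arbitrary $H\in[F\cup C,\,L\cup R]$; then $F\subseteq H\cap L\subseteq L$ and $C\subseteq H\cap R\subseteq R$, so $\Delta(H\cap L)\ge\dst_L(F)$ and $\Delta(H\cap R)\ge\dst_R(C)$, while $(H\cap L)\cup(H\cap R)=H$ (because $H\subseteq L\cup R$) and $(H\cap L)\cap(H\cap R)=H\cap L\cap R=L\cap R$ (because $L\cap R\subseteq F\cup C\subseteq H$). Applying \cref{Delta-sum} to $H\cap L$ and $H\cap R$ yields $\Delta(H)=\Delta(H\cap L)+\Delta(H\cap R)-\Delta(L\cap R)\ge\dst_L(F)+\dst_R(C)-\Delta(L\cap R)$, and minimizing over $H$ finishes the claim.

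I do not expect a real obstacle: the argument is driven by the single structural fact that, thanks to $L\cap R\subseteq A$ and $L\cap R\subseteq C$, the map $H\mapsto(H\cap L,H\cap R)$ is a bijection from $[F\cup C,\,L\cup R]$ onto $[F,L]\times[C,R]$ (with inverse $(H_L,H_R)\mapsto H_L\cup H_R$) under which $\Delta$ becomes $\Delta(H_L)+\Delta(H_R)-\Delta(L\cap R)$. The only point needing attention is the routine check that each inclusion used above ($F\subseteq H\cap L$, $H\cap L\subseteq L$, $C\subseteq H\cap R$, $H\cap R\subseteq R$, and $H\cap L\cap R=L\cap R$) follows from the stated containments, which is immediate in every case.
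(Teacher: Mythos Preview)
Your proposal is correct and follows essentially the same approach as the paper: both establish the splitting identity $\dst_{L\cup R}(F\cup C)=\dst_L(F)+\dst_R(C)-\Delta(L\cap R)$ for $F\in\{A,B\}$ via \cref{Delta-sum} and the observation that the relevant intersections equal $L\cap R$, then subtract. Your version is slightly more explicit in separating the two inequalities and spelling out the bijection $H\leftrightarrow(H\cap L,H\cap R)$, whereas the paper compresses this into the single line $\dst_{L\cup R}(A\cup C)=\min_{F,H}\Delta(F\cup H)$.
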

\begin{proof}
	For all $F\in[A,L]$ and $H\in[C,R]$,
	\begin{align*}
	F\cap H &\subseteq L\cap R &&\text{($F\subseteq L$ and $H\subseteq R$)}\\
	&\subseteq A\cap C &&\text{(by assumption)}\\
	&\subseteq F\cap H, &&\text{($A\subseteq F$ and $C\subseteq H$)},
	\end{align*}
	so by \cref{Delta-sum},
	\begin{align*}
	\dst_{L\cup R}(A\cup C) &= \min_{\mathclap{\substack{A\subseteq F\subseteq L \\ C\subseteq H\subseteq R}}}\Delta(F\cup H)
	= \min_{\mathclap{\substack{A\subseteq F\subseteq L \\ C\subseteq H\subseteq R}}}(\Delta(F)+\Delta(H)-\Delta(F\cap H)) \\
	&= \min_{\mathclap{A\subseteq F\subseteq L}}\Delta(F) + \min_{\mathclap{C\subseteq H\subseteq R}}\Delta(H)-\Delta(L\cap R)
	= \dst_L(A) + \dst_R(C)-\Delta(L\cap R).
	\end{align*}
	The same reasoning applies with $B$ in place of $A$, so
	\begin{equation*}
	\dst_{L\cup R}(A\cup C) - \dst_L(A) = \dst_R(C) - \Delta(L\cap R) = \dst_{L\cup R}(B\cup C) - \dst_L(B). \qedhere
	\end{equation*}
\end{proof}

\printbibliography[heading=bibintoc]

\end{document}